\theoremstyle{definition}
\newtheorem{definition}{Definition}
\newtheorem{example}[definition]{Example}
\theoremstyle{plain}
\newtheorem{theorem}{Theorem}
\newtheorem{proposition}[definition]{Proposition}
\newtheorem{lemma}[definition]{Lemma}
\newtheorem{remark}[definition]{Remark}
\newtheorem{corollary}[definition]{Corollary}
\title{Skew and linearized Reed-Solomon codes and maximum sum rank distance codes over any division ring}
\author[1,2]{Umberto Mart{\'i}nez-Pe\~{n}as \thanks{umberto@math.aau.dk}}
\affil[1]{Dept.\ of Mathematical Sciences, Aalborg University, Denmark}
\affil[2]{Dept.\ of Electrical \& Computer Engineering,
University of Toronto, Canada}
\date{}
\begin{document}

\maketitle

\begin{abstract}
Reed-Solomon codes and Gabidulin codes have maximum Hamming distance and maximum rank distance, respectively. A general construction using skew polynomials, called skew Reed-Solomon codes, has already been introduced in the literature. In this work, we introduce a linearized version of such codes, called linearized Reed-Solomon codes. We prove that they have maximum sum-rank distance. Such distance is of interest in multishot network coding or in singleshot multi-network coding. To prove our result, we introduce new metrics defined by skew polynomials, which we call skew metrics, we prove that skew Reed-Solomon codes have maximum skew distance, and then we translate this scenario to linearized Reed-Solomon codes and the sum-rank metric. The theories of Reed-Solomon codes and Gabidulin codes are particular cases of our theory, and the sum-rank metric extends both the Hamming and rank metrics. We develop our theory over any division ring (commutative or non-commutative field). We also consider non-zero derivations, which give new maximum rank distance codes over infinite fields not considered before.

\textbf{Keywords:} Gabidulin codes, Hamming metric, linearized polynomials, rank metric, Reed-Solomon codes, skew polynomials, sum-rank metric.

\textbf{MSC:} 12E10, 16S36, 94B60.
\end{abstract}

\section{Introduction} \label{sec intro}

The Hamming metric has always played a central role in error correction. \textit{Reed-Solomon codes} \cite{reed-solomon} were the first general linear codes to achieve maximum Hamming distance. A similar construction called \textit{Gabidulin codes} was introduced in \cite{gabidulin, new-construction} using linearized polynomials over finite fields \cite[Chapter 3]{lidl}. Such codes have maximum rank distance, which have made them gain attention in connection with linear network coding \cite{on-metrics, silva-universal}. 

Both types of codes are constructed by using skew polynomials in a way. Skew polynomial rings are the most general polynomial rings where multiplication is additive on degrees, and were introduced in \cite{ore}. The theory of \textit{evaluation} and \textit{interpolation} for such skew polynomials was developed in \cite{lam, algebraic-conjugacy, lam-leroy}. A general construction of codes based on evaluations of skew polynomials, called \textit{skew Reed-Solomon codes}, was introduced in \cite{skew-evaluation1}, and further studied in \cite{skew-evaluation2}. It is proven in \cite{skew-evaluation1} that all such codes are maximum Hamming distance (MDS) codes, and they are maximum rank distance (MRD) when defined on one conjugacy class \cite{algebraic-conjugacy}. A particular case of skew Reed-Solomon codes with MRD components is given in \cite{skew-evaluation2} for larger lengths but low dimensions.

The connection between skew Reed-Solomon codes on one conjugacy class and the \textit{linearized} structure of Gabidulin codes was given in \cite[Section 4]{skew-evaluation1} and in \cite{skew-evaluation2}. However, the \textit{linearized} structure of skew Reed-Solomon codes is unknown in general. In this work, we introduce a new family of codes, called \textit{linearized Reed-Solomon codes}, and show that they indeed are the linearized version of skew Reed-Solomon codes in all cases.

Next we show that all linearized Reed-Solomon codes have maximum \textit{sum-rank distance}. This distance has been introduced in \cite{multishot} in the context of convolutional codes for multishot network coding. Optimal constructions for modified sum-rank metrics or for the sum-rank metric of convolutional codes have been given in \cite{mahmood-convolutional, mrd-convolutional, wachter, wachter-convolutional}. 

To the best of our knowledge, the introduced linearized Reed-Solomon codes are the first general linear \textit{block codes} with maximum sum-rank distance, except for the (extreme) particular cases of Reed-Solomon and Gabidulin codes. The use of maximum sum-rank distance block codes is of interest in multishot network coding when the number of shots is low and known beforehand. It can also be of interest in singleshot multi-network coding, where errors and erasures may spread over several networks that could even have different numbers of outgoing links from the source. Moreover, the sum-rank metric is a hybrid metric that gives a common theoretical framework for both the Hamming and rank metrics\footnote{Interestingly, the term ``sum'' carries the Hamming part of the metric, and the term ``rank'' carries the rank part. We do not know if this popular terminology was intentional in this sense.}.

We prove that linearized Reed-Solomon codes have maximum sum-rank distance as follows. We introduce a new family of metrics defined by skew polynomials, called \textit{skew metrics}, and show that all skew Reed-Solomon codes have maximum skew distance. Based on some results in \cite{lam, algebraic-conjugacy, lam-leroy, hilbert90, leroy-pol}, we then prove that sum-rank metrics are the linearized versions of skew metrics, and linearized Reed-Solomon codes are the linearized versions of skew Reed-Solomon codes, which concludes our proof. A shorter proof is possible, although based essentially on the same algebraic machinery \footnote{One can define the right notion of \textit{erasures} characterizing the sum-rank metric, and then use the bound by degrees on the \textit{sum-dimensions} of zero sets of linear operator polynomials \cite[Theorem 2.1]{leroy-noncommutative}.}. However, such proof obscures the connection between skew and linearized Reed-Solomon codes, which has not been established yet.

We develop our theory over any division ring. Rank-metric codes over the complex field have been studied in \cite{augot} with a view towards space-time coding \cite{space-time}. Codes based on skew polynomials over transcendental extensions of finite fields have been studied in \cite{augot-function}, and a connection with cyclic convolutional codes has been established in \cite{torrecillas}, which includes some skew Reed-Solomon codes. A general study of Gabidulin and rank-metric codes over any field is given in \cite{augot-extended}. On the other hand, developing the theory in the general case eliminates some anomalies that are characteristic of finite fields. The reader only interested in fields and not in division rings may just omit the words ``left'' and ``right'' when considering the multiplicative structure of fields and their vector spaces. 

The organization is as follows: In Section \ref{sec skew metrics}, we recall the main definitions and results concerning skew polynomial rings \cite{lam, algebraic-conjugacy, lam-leroy, ore}. We then introduce the new family of skew metrics (Definition \ref{def skew metrics}) and we show that all skew Reed-Solomon codes are maximum skew distance codes (Theorem \ref{th maximum skew distance codes}). In Section \ref{sec linearizing skew metrics}, we recall the concept of linear operator polynomials and its connection with skew polynomial evaluation \cite{leroy-pol}. We then introduce linearized Reed-Solomon codes (Definition \ref{def linearized RS codes}). Next we show that the linearized version of skew metrics and skew Reed-Solomon codes are sum-rank metrics and linearized Reed-Solomon codes (Theorems \ref{th big commutative diagram} and \ref{th both weights coincide} and Proposition \ref{prop indeed linearized version}), respectively. We deduce then that the latter codes are maximum sum-rank distance (Theorem \ref{th max sum rank distance}). We conclude in Section \ref{sec particular cases} with explicit descriptions of the studied objects for (commutative) fields. These include finite fields and a field where linearized Reed-Solomon codes can only be constructed by using derivations instead of endomorphisms.

\section{Skew metrics and skew Reed-Solomon codes} \label{sec skew metrics}

In this section, we define a new family of skew polynomial weights and their corresponding metrics, which we call \textit{skew metrics}. We then show a tight connection between the new metrics and the Hamming metric. We conclude by showing that skew Reed-Solomon codes are maximum skew distance codes.

\subsection{Skew polynomials over division rings} \label{subsec skew polynomials division rings}

In this subsection, we will collect some definitions and results from the literature concerning skew polynomial rings \cite{ore} and their evaluation maps over division rings \cite{lam, algebraic-conjugacy, lam-leroy}. 

Fix a division ring $ \mathbb{F} $ from now on and denote by $ \mathbb{N} $ the set of natural numbers including $ 0 $. Let $ \mathcal{R} $ be the left vector space over $ \mathbb{F} $ with basis $ \{ x^i \mid i \in \mathbb{N} \} $, where we denote $ 1 = x^0 $. Define then the degree of a non-zero element $ F = \sum_{i \in \mathbb{N}} F_i x^i \in \mathcal{R} $, where $ F_i \in \mathbb{F} $ for all $ i \in \mathbb{N} $, as the maximum $ i $ such that $ F_i \neq 0 $, and denote it by $ \deg(F) $. We also define $ \deg(0) = \infty $.

It is shown in \cite{ore} that a product in $ \mathcal{R} $ turns it into a (non-commutative) ring with multiplicative identity $ 1 $, where $ x^i x^j = x^{i+j} $, for all $ i,j \in \mathbb{N} $, and $ \deg(FG) = \deg(F) + \deg(G) $ for all $ F,G \in \mathcal{R} $, if, and only if, there exist $ \sigma, \delta : \mathbb{F} \longrightarrow \mathbb{F} $ such that
\begin{equation}
xa = \sigma(a) x + \delta(a),
\label{eq product over constants and variables}
\end{equation}
for all $ a \in \mathbb{F} $, where $ \sigma : \mathbb{F} \longrightarrow \mathbb{F} $ is a (ring) endomorphism and $ \delta : \mathbb{F} \longrightarrow \mathbb{F} $ is a $ \sigma $-derivation: That is, $ \delta $ is additive and for all $ a,b \in \mathbb{F} $, it holds that
$$ \delta(ab) = \sigma(a)\delta(b) + \delta(a)b. $$

For each such pair $ (\sigma,\delta) $, we use the notation $ \mathcal{R} = \mathbb{F}[x; \sigma, \delta] $ when considering in $ \mathcal{R} $ the product given by (\ref{eq product over constants and variables}), and we call $ \mathbb{F}[x; \sigma, \delta] $ the \textit{skew polynomial ring} over $ \mathbb{F} $ with endomorphism $ \sigma $ and derivation $ \delta $. The conventional polynomial ring $ \mathbb{F}[x] $ is obtained by choosing $ \sigma = {\rm Id} $ and $ \delta = 0 $.

As shown in \cite{ore}, the rings $ \mathbb{F}[x; \sigma, \delta] $ are left and right Euclidean domains. Hence, we may give a natural definition of evaluation of skew polynomials by forcing a ``Remainder Theorem''. This is the approach in \cite{lam, lam-leroy}:

\begin{definition} [\textbf{Evaluation \cite{lam, lam-leroy}}]
Given $ F \in \mathbb{F}[x; \sigma, \delta] $, we define its evaluation over a point $ a \in \mathbb{F} $ as the unique element $ F(a) \in \mathbb{F} $ such that there exists $ G \in \mathbb{F}[x; \sigma, \delta] $ with
$$ F = G (x-a) + F(a). $$
Given a subset $ \Omega \subseteq \mathbb{F} $, we denote by $ \mathbb{F}^\Omega $ the family of (arbitrary) functions $ f : \Omega \longrightarrow \mathbb{F} $. We then define the evaluation map over $ \Omega $ as the left linear map
\begin{equation}
E^{\sigma,\delta}_\Omega : \mathbb{F}[x; \sigma, \delta] \longrightarrow \mathbb{F}^\Omega,
\label{def evaluation map}
\end{equation}
where $ f = E^{\sigma,\delta}_\Omega (F) \in \mathbb{F}^\Omega $ is given by $ f(a) = F(a) $, for all $ a \in \Omega $ and for $ F \in \mathbb{F}[x; \sigma, \delta] $. We will use the notation $ E_\Omega $ whenever $ \sigma $ and $ \delta $ are understood from the context.
\end{definition}

The structure of sets of zeros of skew polynomials was extensively studied in \cite{lam, algebraic-conjugacy, lam-leroy} and will be crucial for our purposes. In particular, the main result is that of the existence and uniqueness of Lagrange interpolating polynomials (Lemma \ref{lemma lagrange interpolation}). 

\begin{definition} [\textbf{Zeros of skew polynomials}]
Given a set $ A \subseteq \mathbb{F}[x;\sigma,\delta] $, we define its zero set as
$$ Z(A) = \{ a \in \mathbb{F} \mid F(a) = 0, \forall F \in A \}. $$
Given a subset $ \Omega \subseteq \mathbb{F} $, we define its associated ideal as
$$ I(\Omega) = \{ F \in \mathbb{F}[x;\sigma, \delta] \mid F(a) = 0, \forall a \in \Omega \}. $$
\end{definition}

Observe that $ I(\Omega) $ indeed is a left ideal in $ \mathbb{F}[x;\sigma,\delta] $, for any subset $ \Omega \subseteq \mathbb{F} $. Since $ \mathbb{F}[x;\sigma,\delta] $ is a right Euclidean domain, there exists a unique monic skew polynomial $ F_\Omega \in I(\Omega) $ of minimal degree among those in $ I(\Omega) $, which in turn generates $ I(\Omega) $ as left ideal. We will call such skew polynomial the \textit{minimal skew polynomial} of $ \Omega $. This motivates the concepts of \textit{P-closed sets}, \textit{P-independence} and \textit{P-bases}, which we take from \cite[Section 4]{algebraic-conjugacy} (see also \cite{lam}):

\begin{definition} [\textbf{P-bases \cite{lam, algebraic-conjugacy}}]
Given a subset $ \Omega \subseteq \mathbb{F} $, we define its P-closure as $ \overline{\Omega} = Z(I(\Omega)) = Z(F_\Omega) $, and we say that it is P-closed if $ \overline{\Omega} = \Omega $.

Given a P-closed set $ \Omega \subseteq \mathbb{F} $, we say that $ \mathcal{G} \subseteq \Omega $ generates it if $ \overline{\mathcal{G}} = \Omega $ (equivalently, $ F_\mathcal{G} = F_\Omega $), and it is called a set of P-generators for $ \Omega $. We say that $ \Omega $ is finitely generated if it has a finite set of P-generators.

We say that $ a \in \mathbb{F} $ is P-independent from $ \Omega \subseteq \mathbb{F} $ if it does not belong to $ \overline{\Omega} $ (equivalently, $ F_{\Omega \cup \{ a \}} \neq F_\Omega $). A set $ \Omega \subseteq \mathbb{F} $ is called P-independent if every $ a \in \Omega $ is P-independent from $ \Omega \setminus \{ a \} $. 

Given a P-closed set $ \Omega \subseteq \mathbb{F} $, we say that a subset $ \mathcal{B} \subseteq \Omega $ is a P-basis of $ \Omega $ if it is P-independent and a set of P-generators of $ \Omega $.
\end{definition}

The following results are given in \cite{lam, algebraic-conjugacy}:

\begin{lemma}[\textbf{\cite{lam}}]
Given a finite set $ \Omega \subseteq \mathbb{F} $, it holds that $ \deg(F_\Omega) \leq \# \Omega $. Furthermore, $ \Omega $ is P-independent if, and only if, $ \# \Omega = \deg(F_\Omega) $.
\end{lemma}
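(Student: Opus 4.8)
The plan is to bound $\deg(F_\Omega)$ by producing an explicit nonzero element of $I(\Omega)$ of controlled degree, and then to analyse how $\deg(F_\Omega)$ grows when one enlarges $\Omega$ by a P-independent point. Write $n = \#\Omega$. For the first inequality I would restrict the left linear evaluation map $E_\Omega$ to the left subspace $\mathcal{R}_{\le n} = \{F \in \mathbb{F}[x;\sigma,\delta] \mid \deg(F) \le n\}$, which has left dimension $n+1$, and observe that its codomain $\mathbb{F}^\Omega$ has left dimension $n$. Since rank--nullity holds for finite-dimensional left vector spaces over a division ring, the kernel of this restriction is nonzero; any nonzero element of it is a nonzero member of $I(\Omega) = \ker(E_\Omega)$ of degree at most $n$. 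Because $F_\Omega$ has minimal degree among the nonzero elements of $I(\Omega)$, this yields $\deg(F_\Omega) \le n = \#\Omega$.

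The heart of the ``furthermore'' is the following degree-jump claim: if $a \notin \overline{\Omega}$, then $\deg(F_{\Omega \cup \{a\}}) = \deg(F_\Omega) + 1$. For the lower bound, note $I(\Omega \cup \{a\}) \subseteq I(\Omega) = \mathbb{F}[x;\sigma,\delta]\, F_\Omega$, so $F_\Omega$ right-divides $F_{\Omega \cup \{a\}}$; the quotient cannot be $1$, since $a \notin \overline{\Omega} = Z(F_\Omega)$ forces $F_\Omega(a) \ne 0$, hence $F_\Omega \notin I(\Omega \cup \{a\})$ and $F_{\Omega \cup \{a\}} \neq F_\Omega$. For the matching upper bound I would exhibit a monic vanishing polynomial of degree $\deg(F_\Omega)+1$ of the form $(x-b)F_\Omega$. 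Writing $c = F_\Omega(a) \ne 0$ and using $xc = \sigma(c)x + \delta(c)$ together with the remainder theorem, a short computation gives $((x-b)F_\Omega)(a) = \sigma(c)a + \delta(c) - bc$, and since $c$ is invertible one solves $b = (\sigma(c)a + \delta(c))c^{-1}$ to make this vanish. As $(x-b)F_\Omega$ is a left multiple of $F_\Omega$, it still vanishes on $\Omega$, so it lies in $I(\Omega \cup \{a\})$ and bounds $\deg(F_{\Omega \cup \{a\}})$ from above.

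With these two ingredients the equivalence follows. For the implication $\#\Omega = \deg(F_\Omega) \Rightarrow \Omega$ P-independent, I argue by contraposition: if some $a \in \Omega$ satisfied $a \in \overline{\Omega \setminus \{a\}}$, then $F_{\Omega \setminus \{a\}}$ would vanish on all of $\Omega$, whence $F_\Omega$ right-divides it and $\deg(F_\Omega) \le \deg(F_{\Omega \setminus \{a\}}) \le n-1$ by the first part, contradicting $\deg(F_\Omega) = n$. Conversely, assume $\Omega = \{a_1, \ldots, a_n\}$ is P-independent. Using that P-closure is monotone (both $I$ and $Z$ reverse inclusions), each relation $a_{k+1} \notin \overline{\Omega \setminus \{a_{k+1}\}}$ gives $a_{k+1} \notin \overline{\{a_1, \ldots, a_k\}}$; applying the degree-jump claim along the chain $\emptyset \subseteq \{a_1\} \subseteq \cdots \subseteq \Omega$, starting from $F_\emptyset = 1$ of degree $0$, yields $\deg(F_\Omega) = n$.

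The main obstacle is the upper bound in the degree-jump claim: unlike the commutative case, one cannot simply ``divide out a root'', and the construction rests on the noncommutative remainder computation for $(x-b)F_\Omega$ and on the invertibility of $c = F_\Omega(a)$, which is exactly where the hypothesis $a \notin \overline{\Omega}$ enters. The scalar $b = (\sigma(c)a+\delta(c))c^{-1}$ is the $(\sigma,\delta)$-conjugate of $a$ by $c$, so this step quietly recovers the conjugacy structure underlying the later results. Everything else---the dimension count, the monotonicity of P-closure, and the divisibility bookkeeping---is routine.
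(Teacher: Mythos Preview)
The paper does not supply a proof of this lemma; it is quoted from \cite{lam} and used as a black box. Your argument is correct and is in fact the standard one in that reference: the dimension count gives the bound $\deg(F_\Omega)\le\#\Omega$, and the ``degree-jump'' step---constructing $(x-b)F_\Omega$ with $b=(\sigma(c)a+\delta(c))c^{-1}=\mathcal{D}_a(c)c^{-1}$---is exactly the product-rule computation underlying \cite[Theorem~2.7]{lam-leroy} and the induction in \cite{lam}. Your remark that $b$ is the $(\sigma,\delta)$-conjugate of $a$ by $c$ is the right observation and matches how conjugacy enters later in the paper. Nothing to fix.
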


\begin{corollary} [\textbf{\cite{lam, algebraic-conjugacy}}] \label{cor charact P-bases}
Given a finitely generated P-closed set $ \Omega \subseteq \mathbb{F} $ and a finite subset $ \mathcal{B} \subseteq \Omega $, the following are equivalent:
\begin{enumerate}
\item
$ \mathcal{B} $ is a P-basis of $ \Omega $.
\item
$ \mathcal{B} $ is a maximal P-independent subset of $ \Omega $.
\item
$ \mathcal{B} $ is a minimal set of P-generators of $ \Omega $.
\end{enumerate}
In particular, $ \Omega $ admits at least one finite P-basis.
\end{corollary}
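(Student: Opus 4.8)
The plan is to treat P-closure as a finitary closure operator and to use the degree of the minimal skew polynomial as a rank function. First I would record three elementary facts. (i) The map $ \Omega \mapsto \overline{\Omega} $ is \emph{extensive} ($ \Omega \subseteq \overline{\Omega} $, since $ F_\Omega $ vanishes on $ \Omega $), \emph{monotone} ($ \Omega_1 \subseteq \Omega_2 $ gives $ I(\Omega_2) \subseteq I(\Omega_1) $, hence $ \overline{\Omega_1} \subseteq \overline{\Omega_2} $) and \emph{idempotent}. (ii) For $ \Omega_1 \subseteq \Omega_2 $, the inclusion $ I(\Omega_2) \subseteq I(\Omega_1) = \mathbb{F}[x;\sigma,\delta] F_{\Omega_1} $ shows that $ F_{\Omega_1} $ is a right divisor of $ F_{\Omega_2} $; since $ \deg $ is additive under products, $ \deg(F_{\Omega_1}) \leq \deg(F_{\Omega_2}) $, and equality forces $ F_{\Omega_1} = F_{\Omega_2} $ (both are monic), hence $ \overline{\Omega_1} = \overline{\Omega_2} $. (iii) The preceding lemma gives, for finite $ \Omega $, both $ \deg(F_\Omega) \leq \# \Omega $ and the equivalence ``$ \Omega $ is P-independent $ \iff \deg(F_\Omega) = \# \Omega $''.

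The core step is an \emph{augmentation} property: if $ \mathcal{B} $ is finite and P-independent and $ a \in \Omega \setminus \overline{\mathcal{B}} $, then $ \mathcal{B} \cup \{ a \} $ is P-independent. I would prove this by degree counting. Since $ a \notin \overline{\mathcal{B}} \supseteq \mathcal{B} $, we have $ \#(\mathcal{B} \cup \{ a \}) = \# \mathcal{B} + 1 $. By monotonicity $ \deg(F_{\mathcal{B} \cup \{ a \}}) \geq \deg(F_\mathcal{B}) = \# \mathcal{B} $, while fact (iii) gives $ \deg(F_{\mathcal{B} \cup \{ a \}}) \leq \# \mathcal{B} + 1 $. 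If the degree were $ \# \mathcal{B} $, fact (ii) would force $ \overline{\mathcal{B} \cup \{ a \}} = \overline{\mathcal{B}} $, contradicting $ a \notin \overline{\mathcal{B}} $; hence the degree equals $ \#(\mathcal{B} \cup \{ a \}) $ and the set is P-independent.

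With these tools the equivalences are short. For (1)$ \Rightarrow $(2): a P-basis generates $ \Omega $, so every $ a \in \Omega $ lies in $ \overline{\mathcal{B}} $ and cannot be adjoined while preserving P-independence, i.e.\ $ \mathcal{B} $ is maximal. For (2)$ \Rightarrow $(1): if a maximal P-independent $ \mathcal{B} $ did not generate, some $ a \in \Omega \setminus \overline{\mathcal{B}} $ would exist, and the augmentation property would contradict maximality; hence $ \overline{\mathcal{B}} = \Omega $. For (1)$ \Rightarrow $(3) and (3)$ \Rightarrow $(1) I would use only the closure axioms: removing an element $ b $ from a P-independent generating set strictly shrinks the closure (else $ b \in \overline{\mathcal{B} \setminus \{ b \}} $ forces $ \overline{\mathcal{B} \setminus \{ b \}} = \overline{\mathcal{B}} = \Omega $ by idempotence, contradicting independence), so a P-basis is a minimal generating set; conversely a non-independent generating set admits a redundant $ b $ whose removal still generates, contradicting minimality. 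Finally, existence of a finite P-basis follows by taking any finite generating set (which exists by hypothesis) and deleting elements one at a time while the remainder still generates; the resulting inclusion-minimal generating set is finite and, by (3)$ \Rightarrow $(1), a P-basis.

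The main obstacle is the augmentation step, since it is the only place where one must control how the minimal polynomial grows; everything else is formal manipulation of the closure operator. The subtlety is that, a priori, adjoining one point could raise $ \deg(F_\Omega) $ by more than one, which would break the clean correspondence between cardinality and degree. This is exactly ruled out by combining the bound $ \deg(F_\Omega) \leq \# \Omega $ with monotonicity, so that the increase is forced to be precisely one.
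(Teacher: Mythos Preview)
Your argument is correct. The paper does not actually supply a proof of this corollary: it is stated as a consequence of results in \cite{lam, algebraic-conjugacy} and left unproven. Your write-up is therefore a genuine addition rather than a reconstruction, and it follows the standard matroid-theoretic route one would extract from those references: verify that P-closure is a finitary closure operator, prove the Steinitz-type augmentation step by degree counting, and then read off the three characterisations of a basis. The augmentation argument is the right crux and your degree bookkeeping handles it cleanly; the remaining implications are, as you say, formal.

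One small stylistic remark: in (1)$\Rightarrow$(2) you write that an element of $\Omega = \overline{\mathcal{B}}$ ``cannot be adjoined while preserving P-independence''. This is immediate from the \emph{definition} of P-independence (the adjoined element $a$ satisfies $a \in \overline{\mathcal{B}} = \overline{(\mathcal{B}\cup\{a\})\setminus\{a\}}$), so no appeal to the augmentation lemma is needed here; you implicitly use this, but it may be worth saying so explicitly since the reader has just seen augmentation proved and might expect it to be invoked.
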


\begin{corollary} [\textbf{\cite{lam, algebraic-conjugacy}}] \label{cor any two basis same size}
Given a finitely generated P-closed set $ \Omega \subseteq \mathbb{F} $, any two of its P-bases are finite and have the same number of elements, which is $ \deg(F_\Omega) $.
\end{corollary}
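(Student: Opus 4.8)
The plan is to reduce everything to a single numerical invariant, namely $ d := \deg(F_\Omega) $, and to show that every P-basis of $ \Omega $ is finite with exactly $ d $ elements. Since $ \Omega $ is finitely generated, Corollary \ref{cor charact P-bases} guarantees that it admits at least one finite P-basis $ \mathcal{B}_0 $; being a finite P-independent set of P-generators, the preceding Lemma gives $ \# \mathcal{B}_0 = \deg(F_{\mathcal{B}_0}) = \deg(F_\Omega) = d < \infty $, so $ d $ is a well-defined finite number attached to $ \Omega $. The task is then to prove that an arbitrary (a priori possibly infinite) P-basis also has cardinality $ d $.

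First I would isolate the key estimate: every finite P-independent subset $ \mathcal{B}' \subseteq \Omega $ satisfies $ \# \mathcal{B}' \leq d $. To see this, note that $ \mathcal{B}' \subseteq \Omega $ forces the ideal inclusion $ I(\Omega) \subseteq I(\mathcal{B}') $, simply because a skew polynomial vanishing on all of $ \Omega $ vanishes on $ \mathcal{B}' $. Since $ \mathbb{F}[x; \sigma, \delta] $ is a right Euclidean domain, both left ideals are principal, generated by $ F_\Omega $ and $ F_{\mathcal{B}'} $ respectively, so $ F_\Omega \in \mathcal{R} F_{\mathcal{B}'} $, i.e. $ F_{\mathcal{B}'} $ is a right divisor of $ F_\Omega $; by additivity of the degree this yields $ \deg(F_{\mathcal{B}'}) \leq \deg(F_\Omega) = d $. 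As $ \mathcal{B}' $ is finite and P-independent, the preceding Lemma gives $ \# \mathcal{B}' = \deg(F_{\mathcal{B}'}) \leq d $, as claimed.

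Next I would upgrade this to arbitrary P-bases. The property of being P-independent has finite character: if $ \mathcal{B} $ is P-independent and $ \mathcal{B}' \subseteq \mathcal{B} $, then for each $ a \in \mathcal{B}' $ one has $ \overline{\mathcal{B}' \setminus \{ a \}} \subseteq \overline{\mathcal{B} \setminus \{ a \}} $, and since $ a \notin \overline{\mathcal{B} \setminus \{ a \}} $ we get $ a \notin \overline{\mathcal{B}' \setminus \{ a \}} $; hence every finite subset of a P-independent set is again P-independent. Combined with the previous step, every finite subset of a P-basis $ \mathcal{B} $ has at most $ d $ elements, which forces $ \# \mathcal{B} \leq d $ and in particular makes $ \mathcal{B} $ finite. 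Finally, since a P-basis is by definition a set of P-generators, $ \overline{\mathcal{B}} = \Omega $ gives $ F_{\mathcal{B}} = F_\Omega $ and hence $ \deg(F_{\mathcal{B}}) = d $; now $ \mathcal{B} $ is finite and P-independent, so the Lemma gives $ \# \mathcal{B} = \deg(F_\mathcal{B}) = d $.

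The one genuine obstacle is the finiteness of an arbitrary P-basis: the definition of P-basis does not presuppose finiteness, whereas the Lemma and Corollary \ref{cor charact P-bases} are phrased for finite sets. The reduction to finite subsets through the finite-character argument above is exactly what bridges this gap, and once finiteness is secured the equality $ \# \mathcal{B} = d $ is immediate from the cited Lemma. Everything else is bookkeeping with the order-reversing correspondence between subsets of $ \mathbb{F} $ and their associated left ideals.
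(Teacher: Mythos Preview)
Your argument is correct. The paper itself does not supply a proof of this corollary; it simply records it as a consequence of the cited references \cite{lam, algebraic-conjugacy}, immediately after the preceding Lemma and Corollary~\ref{cor charact P-bases}. Your write-up fills in exactly the natural details one would expect: use the Lemma to pin down $d=\deg(F_\Omega)$ via one finite P-basis, bound every finite P-independent subset of $\Omega$ by $d$ through the right-divisibility $F_{\mathcal{B}'}\mid F_\Omega$, and close the possible gap between ``P-basis'' and ``finite P-basis'' by the hereditary nature of P-independence. This is the intended route, and your extra care in handling a priori infinite P-bases is a nice touch that the paper's bare citation leaves implicit.
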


This motivates the following definition:

\begin{definition} [\textbf{Ranks \cite{lam, algebraic-conjugacy}}]
Given a finitely generated P-closed set $ \Omega \subseteq \mathbb{F} $, we define its rank as 
$$ {\rm Rk}(\Omega) = \deg(F_\Omega), $$
which is the size of any P-basis of $ \Omega $.
\end{definition}

From now on, all P-closed sets in $ \mathbb{F} $ are assumed to be finitely generated. Observe that this is always the case unless $ \Omega = \mathbb{F} $ and $ \mathbb{F} $ is infinite.

We have now arrived at the main result of this section, which is the existence and uniqueness of Lagrange interpolating skew polynomials. Although we use our own notation, the next lemma follows directly from \cite[Theorem 8]{lam}, which states that a skew Vandermonde matrix is invertible if, and only if, it is defined over a P-independent set. 

\begin{lemma}[\textbf{Lagrange interpolation \cite{lam}}] \label{lemma lagrange interpolation}
Let $ \Omega \subseteq \mathbb{F} $ be a P-closed set with P-basis $ \mathcal{B} = \{ b_1, b_2, \ldots, b_n \} $. For every $ a_1, a_2, \ldots, a_n \in \mathbb{F} $, there exists a unique $ F \in \mathbb{F}[x; \sigma, \delta] $ such that $ \deg(F) < n $ and $ F(b_i) = a_i $, for $ i = 1,2, \ldots, n $.
\end{lemma}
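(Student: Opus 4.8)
The plan is to reinterpret the statement as the bijectivity of a single left linear map. Let $ \mathcal{R}_{<n} \subseteq \mathbb{F}[x;\sigma,\delta] $ denote the set of skew polynomials of degree strictly less than $ n $ (together with $ 0 $). This is a left $ \mathbb{F} $-vector space with basis $ \{ 1, x, \ldots, x^{n-1} \} $, hence of dimension $ n $. The target space $ \mathbb{F}^\mathcal{B} $ of functions on the $ n $-element set $ \mathcal{B} $ is likewise a left $ \mathbb{F} $-vector space of dimension $ n $. Restricting the evaluation map to $ \mathcal{R}_{<n} $ and to the points of $ \mathcal{B} $ yields a left linear map $ \Phi : \mathcal{R}_{<n} \longrightarrow \mathbb{F}^\mathcal{B} $, given by $ \Phi(F) = (F(b_1), \ldots, F(b_n)) $. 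The existence and uniqueness of the interpolating polynomial is exactly the assertion that $ \Phi $ is a bijection.

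First I would establish injectivity. Suppose $ F \in \mathcal{R}_{<n} $ satisfies $ F(b_i) = 0 $ for all $ i $, so that $ F \in I(\mathcal{B}) $. Since $ \mathcal{B} $ is a set of P-generators of $ \Omega $, we have $ F_\mathcal{B} = F_\Omega $, and therefore $ I(\mathcal{B}) = \mathbb{F}[x;\sigma,\delta] \cdot F_\Omega = I(\Omega) $, the left ideal generated by $ F_\Omega $. Any nonzero $ F $ in this ideal can be written $ F = G F_\Omega $ for some $ G \neq 0 $, whence, using the additivity of degrees under multiplication in $ \mathbb{F}[x;\sigma,\delta] $ together with $ \deg(F_\Omega) = {\rm Rk}(\Omega) = \# \mathcal{B} = n $ (the last equalities because $ \mathcal{B} $ is a P-basis), we obtain $ \deg(F) = \deg(G) + n \geq n $. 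This contradicts $ \deg(F) < n $, so $ F = 0 $ and $ \Phi $ is injective.

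I would then conclude by a dimension count. Since $ \Phi $ is an injective left linear map between left $ \mathbb{F} $-vector spaces of the same finite dimension $ n $, it is automatically surjective, hence bijective; this simultaneously yields the existence (surjectivity) and the uniqueness (injectivity) of the interpolating polynomial $ F $ with $ \deg(F) < n $ and $ F(b_i) = a_i $ for all $ i $.

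The main point requiring care is purely infrastructural: that linear algebra over the (possibly non-commutative) division ring $ \mathbb{F} $ behaves as in the field case, namely that finite-dimensional left $ \mathbb{F} $-vector spaces have a well-defined dimension and that an injective left linear map between two such spaces of equal dimension is a bijection. This is standard, but it is the one place where I would explicitly invoke that $ \mathbb{F} $ is a division ring rather than merely a ring. All the genuinely skew-polynomial content — that $ I(\Omega) $ is principal as a left ideal with monic generator of degree $ {\rm Rk}(\Omega) $, and that degrees add under multiplication — is already supplied by the results recalled earlier in this section, so no further structural input is needed.
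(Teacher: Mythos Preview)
Your proof is correct. The paper itself does not give a proof but simply records that the lemma ``follows directly from \cite[Theorem 8]{lam}, which states that a skew Vandermonde matrix is invertible if, and only if, it is defined over a P-independent set.'' In other words, the paper establishes bijectivity of the evaluation map $ \Phi $ by citing the invertibility of its matrix (the skew Vandermonde matrix on $ \mathcal{B} $). Your argument reaches the same conclusion by a different, more self-contained route: you prove injectivity of $ \Phi $ directly from the fact that $ I(\mathcal{B}) $ is generated by $ F_\Omega $ of degree $ n $, and then invoke equality of dimensions. The advantage of your approach is that it uses only the ingredients already laid out in Subsection~\ref{subsec skew polynomials division rings} (the minimal polynomial $ F_\Omega $, its degree $ {\rm Rk}(\Omega) = \#\mathcal{B} $, and additivity of degrees), so no external Vandermonde result is needed; the paper's citation, on the other hand, packages the same degree/rank reasoning into a single quotable matrix statement.
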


\subsection{Skew polynomial metrics} \label{subsec skew pol metrics}

Fix a P-closed set $ \Omega \subseteq \mathbb{F} $ with (finite) P-basis $ \mathcal{B} $, and write $ n = \# \mathcal{B} = {\rm Rk}(\Omega) $. Observe that $ n \leq {\rm Rk}(\mathbb{F}) $. Denote by $ \mathbb{F}[x; \sigma, \delta]_n $ the $ n $-dimensional left vector space of skew polynomials of degree less than $ n $. It follows from Lemma \ref{lemma lagrange interpolation} that the evaluation map (\ref{def evaluation map}) restricted to $ \mathbb{F}[x; \sigma, \delta]_n $, that is
$$ E_\mathcal{B} : \mathbb{F}[x; \sigma, \delta]_n \longrightarrow \mathbb{F}^\mathcal{B}, $$
is a left vector space isomorphism. We may thus define the mentioned skew polynomial weights in the space $ \mathbb{F}[x; \sigma, \delta]_n $ and then extend them to $ \mathbb{F}^\mathcal{B} $:

\begin{definition} [\textbf{Skew weights}]
For $ F \in \mathbb{F}[x; \sigma, \delta]_n $, we define its $ (\sigma,\delta) $-skew polynomial weight, or just skew weight for simplicity, over $ \Omega $ as
$$ {\rm wt}^{\sigma, \delta}_\Omega(F) = {\rm Rk}(\Omega) - {\rm Rk}(Z_\Omega(F)) = n - {\rm Rk}(Z_\Omega(F)), $$
where $ Z_\Omega(F) = Z(F) \cap \Omega = Z(\{ F, F_\Omega \}) $ is the P-closed set of zeros of $ F $ in $ \Omega $. Observe that $ 0 \leq {\rm wt}^{\sigma, \delta}_\Omega(F) \leq n $.

Now, for an arbitrary vector $ f \in \mathbb{F}^\mathcal{B} $, there exist a unique $ F \in \mathbb{F}[x; \sigma, \delta]_n $ such that $ f = E_\mathcal{B} (F) $ by Lemma \ref{lemma lagrange interpolation}. We then define 
$$ {\rm wt}^{\sigma, \delta}_\mathcal{B}(f) = {\rm wt}^{\sigma, \delta}_\Omega(F). $$

We will write $ {\rm wt}_\Omega $ and $ {\rm wt}_\mathcal{B} $ whenever $ \sigma $ and $ \delta $ are known from the context.
\end{definition}

The following properties will allow us to define the desired metrics:

\begin{proposition} \label{prop properties of weights}
Let $ F,G \in \mathbb{F}[x; \sigma, \delta]_n $ and $ a \in \mathbb{F}^* $. The following properties hold:
\begin{enumerate}
\item
$ {\rm wt}_\Omega(F) = 0 $ if, and only if, $ E_\mathcal{B}(F) = 0 $, which is equivalent to $ F = 0 $.
\item
$ {\rm wt}_\Omega(F + G) \leq {\rm wt}_\Omega(F) + {\rm wt}_\Omega(G) $.
\item
$ {\rm wt}_\Omega(aF) = {\rm wt}_\Omega(F) $.
\end{enumerate}
The same properties hold for $ {\rm wt}_\mathcal{B} $ in $ \mathbb{F}^\mathcal{B} $ by definition.
\end{proposition}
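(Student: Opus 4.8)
The plan is to reduce all three properties to two facts recalled above: that the evaluation map is left linear, so that $(F+G)(b) = F(b) + G(b)$ and $(aF)(b) = a\,F(b)$ for $a,b \in \mathbb{F}$, and that $E_\mathcal{B}$ is a left vector space isomorphism; together with the combinatorics of $ {\rm Rk} $ on P-closed sets. The one structural tool I would isolate first is \emph{monotonicity} of rank: if $ \Omega_1 \subseteq \Omega_2 $ are P-closed, then $ {\rm Rk}(\Omega_1) \leq {\rm Rk}(\Omega_2) $, since a P-basis of $ \Omega_1 $ is a P-independent subset of $ \Omega_2 $ and hence has size at most $ {\rm Rk}(\Omega_2) $ by Corollaries \ref{cor charact P-bases} and \ref{cor any two basis same size}.

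Properties 1 and 3 I expect to be routine. For Property 3, since $ a \neq 0 $ lies in a division ring, $ a\,F(b) = 0 $ if and only if $ F(b) = 0 $; hence $ Z_\Omega(aF) = Z_\Omega(F) $ and the two weights coincide (note also $ \deg(aF) = \deg(F) < n $, so $ aF $ remains in $ \mathbb{F}[x;\sigma,\delta]_n $). For Property 1, I would chain equivalences: $ {\rm wt}_\Omega(F) = 0 $ means $ {\rm Rk}(Z_\Omega(F)) = n = {\rm Rk}(\Omega) $; since $ Z_\Omega(F) $ is P-closed and contained in $ \Omega $, a P-basis of it is a P-independent subset of $ \Omega $ of maximal size $ n $, hence a P-basis of $ \Omega $ by Corollary \ref{cor charact P-bases}, forcing $ Z_\Omega(F) = \overline{Z_\Omega(F)} = \Omega $. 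This is equivalent to $ \mathcal{B} \subseteq Z(F) $, i.e.\ $ E_\mathcal{B}(F) = 0 $, which is equivalent to $ F = 0 $ by injectivity of $ E_\mathcal{B} $.

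The triangle inequality in Property 2 is the main obstacle. The first step is the inclusion $ Z_\Omega(F) \cap Z_\Omega(G) \subseteq Z_\Omega(F+G) $, immediate from additivity of evaluation. Writing $ A = Z_\Omega(F) $ and $ B = Z_\Omega(G) $, monotonicity then gives $ {\rm wt}_\Omega(F+G) = n - {\rm Rk}(Z_\Omega(F+G)) \leq n - {\rm Rk}(A \cap B) $. It therefore suffices to prove the modular-type inequality $ {\rm Rk}(A) + {\rm Rk}(B) \leq {\rm Rk}(A \cap B) + n $, since then $ {\rm wt}_\Omega(F) + {\rm wt}_\Omega(G) = 2n - {\rm Rk}(A) - {\rm Rk}(B) \geq n - {\rm Rk}(A \cap B) \geq {\rm wt}_\Omega(F+G) $.

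This last inequality is really submodularity of the rank function, and it is the step I expect to require the most care. I would establish $ {\rm Rk}(A) + {\rm Rk}(B) \leq {\rm Rk}(A \cap B) + {\rm Rk}(\overline{A \cup B}) $ and then bound $ {\rm Rk}(\overline{A \cup B}) \leq {\rm Rk}(\Omega) = n $ by monotonicity, using $ A, B \subseteq \Omega $. For the submodular step itself, I would take a P-basis $ \mathcal{B}_0 $ of $ A \cap B $, extend it to P-bases $ \mathcal{B}_A $ of $ A $ and $ \mathcal{B}_B $ of $ B $ (any P-independent subset of a finitely generated P-closed set extends to a P-basis, by enlarging it to a maximal P-independent subset and invoking Corollary \ref{cor charact P-bases}), and observe that $ \mathcal{B}_A \cup \mathcal{B}_B $ is a set of P-generators of $ \overline{A \cup B} $. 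Since $ \mathcal{B}_0 \subseteq \mathcal{B}_A \cap \mathcal{B}_B $, this yields $ {\rm Rk}(\overline{A \cup B}) \leq \#(\mathcal{B}_A \cup \mathcal{B}_B) \leq \#\mathcal{B}_A + \#\mathcal{B}_B - \#\mathcal{B}_0 = {\rm Rk}(A) + {\rm Rk}(B) - {\rm Rk}(A \cap B) $. The delicate point is precisely the extension of a P-independent set to a P-basis, i.e.\ the exchange property underlying the matroid structure of P-closure, which is exactly what Corollaries \ref{cor charact P-bases} and \ref{cor any two basis same size} package for us. Finally, the three statements for $ {\rm wt}_\mathcal{B} $ transfer verbatim through the left-linear isomorphism $ E_\mathcal{B} $, since it carries sums, left scalar multiples, and the zero vector to the corresponding operations on $ \mathbb{F}[x;\sigma,\delta]_n $.
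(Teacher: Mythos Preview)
Your proof is correct. Items 1 and 3 match the paper's argument essentially verbatim. For Item 2, both you and the paper start from the inclusion $Z_\Omega(F)\cap Z_\Omega(G)\subseteq Z_\Omega(F+G)$ and reduce to a modular-type inequality on ranks; the difference is that the paper invokes an external result, \cite[Theorem 7]{matroidal}, which gives the full modular \emph{equality} $\deg(F_{\Omega_1\cup\Omega_2})+\deg(F_{\Omega_1\cap\Omega_2})=\deg(F_{\Omega_1})+\deg(F_{\Omega_2})$, whereas you derive only the submodular \emph{inequality} ${\rm Rk}(A)+{\rm Rk}(B)\leq {\rm Rk}(A\cap B)+{\rm Rk}(\overline{A\cup B})$ directly from the P-basis extension property packaged in Corollaries \ref{cor charact P-bases} and \ref{cor any two basis same size}. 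Your route is more self-contained, relying only on statements already present in Subsection~\ref{subsec skew polynomials division rings}, at the cost of proving a weaker (but sufficient) bound; the paper's route is shorter but imports a stronger external fact.
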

\begin{proof}
We prove each item separately:
\begin{enumerate}
\item
We have the following chain of equivalences:
$$ {\rm wt}_\Omega(F) = 0 \Longleftrightarrow {\rm Rk}(\Omega) = {\rm Rk}(Z_\Omega(F)) \Longleftrightarrow Z_\Omega(F) = \Omega \Longleftrightarrow E_\Omega(F) = 0. $$
Finally, $ E_\Omega(F) = 0 $ is equivalent to $ E_\mathcal{B}(F) = 0 $ and to $ F = 0 $ by Lemma \ref{lemma lagrange interpolation}.
\item
It holds that
$$ Z_\Omega(F) \cap Z_\Omega(G) \subseteq Z_\Omega(F+G). $$
Denote $ \Omega_1 = Z_\Omega(F) $ and $ \Omega_2 = Z_\Omega(G) $, which are P-closed. The result \cite[Theorem 7]{matroidal} says that
$$ \deg(F_{\Omega_1 \cup \Omega_2}) + \deg(F_{\Omega_1 \cap \Omega_2}) = \deg(F_{\Omega_1}) + \deg(F_{\Omega_2}). $$
We conclude that
$$ n - {\rm Rk}(Z_\Omega(F+G)) \leq n - {\rm Rk}(\Omega_1 \cap \Omega_2) $$
$$ \leq (n - {\rm Rk}(\Omega_1)) + (n - {\rm Rk}(\Omega_2)) $$
since $ \deg(F_{\Omega_1 \cup \Omega_2}) \leq n $ because $ \Omega_1 \cup \Omega_2 \subseteq \Omega $ and $ n = {\rm Rk}(\Omega) $. Thus the result follows.
\item
Trivial from $ Z_\Omega(aF) = Z_\Omega(F) $, which holds since $ a \in \mathbb{F}^* $.
\end{enumerate}
\end{proof}

Therefore, the following functions are indeed metrics:

\begin{definition} [\textbf{Skew metrics}] \label{def skew metrics}
We define the $ (\sigma,\delta) $-skew polynomial metric, or just skew metric for simplicity, over $ \Omega $ as the function $ {\rm d}_\Omega^{\sigma, \delta} : \mathbb{F}[x; \sigma, \delta]_n^2 \longrightarrow \mathbb{N} $ given by
$$ {\rm d}_\Omega^{\sigma, \delta}(F,G) = {\rm wt}^{\sigma, \delta}_\Omega (F - G), $$
for all $ F,G \in \mathbb{F}[x; \sigma, \delta]_n $. By the left vector space isomorphism $ E_\mathcal{B} $, we define the corresponding metric $ {\rm d}_\mathcal{B}^{\sigma, \delta} : (\mathbb{F}^\mathcal{B})^2 \longrightarrow \mathbb{N} $ by 
$$ {\rm d}_\mathcal{B}^{\sigma, \delta}(f,g) = {\rm d}_\Omega^{\sigma, \delta}(F,G), $$
where $ f = E_\mathcal{B} (F) $ and $ g = E_\mathcal{B} (G) $, for  $ F,G \in \mathbb{F}[x; \sigma, \delta]_n $.

We will write $ {\rm d}_\Omega $ and $ {\rm d}_\mathcal{B} $ whenever $ \sigma $ and $ \delta $ are known from the context.
\end{definition}

We next observe that changing P-bases preserves the defined weights and metrics. This fact provides a family of left linear isometries for the new metrics:

\begin{definition} \label{def change of P-basis}
Given another P-basis $ \mathcal{A} $ of $ \Omega $, we define the change-of-P-basis map
\begin{equation*}
\begin{array}{rccc}
\pi_{\mathcal{B},\mathcal{A}} : & \mathbb{F}^\mathcal{B} & \longrightarrow & \mathbb{F}^\mathcal{A} \\
 & E_\mathcal{B}(F) & \mapsto & E_\mathcal{A}(F)
\end{array}
\end{equation*}
for all $ F \in \mathbb{F}[x; \sigma, \delta]_n $.
\end{definition}

Observe that such maps are well-defined by Lemma \ref{lemma lagrange interpolation}: Given $ F,G \in \mathbb{F}[x; \sigma, \delta]_n $ such that $ E_\mathcal{B}(F) = E_\mathcal{B}(G) $, it holds that $ F = G $ (Lemma \ref{lemma lagrange interpolation}), and hence $ E_\mathcal{A}(F) = E_\mathcal{A}(G) $. 

The following result thus follows from the definitions:

\begin{proposition}
Given another P-basis $ \mathcal{A} $ of $ \Omega $, the map $ \pi_{\mathcal{B},\mathcal{A}} : \mathbb{F}^\mathcal{B} \longrightarrow \mathbb{F}^\mathcal{A} $ is a left vector space isomorphism and
$$ {\rm wt}_\mathcal{B}(f) = {\rm wt}_\mathcal{A}(\pi_{\mathcal{B},\mathcal{A}}(f)), $$
for all $ f \in \mathbb{F}^\mathcal{B} $. That is, $ \pi_{\mathcal{B},\mathcal{A}} $ is a left linear isometry.
\end{proposition}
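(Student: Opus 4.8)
The plan is to observe that everything reduces to a single intrinsic weight on the polynomial space, and to exploit the fact that the two basis-dependent weights are merely transports of that one function. Concretely, I would first note that by construction $ \pi_{\mathcal{B},\mathcal{A}} = E_\mathcal{A} \circ E_\mathcal{B}^{-1} $: the map sends $ E_\mathcal{B}(F) $ to $ E_\mathcal{A}(F) $, which is exactly the effect of applying $ E_\mathcal{B}^{-1} $ and then $ E_\mathcal{A} $.

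Next I would establish that $ \pi_{\mathcal{B},\mathcal{A}} $ is a left vector space isomorphism. By Corollary \ref{cor any two basis same size}, both P-bases $ \mathcal{A} $ and $ \mathcal{B} $ have the same cardinality $ n = {\rm Rk}(\Omega) $, so $ E_\mathcal{B} $ and $ E_\mathcal{A} $ share the common domain $ \mathbb{F}[x;\sigma,\delta]_n $. By Lemma \ref{lemma lagrange interpolation}, each of these restricted evaluation maps is a left vector space isomorphism onto $ \mathbb{F}^\mathcal{B} $ and $ \mathbb{F}^\mathcal{A} $ respectively. Since a composition of left vector space isomorphisms (and the inverse of one) is again such a map, $ \pi_{\mathcal{B},\mathcal{A}} = E_\mathcal{A} \circ E_\mathcal{B}^{-1} $ is a left vector space isomorphism.

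For the isometry claim I would simply unwind the two definitions through the common weight $ {\rm wt}_\Omega $. Take $ f \in \mathbb{F}^\mathcal{B} $ and let $ F $ be the unique polynomial in $ \mathbb{F}[x;\sigma,\delta]_n $ with $ f = E_\mathcal{B}(F) $. By the definition of $ {\rm wt}_\mathcal{B} $ we have $ {\rm wt}_\mathcal{B}(f) = {\rm wt}_\Omega(F) $. On the other hand $ \pi_{\mathcal{B},\mathcal{A}}(f) = E_\mathcal{A}(F) $, so by the definition of $ {\rm wt}_\mathcal{A} $ we also get $ {\rm wt}_\mathcal{A}(\pi_{\mathcal{B},\mathcal{A}}(f)) = {\rm wt}_\Omega(F) $. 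The two quantities coincide, which is precisely the assertion.

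I expect no serious obstacle here. The essential point is that $ {\rm wt}_\Omega $ is defined purely through $ Z_\Omega(F) $, which depends only on the polynomial $ F $ and not on any chosen P-basis, while $ {\rm wt}_\mathcal{B} $ and $ {\rm wt}_\mathcal{A} $ are by definition pull-backs of this one function along their respective Lagrange isomorphisms. The only bookkeeping to watch is that both P-bases genuinely produce isomorphisms of the \emph{same} space of skew polynomials of degree less than $ n $, which is exactly what Corollary \ref{cor any two basis same size} guarantees.
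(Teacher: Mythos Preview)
Your proposal is correct and matches the paper's approach: the paper simply states that the result ``follows from the definitions,'' and what you have written is precisely the unfolding of those definitions through the intrinsic weight $ {\rm wt}_\Omega $ and the Lagrange isomorphisms $ E_\mathcal{A} $, $ E_\mathcal{B} $.
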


We conclude the section by giving a connection between the new metrics and the Hamming metric. Intuitively, this connection is just the observation that $ {\rm wt}_\Omega(F) $ does not depend on the chosen P-basis $ \mathcal{B} $, while at the same time we consider it in $ \mathbb{F}^\mathcal{B} $, which does depend on $ \mathcal{B} $, hence we may run over all P-bases of $ \Omega $. This fact has been observed for generalized rank and Hamming weights in \cite[Theorems 1 \& 6]{similarities}. We will also use this result to connect skew metrics with sum-rank metrics in Subsection \ref{subsec sum rank metric}.

For all $ f \in \mathbb{F}^{\mathcal{A}} $, we will denote its \textit{Hamming weight} by
$$ {\rm wt_H}(f) = \# \{ a \in \mathcal{A} \mid f(a) \neq 0 \}. $$

\begin{proposition} \label{prop connection with Hamming weights}
For any $ F \in \mathbb{F}[x; \sigma, \delta]_n $ and $ f = E_\mathcal{B}(F) \in \mathbb{F}^\mathcal{B} $, it holds that
$$ {\rm wt}_\Omega (F) = \min \{ {\rm wt_H}(E_\mathcal{A}(F)) \mid \mathcal{A} \textrm{ is a P-basis of } \Omega \} \leq {\rm wt_H}(E_\mathcal{B}(F)), $$
or equivalently, 
$$ {\rm wt}_\mathcal{B}(f) = \min \{ {\rm wt_H}(\pi_{\mathcal{B},\mathcal{A}}(f)) \mid \mathcal{A} \textrm{ is a P-basis of } \Omega \} \leq {\rm wt_H}(f). $$
\end{proposition}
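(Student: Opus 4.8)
The plan is to reinterpret both sides of the claimed identity in terms of how many elements of a P-basis $ \mathcal{A} $ of $ \Omega $ fall inside the zero set $ Z_\Omega(F) $, and then to read off the result from the matroid structure of P-independence recorded in Corollaries \ref{cor charact P-bases} and \ref{cor any two basis same size}. The final inequality is then free, since $ \mathcal{B} $ is itself one of the P-bases appearing in the minimum.

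First I would record the elementary bookkeeping identity relating Hamming weight to zeros. For any P-basis $ \mathcal{A} $ of $ \Omega $ and any $ a \in \mathcal{A} $, we have $ E_\mathcal{A}(F)(a) = F(a) $, so $ a $ contributes $ 0 $ to the Hamming weight exactly when $ a \in Z(F) \cap \Omega = Z_\Omega(F) $. As $ \mathcal{A} \subseteq \Omega $ has $ n $ elements, this gives $ {\rm wt_H}(E_\mathcal{A}(F)) = n - \#(\mathcal{A} \cap Z_\Omega(F)) $. Substituting this together with the definition $ {\rm wt}_\Omega(F) = n - {\rm Rk}(Z_\Omega(F)) $, the asserted equality becomes equivalent to
$$ {\rm Rk}(Z_\Omega(F)) = \max \{ \#(\mathcal{A} \cap Z_\Omega(F)) \mid \mathcal{A} \textrm{ is a P-basis of } \Omega \}. $$
Writing $ \Omega' = Z_\Omega(F) $, a P-closed subset of $ \Omega $ of rank $ r = {\rm Rk}(\Omega') $, it remains to prove both bounds. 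For the upper bound, I would observe that $ \mathcal{A} \cap \Omega' $ is a subset of the P-independent set $ \mathcal{A} $, hence is itself P-independent, and it lies in $ \Omega' $; since maximal P-independent subsets of $ \Omega' $ are P-bases (Corollary \ref{cor charact P-bases}), all of cardinality $ r $ (Corollary \ref{cor any two basis same size}), every P-independent subset of $ \Omega' $ has at most $ r $ elements, so $ \#(\mathcal{A} \cap \Omega') \leq r $. For the matching lower bound, I would start from a P-basis $ \mathcal{B}' $ of $ \Omega' $, which has $ r $ elements and is P-independent in $ \Omega $ because $ \Omega' \subseteq \Omega $. Extending $ \mathcal{B}' $ to a maximal P-independent subset $ \mathcal{A} $ of $ \Omega $ (possible since $ \Omega $ is finitely generated, so P-independent subsets have bounded size) produces a P-basis of $ \Omega $ by Corollary \ref{cor charact P-bases}; as $ \mathcal{B}' \subseteq \mathcal{A} \cap \Omega' $, this basis attains $ \#(\mathcal{A} \cap \Omega') \geq r $. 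Combining the bounds gives the displayed equality, hence the formula, and the inequality $ {\rm wt}_\Omega(F) \leq {\rm wt_H}(E_\mathcal{B}(F)) $ follows because $ \mathcal{B} $ is one of the P-bases over which the minimum is taken.

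The step carrying the real content is the extension of a P-basis $ \mathcal{B}' $ of $ \Omega' $ to a P-basis $ \mathcal{A} $ of $ \Omega $: this is exactly the matroid augmentation property for P-independence, and it is where I must lean on the equivalence ``maximal P-independent $ \Leftrightarrow $ P-basis'' of Corollary \ref{cor charact P-bases} together with the uniform cardinality of bases in Corollary \ref{cor any two basis same size}. Once that matroid structure is invoked, both inequalities are formal and the reduction to the Hamming-weight expression is a routine identity.
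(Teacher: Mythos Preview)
Your proposal is correct and follows essentially the same approach as the paper's proof. Both arguments hinge on the same two observations: for any P-basis $\mathcal{A}$, the set $\mathcal{A}\cap Z_\Omega(F)$ is P-independent inside $Z_\Omega(F)$ and hence has at most ${\rm Rk}(Z_\Omega(F))$ elements; and a P-basis of $Z_\Omega(F)$ can be extended to a P-basis of $\Omega$ via Corollary~\ref{cor charact P-bases}, realizing equality. Your repackaging of the claim as the identity ${\rm Rk}(Z_\Omega(F)) = \max_\mathcal{A}\#(\mathcal{A}\cap Z_\Omega(F))$ is a cosmetic reformulation of the paper's two-inequality argument.
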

\begin{proof}
We only need to prove the first equality, for which we prove both inequalities:

$ \leq $ : Let $ \mathcal{A} $ be a P-basis of $ \Omega $ and define $ \mathcal{A}_F = \{ a \in \mathcal{A} \mid F(a) = 0 \} $. Since $ \mathcal{A}_F \subseteq \mathcal{A} $, then $ \mathcal{A}_F $ is P-independent. Together with $ \mathcal{A}_F \subseteq Z_\Omega(F) $, we deduce that $ \# \mathcal{A}_F \leq {\rm Rk}(Z_\Omega(F)) $. Using that $ {\rm Rk}(\Omega) = \# \mathcal{A} $, we conclude that
$$ {\rm wt}_\Omega (F) = {\rm Rk}(\Omega) - {\rm Rk}(Z_\Omega(F)) \leq \# \mathcal{A} - \# \mathcal{A}_F = {\rm wt_H}(E_\mathcal{A}(F)), $$
and the inequality follows.

$ \geq $ : Let $ \mathcal{A}_F $ be a P-basis of $ Z_\Omega(F) $. By Corollary \ref{cor charact P-bases}, there exists a P-basis $ \mathcal{A} $ of $ \Omega $ that contains $ \mathcal{A}_F $. Therefore
$$ {\rm wt_H}(E_\mathcal{A}(F)) = \# \mathcal{A} - \# \mathcal{A}_F = {\rm Rk}(\Omega) - {\rm Rk}(Z_\Omega(F)) = {\rm wt}_\Omega (F), $$
and the inequality follows.
\end{proof}

\subsection{Maximum skew distance codes and skew Reed-Solomon codes} \label{subsec optimal codes}

In this subsection, we deduce a Singleton bound for skew metrics and show how the skew Reed-Solomon codes defined in \cite{skew-evaluation1} for finite fields always attain such bounds over arbitrary division rings. Moreover by Proposition \ref{prop connection with Hamming weights}, we will see that any maximum distance left linear code for skew metrics is maximum distance separable (MDS), that is, it is maximum distance for the Hamming metric. Thus our results imply Item 1 in \cite[Proposition 2]{skew-evaluation1}.

Given an arbitrary (linear or non-linear) code $ \mathcal{C} \subseteq \mathbb{F}^\mathcal{B} $, we define its minimum skew distance by
\begin{equation}
{\rm d}_\mathcal{B}^{\sigma, \delta}(\mathcal{C}) = \min \{ {\rm d}_\mathcal{B}^{\sigma,\delta}(f, g) \mid f,g \in \mathcal{C}, f \neq g \}.
\end{equation}
We may give analogous definitions in $ \mathbb{F}[x; \sigma, \delta]_n $ with the metric $ {\rm d}_\Omega^{\sigma, \delta} $. Again, we use the notation $ {\rm d}_\mathcal{B} $ and $ {\rm d}_\Omega $ whenever $ \sigma $ and $ \delta $ are known from the context.

As usual, it holds that $ {\rm d}_\mathcal{B}(\mathcal{C}) = \min \{ {\rm wt}_\mathcal{B}(f) \mid f \in \mathcal{C}, f \neq 0 \} $ if $ \mathcal{C} \subseteq \mathbb{F}^\mathcal{B} $ is left linear. The following lemma follows directly from Proposition \ref{prop connection with Hamming weights}:

\begin{lemma} \label{lemma connection with Hamming metric}
Given an arbitrary code $ \mathcal{C} \subseteq \mathbb{F}^\mathcal{B} $, it holds that
$$ {\rm d}_\mathcal{B}(\mathcal{C}) = \min \{ {\rm d}_H(\pi_{\mathcal{B},\mathcal{A}}(\mathcal{C})) \mid \mathcal{A} \textrm{ is a P-basis of } \Omega \} \leq {\rm d}_H(\mathcal{C}). $$
\end{lemma}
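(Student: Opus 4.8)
The plan is to reduce the minimum-distance statement for codes to the weight-level identity already established in Proposition \ref{prop connection with Hamming weights}, exploiting that each change-of-P-basis map $ \pi_{\mathcal{B},\mathcal{A}} $ is a left vector space isomorphism. First I would recall that, since distances are defined as weights of differences, for an arbitrary code $ \mathcal{C} \subseteq \mathbb{F}^\mathcal{B} $ one has
$$ {\rm d}_\mathcal{B}(\mathcal{C}) = \min \{ {\rm wt}_\mathcal{B}(f - g) \mid f,g \in \mathcal{C},\ f \neq g \}, $$
where each difference $ f - g $ is a nonzero element of $ \mathbb{F}^\mathcal{B} $. This holds for arbitrary (not necessarily linear) codes directly from the definition of minimum skew distance.

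Next I would apply Proposition \ref{prop connection with Hamming weights} to each such difference, obtaining $ {\rm wt}_\mathcal{B}(f - g) = \min \{ {\rm wt_H}(\pi_{\mathcal{B},\mathcal{A}}(f - g)) \mid \mathcal{A} \textrm{ is a P-basis of } \Omega \} $. Using that $ \pi_{\mathcal{B},\mathcal{A}} $ is additive (Definition \ref{def change of P-basis}), I would rewrite $ \pi_{\mathcal{B},\mathcal{A}}(f - g) = \pi_{\mathcal{B},\mathcal{A}}(f) - \pi_{\mathcal{B},\mathcal{A}}(g) $, so that the minimum over distinct codeword pairs and the minimum over P-bases can be interchanged, both iterated minima computing the minimum of the same family of non-negative integers. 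This yields
$$ {\rm d}_\mathcal{B}(\mathcal{C}) = \min_{\mathcal{A}} \ \min_{\substack{f,g \in \mathcal{C} \\ f \neq g}} {\rm wt_H}\big( \pi_{\mathcal{B},\mathcal{A}}(f) - \pi_{\mathcal{B},\mathcal{A}}(g) \big). $$

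To finish the equality I would observe that, since $ \pi_{\mathcal{B},\mathcal{A}} $ is a bijection, the pairs $ (\pi_{\mathcal{B},\mathcal{A}}(f), \pi_{\mathcal{B},\mathcal{A}}(g)) $ of distinct images run exactly over the distinct pairs of $ \pi_{\mathcal{B},\mathcal{A}}(\mathcal{C}) $, so the inner minimum is precisely the Hamming minimum distance $ {\rm d}_H(\pi_{\mathcal{B},\mathcal{A}}(\mathcal{C})) $, giving the claimed identity. The inequality $ {\rm d}_\mathcal{B}(\mathcal{C}) \leq {\rm d}_H(\mathcal{C}) $ then follows by taking $ \mathcal{A} = \mathcal{B} $ in the outer minimum, since $ \pi_{\mathcal{B},\mathcal{B}} = {\rm Id} $ and hence $ \pi_{\mathcal{B},\mathcal{B}}(\mathcal{C}) = \mathcal{C} $. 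I expect no genuine obstacle here, as announced in the text; the only point requiring minor care is justifying the interchange of the two minima and checking that the isomorphism $ \pi_{\mathcal{B},\mathcal{A}} $ transports the minimum-distance computation on $ \mathcal{C} $ to that on $ \pi_{\mathcal{B},\mathcal{A}}(\mathcal{C}) $ without collapsing distinct codewords.
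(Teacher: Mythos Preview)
Your proposal is correct and matches the paper's approach: the paper does not give a separate proof but simply states that the lemma follows directly from Proposition~\ref{prop connection with Hamming weights}, which is exactly the reduction you carry out. The only additional detail you supply---interchanging the two minima and using that $\pi_{\mathcal{B},\mathcal{A}}$ is a bijective left linear map---is straightforward and fully justified.
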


Thus the Singleton bound for the Hamming metric implies the Singleton bound for skew metrics:

\begin{proposition} \label{prop singleton bound}
Let $ \mathcal{C} \subseteq \mathbb{F}^\mathcal{B} $ be a left linear code of dimension $ k $. It holds that
\begin{equation}
{\rm d}_\mathcal{B}(\mathcal{C}) \leq n - k + 1.
\label{eq singleton bound}
\end{equation}
\end{proposition}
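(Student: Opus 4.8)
The plan is to derive the Singleton bound for the skew metric directly from the corresponding bound for the Hamming metric, using the comparison established in Lemma~\ref{lemma connection with Hamming metric}. The key observation is that the change-of-P-basis maps $ \pi_{\mathcal{B},\mathcal{A}} $ are left vector space isomorphisms, so they preserve the dimension of a left linear code. Hence, for any P-basis $ \mathcal{A} $ of $ \Omega $, the image $ \pi_{\mathcal{B},\mathcal{A}}(\mathcal{C}) \subseteq \mathbb{F}^\mathcal{A} $ is again a left linear code of the same dimension $ k $ and the same ambient length $ n = \#\mathcal{A} = {\rm Rk}(\Omega) $.

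First I would invoke the classical Singleton bound for the Hamming metric: any left linear code of dimension $ k $ and length $ n $ satisfies $ {\rm d}_H(\mathcal{D}) \leq n - k + 1 $. This bound holds over any division ring by the standard argument (deleting $ k-1 $ coordinates keeps the resulting punctured code nonzero, forcing a codeword of Hamming weight at most $ n-k+1 $; equivalently, one counts dimensions after projection). Applying it to $ \mathcal{D} = \pi_{\mathcal{B},\mathcal{A}}(\mathcal{C}) $ for each P-basis $ \mathcal{A} $ of $ \Omega $ gives $ {\rm d}_H(\pi_{\mathcal{B},\mathcal{A}}(\mathcal{C})) \leq n - k + 1 $ uniformly in $ \mathcal{A} $.

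Next I would combine these estimates through Lemma~\ref{lemma connection with Hamming metric}, which expresses $ {\rm d}_\mathcal{B}(\mathcal{C}) $ as the minimum of $ {\rm d}_H(\pi_{\mathcal{B},\mathcal{A}}(\mathcal{C})) $ over all P-bases $ \mathcal{A} $ of $ \Omega $. Since each term in this minimum is bounded above by $ n - k + 1 $, so is the minimum, and we conclude
$$ {\rm d}_\mathcal{B}(\mathcal{C}) \leq {\rm d}_H(\mathcal{C}) = {\rm d}_H(\pi_{\mathcal{B},\mathcal{B}}(\mathcal{C})) \leq n - k + 1, $$
which is exactly inequality~(\ref{eq singleton bound}). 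In fact the final inequality $ {\rm d}_\mathcal{B}(\mathcal{C}) \leq {\rm d}_H(\mathcal{C}) $ already recorded in Lemma~\ref{lemma connection with Hamming metric} suffices on its own, so one does not even need the full minimum; a single appeal to the Hamming Singleton bound on $ \mathcal{C} \subseteq \mathbb{F}^\mathcal{B} $ itself closes the argument.

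I do not anticipate a genuine obstacle here, since this is essentially a transfer of a known bound along an isometric comparison. The only point requiring mild care is that all statements hold over a possibly non-commutative division ring: one must ensure that the Singleton bound for the Hamming metric and the dimension-preservation under $ \pi_{\mathcal{B},\mathcal{A}} $ are phrased for \emph{left} linear codes and \emph{left} vector space dimensions. Both are standard once we note that the puncturing/projection argument behind the Hamming Singleton bound uses only left linear algebra over $ \mathbb{F} $, which is valid over any division ring.
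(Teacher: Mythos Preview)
Your proposal is correct and matches the paper's own argument: the paper simply observes that the Singleton bound for skew metrics follows from the Hamming Singleton bound via the inequality $ {\rm d}_\mathcal{B}(\mathcal{C}) \leq {\rm d}_H(\mathcal{C}) $ in Lemma~\ref{lemma connection with Hamming metric}. Your additional remarks about dimension preservation under $ \pi_{\mathcal{B},\mathcal{A}} $ and the minimum over all P-bases are fine but, as you yourself note, unnecessary for the conclusion.
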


This motivates the following definition:

\begin{definition} [\textbf{Maximum skew distance codes}]
We say that a left linear code $ \mathcal{C} \subseteq \mathbb{F}^\mathcal{B} $ is a maximum skew distance (MSD) code if equality holds in (\ref{eq singleton bound}).
\end{definition}

Thus we obtain the following direct consequence from Lemma \ref{lemma connection with Hamming metric}:

\begin{corollary}
A left linear code $ \mathcal{C} \subseteq \mathbb{F}^\mathcal{B} $ is MSD if, and only if, $ \pi_{\mathcal{B},\mathcal{A}}(\mathcal{C}) \subseteq \mathbb{F}^\mathcal{A} $ is MDS, for all P-bases $ \mathcal{A} $ of $ \Omega $. 
\end{corollary}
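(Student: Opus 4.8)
The plan is to deduce this corollary directly from Lemma~\ref{lemma connection with Hamming metric}, which already packages all the hard work. The corollary asserts an equivalence between a left linear code $\mathcal{C} \subseteq \mathbb{F}^\mathcal{B}$ being MSD and every image $\pi_{\mathcal{B},\mathcal{A}}(\mathcal{C}) \subseteq \mathbb{F}^\mathcal{A}$ being MDS, as $\mathcal{A}$ ranges over all P-bases of $\Omega$. The first thing I would observe is that each $\pi_{\mathcal{B},\mathcal{A}}$ is a left vector space isomorphism, so $\pi_{\mathcal{B},\mathcal{A}}(\mathcal{C})$ is a left linear code in $\mathbb{F}^\mathcal{A}$ of the same dimension $k$ as $\mathcal{C}$, and the ambient length is again $n = {\rm Rk}(\Omega) = \#\mathcal{A}$ by Corollary~\ref{cor any two basis same size}. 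Thus the Hamming-metric Singleton bound gives ${\rm d}_H(\pi_{\mathcal{B},\mathcal{A}}(\mathcal{C})) \leq n - k + 1$ for every P-basis $\mathcal{A}$, with MDS meaning equality.

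Next I would feed these observations into Lemma~\ref{lemma connection with Hamming metric}, which states that ${\rm d}_\mathcal{B}(\mathcal{C}) = \min\{{\rm d}_H(\pi_{\mathcal{B},\mathcal{A}}(\mathcal{C})) \mid \mathcal{A} \textrm{ is a P-basis of } \Omega\}$. The MSD condition is ${\rm d}_\mathcal{B}(\mathcal{C}) = n - k + 1$. For the forward direction, if $\mathcal{C}$ is MSD then the minimum over all P-bases of the quantity ${\rm d}_H(\pi_{\mathcal{B},\mathcal{A}}(\mathcal{C}))$ equals $n-k+1$; since each term is at most $n-k+1$ by the individual Hamming Singleton bounds, and their minimum attains $n-k+1$, every term must equal $n-k+1$, forcing each $\pi_{\mathcal{B},\mathcal{A}}(\mathcal{C})$ to be MDS. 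For the converse, if every $\pi_{\mathcal{B},\mathcal{A}}(\mathcal{C})$ is MDS then every term in the minimum equals $n-k+1$, so the minimum is $n-k+1$, i.e. $\mathcal{C}$ is MSD.

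The argument is therefore a short logical manipulation of a minimum of quantities each bounded above by a common value $n-k+1$: the minimum equals that common value if and only if all the quantities do. I do not anticipate any genuine obstacle here, since the substantive content---the identity relating the skew distance to a minimum of Hamming distances, and the Singleton-bound machinery---is already established in Lemma~\ref{lemma connection with Hamming metric} and Proposition~\ref{prop singleton bound}. The only point requiring a moment's care is to confirm that the dimension and length are preserved under each $\pi_{\mathcal{B},\mathcal{A}}$ so that the \emph{same} threshold $n-k+1$ governs the MDS condition in every $\mathbb{F}^\mathcal{A}$; this follows at once from $\pi_{\mathcal{B},\mathcal{A}}$ being a left linear isomorphism between spaces of common length $n$.
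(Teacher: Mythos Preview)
Your proposal is correct and follows exactly the route the paper intends: the corollary is stated as a direct consequence of Lemma~\ref{lemma connection with Hamming metric}, and your argument simply spells out that deduction together with the Hamming Singleton bound and the fact that $\pi_{\mathcal{B},\mathcal{A}}$ preserves dimension and length. There is nothing to add.
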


We will now describe an explicit family of maximum skew distance codes for any dimension $ k = 1,2, \ldots, n $. These codes have been first defined in \cite[Definition 7]{skew-evaluation1} when $ \mathbb{F} $ is finite. Extending the definition is straightforward.

\begin{definition} [\textbf{Skew Reed-Solomon codes \cite{skew-evaluation1}}] \label{def skew RS codes}
For each $ k = 0,1,2, \ldots, n $, we define the ($ k $-dimensional) skew Reed-Solomon code over the P-basis $ \mathcal{B} $ with endomorphism $ \sigma $ and derivation $ \delta $ as the linear code
$$ \mathcal{C}_{\mathcal{B},k}^{\sigma, \delta} = E_\mathcal{B}^{\sigma,\delta}(\mathbb{F}[x; \sigma, \delta]_k) \subseteq \mathbb{F}^\mathcal{B} . $$
Again, we will write $ \mathcal{C}_{\mathcal{B},k} $ whenever $ \sigma $ and $ \delta $ are known from the context.
\end{definition}

The following result implies Item 1 in \cite[Proposition 2]{skew-evaluation1}, which proves that skew Reed-Solomon codes over finite fields are MDS.

\begin{theorem} \label{th maximum skew distance codes}
For each $ k = 1,2, \ldots, n $, it holds that
$$ {\rm d}_\mathcal{B}(\mathcal{C}_{\mathcal{B},k}) = n - k + 1. $$
That is, the code $ \mathcal{C}_{\mathcal{B},k} $ is a maximum skew distance code. In particular, it is also MDS.
\end{theorem}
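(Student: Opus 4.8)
The plan is to show both the upper bound and the lower bound on $ {\rm d}_\mathcal{B}(\mathcal{C}_{\mathcal{B},k}) $. The upper bound is immediate: since $ \mathcal{C}_{\mathcal{B},k} = E_\mathcal{B}(\mathbb{F}[x;\sigma,\delta]_k) $ has dimension $ k $ (as $ E_\mathcal{B} $ is a left vector space isomorphism on $ \mathbb{F}[x;\sigma,\delta]_n $ by Lemma \ref{lemma lagrange interpolation}, and $ \mathbb{F}[x;\sigma,\delta]_k $ is $ k $-dimensional), the Singleton bound in Proposition \ref{prop singleton bound} gives $ {\rm d}_\mathcal{B}(\mathcal{C}_{\mathcal{B},k}) \leq n - k + 1 $ directly. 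So the entire content lies in the matching lower bound.

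For the lower bound, I would work on the skew-polynomial side via the isomorphism $ E_\mathcal{B} $, so it suffices to show $ {\rm wt}_\Omega(F) \geq n - k + 1 $ for every nonzero $ F \in \mathbb{F}[x;\sigma,\delta]_k $. By definition, $ {\rm wt}_\Omega(F) = n - {\rm Rk}(Z_\Omega(F)) $, so this is equivalent to the bound $ {\rm Rk}(Z_\Omega(F)) \leq k - 1 $. The key step is therefore to bound the rank of the zero set of a skew polynomial of degree less than $ k $. Here I would invoke the degree bound from the earlier lemma attributed to \cite{lam}: for any finite P-closed set, its rank equals $ \deg(F_{Z_\Omega(F)}) $, and since $ F $ vanishes on $ Z_\Omega(F) $, we have $ F \in I(Z_\Omega(F)) $, so the minimal skew polynomial $ F_{Z_\Omega(F)} $ divides $ F $ on the right. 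Because $ \mathbb{F}[x;\sigma,\delta] $ is a domain with additive degrees, right-divisibility forces $ \deg(F_{Z_\Omega(F)}) \leq \deg(F) \leq k - 1 $. Hence $ {\rm Rk}(Z_\Omega(F)) = \deg(F_{Z_\Omega(F)}) \leq k - 1 $, which yields $ {\rm wt}_\Omega(F) \geq n - (k-1) = n - k + 1 $, as desired.

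Combining the two bounds gives $ {\rm d}_\mathcal{B}(\mathcal{C}_{\mathcal{B},k}) = n - k + 1 $, so $ \mathcal{C}_{\mathcal{B},k} $ is a maximum skew distance code. The final claim that it is also MDS follows immediately from Lemma \ref{lemma connection with Hamming metric}, since $ {\rm d}_\mathcal{B}(\mathcal{C}_{\mathcal{B},k}) \leq {\rm d}_H(\mathcal{C}_{\mathcal{B},k}) \leq n - k + 1 $ forces equality in the Hamming Singleton bound as well.

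The main obstacle, and the only genuinely nontrivial point, is justifying that $ F_{Z_\Omega(F)} $ right-divides $ F $ and that its degree equals the rank of the zero set. The first part rests on $ I(Z_\Omega(F)) $ being a principal left ideal generated by its minimal skew polynomial (stated in the excerpt as a consequence of the right Euclidean property), and the second part is precisely the rank-equals-degree characterization. Everything else is bookkeeping with the isomorphism $ E_\mathcal{B} $ and the Singleton bound already in hand.
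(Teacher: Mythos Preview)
Your proposal is correct and follows essentially the same approach as the paper: both bound $ {\rm Rk}(Z_\Omega(F)) $ by $ \deg(F) $ for nonzero $ F $, then combine with the Singleton bound. The only cosmetic difference is that the paper phrases the lower bound contrapositively and cites Lemma~\ref{lemma lagrange interpolation} (Lagrange interpolation) directly, whereas you unpack the underlying divisibility argument via the minimal skew polynomial of $ Z_\Omega(F) $; these are equivalent uses of the same degree machinery from \cite{lam}.
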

\begin{proof}
Let $ F \in \mathbb{F}[x; \sigma, \delta]_k $ be such that $ {\rm Rk}(\Omega) - {\rm Rk}(Z_\Omega(F)) \leq n-k $, that is,
$$ {\rm Rk}(Z_\Omega(F)) \geq k. $$
By Lemma \ref{lemma lagrange interpolation}, it holds that $ F = 0 $. We deduce that $ {\rm d}_\mathcal{B}(\mathcal{C}_{\mathcal{B},k}) \geq n - k + 1 $, and since the reversed inequality always holds by Proposition \ref{prop singleton bound}, the result follows.
\end{proof}

We conclude by observing that we obtain again skew Reed-Solomon codes by changing P-basis: If $ 0 \leq k \leq n $ and $ \mathcal{A} $ is another P-basis of $ \Omega $, then
$$ \mathcal{C}_{\mathcal{A},k} = \pi_{\mathcal{B}, \mathcal{A}}(\mathcal{C}_{\mathcal{B},k}) \subseteq \mathbb{F}^\mathcal{A}. $$

\section{Linearizing skew metrics and skew Reed-Solomon codes} \label{sec linearizing skew metrics}

In this section, we recall the concept of \textit{linear operator polynomials} from \cite{leroy-pol} and use it to give linearized descriptions of skew metrics and skew Reed-Solomon codes, thus obtaining analogous descriptions as that of Gabidulin codes \cite{gabidulin, new-construction}. The linear operators $ \sigma $ or $ \delta $ have been considered in \cite{augot-function, augot, augot-extended, skew-evaluation1, gabidulin, new-construction, lam, lam-leroy, skew-evaluation2, matroidal}. General linear operator polynomials have been introduced in \cite{leroy-pol}. Those that we are interested in were defined in \cite[Example 2.6]{leroy-pol}, although the right linear functions that they define were already considered in \cite{hilbert90}. Their connection with skew Reed-Solomon codes \cite{skew-evaluation1} in general is new to the best of our knowledge. Furthermore, the general definition of \textit{linearized Reed-Solomon codes} is new.

\subsection{Linear operators and operator polynomials} \label{subsec linear operators and RS}

In this subsection, we recall the definitions of (linear) operator polynomials and their connection with skew polynomials. The following definition is \cite[Example 2.6]{leroy-pol}:

\begin{definition} [\textbf{Operator polynomials \cite{leroy-pol}}]
Given $ a \in \mathbb{F} $, we define its $ (\sigma,\delta) $-operator as
\begin{equation*}
\begin{array}{rccc}
\mathcal{D}_a^{\sigma, \delta} : & \mathbb{F} & \longrightarrow & \mathbb{F} \\
 & b & \mapsto & \sigma(b)a + \delta(b),
\end{array}
\end{equation*}
for all $ b \in \mathbb{F} $. We will denote $ \mathcal{D}_a $ or $ \mathcal{D} $ for simplicity when $ \sigma, \delta $ or $ \sigma, \delta, a $ are known from the context, respectively. We then define the left vector space of operator polynomials $ \mathbb{F}[\mathcal{D}_a] $ over $ \mathbb{F} $ as that with basis $ \{ \mathcal{D}_a^i \mid i \in \mathbb{N} \} $, where $ \mathcal{D}_a^i $ are just pair-wise distinct symbols. For an operator polynomial $ F = \sum_{i \in \mathbb{N}} F_i \mathcal{D}_a^i $ with $ F_i \in \mathbb{F} $ for all $ i \in \mathbb{N} $, we define its operator evaluation over $ b \in \mathbb{F} $ as
$$ F(b) = \sum_{i \in \mathbb{N}} F_i \mathcal{D}_a^i(b), $$
for all $ b \in \mathbb{F} $, where $ \mathcal{D}_a^0 = {\rm Id} $. Given a set $ \Omega \subseteq \mathbb{F} $, we define the evaluation map over $ \Omega $ as the left linear map
$$ E_\Omega : \mathbb{F}[\mathcal{D}_a] \longrightarrow \mathbb{F}^\Omega $$
where $ f = E_\Omega(F) \in \mathbb{F}^\Omega $ is given by $ f(b) = F(b) $, for all $ b \in \Omega $ and for $ F \in \mathbb{F}[\mathcal{D}_a] $.

Finally, we denote by $ F^{\mathcal{D}_a} $ the image of $ F \in \mathbb{F}[x; \sigma,\delta] $ by the left vector space isomorphism
\begin{equation*}
\begin{array}{ccc}
 \mathbb{F}[x; \sigma,\delta] & \longrightarrow & \mathbb{F}[\mathcal{D}_a^{\sigma,\delta}] \\
 \sum_{i \in \mathbb{N}} F_i x^i & \mapsto & \sum_{i \in \mathbb{N}} F_i \mathcal{D}_a^i.
\end{array}
\end{equation*}
\end{definition}

Operator polynomials are right linear over certain division subring of $ \mathbb{F} $, which is defined as the \textit{centralizer} of $ a $ in \cite[Eq. (3.1)]{lam-leroy}:

\begin{definition} [\textbf{Centralizers \cite{lam-leroy}}]
Given $ a \in \mathbb{F} $, we define its $ (\sigma,\delta) $-centralizer, or simply centralizer, as
$$ K_a = K_a^{\sigma,\delta} = \{ b \in \mathbb{F} \mid \mathcal{D}^{\sigma,\delta}_a(b) = ab \}. $$
\end{definition} 

The following lemma is a particular case of \cite[Lemma 3.2]{lam-leroy}:

\begin{lemma} [\textbf{\cite{lam-leroy}}]
For all $ a \in \mathbb{F} $, it holds that $ K_a \subseteq \mathbb{F} $ is a division subring of $ \mathbb{F} $.
\end{lemma}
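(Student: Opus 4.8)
The plan is to verify directly the division-subring axioms for $ K_a = \{ b \in \mathbb{F} \mid \mathcal{D}_a(b) = ab \} $, where $ \mathcal{D}_a(b) = \sigma(b)a + \delta(b) $. The single computational tool that drives everything is a ``twisted product rule'' for the operator $ \mathcal{D}_a $: for all $ b,c \in \mathbb{F} $,
$$ \mathcal{D}_a(bc) = \sigma(b)\mathcal{D}_a(c) + \delta(b) c. $$
This follows immediately from $ \sigma(bc) = \sigma(b)\sigma(c) $ and the $ \sigma $-derivation identity $ \delta(bc) = \sigma(b)\delta(c) + \delta(b)c $, since $ \mathcal{D}_a(bc) = \sigma(b)\sigma(c)a + \sigma(b)\delta(c) + \delta(b)c = \sigma(b)(\sigma(c)a + \delta(c)) + \delta(b)c $. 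I would also record two preliminary facts. First, $ \delta(1) = 0 $ (applying the derivation rule to $ 1 = 1 \cdot 1 $ gives $ \delta(1) = 2\delta(1) $), so that $ \mathcal{D}_a(1) = \sigma(1)a + \delta(1) = a = a \cdot 1 $ and hence $ 1 \in K_a $. Second, $ \sigma $ is injective, being a unital ring endomorphism of a division ring (its kernel is a proper two-sided ideal, hence zero), so that $ \sigma(b^{-1}) = \sigma(b)^{-1} $ for every $ b \in \mathbb{F}^* $.

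For the additive structure, I would note that $ \mathcal{D}_a $ is additive (both $ \sigma $ and $ \delta $ are) and that the defining condition $ \mathcal{D}_a(b) = ab $ is preserved under subtraction: if $ b,c \in K_a $, then $ \mathcal{D}_a(b-c) = \mathcal{D}_a(b) - \mathcal{D}_a(c) = ab - ac = a(b-c) $, so $ b - c \in K_a $. Since $ K_a $ is nonempty (it contains $ 1 $), this makes $ K_a $ an additive subgroup of $ \mathbb{F} $. Closure under multiplication then follows directly from the product rule: for $ b,c \in K_a $,
$$ \mathcal{D}_a(bc) = \sigma(b)\mathcal{D}_a(c) + \delta(b)c = \sigma(b)(ac) + \delta(b)c = (\sigma(b)a + \delta(b))c = (ab)c = a(bc), $$
so $ bc \in K_a $ and $ K_a $ is a subring of $ \mathbb{F} $ containing $ 1 $.

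The one step that requires real care, and which I expect to be the main obstacle, is closure under inverses. Given $ b \in K_a $ with $ b \neq 0 $, I would apply the product rule to the identity $ b b^{-1} = 1 $, obtaining
$$ a = \mathcal{D}_a(1) = \mathcal{D}_a(b b^{-1}) = \sigma(b)\mathcal{D}_a(b^{-1}) + \delta(b) b^{-1}, $$
and solve for $ \mathcal{D}_a(b^{-1}) = \sigma(b)^{-1}(a - \delta(b)b^{-1}) $, which is legitimate since $ \sigma(b) \neq 0 $. Substituting $ \delta(b) = ab - \sigma(b)a $ (the rearranged membership condition $ b \in K_a $) yields $ a - \delta(b)b^{-1} = a - (a - \sigma(b)ab^{-1}) = \sigma(b)ab^{-1} $, whence $ \mathcal{D}_a(b^{-1}) = \sigma(b)^{-1}\sigma(b) a b^{-1} = ab^{-1} $, i.e. $ b^{-1} \in K_a $. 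Together with the additive-subgroup and multiplicative-closure steps, this shows that $ K_a $ is a sub-division-ring of $ \mathbb{F} $. The delicacy in the inversion step is purely non-commutative bookkeeping: one must keep $ \sigma(b) $ on its correct (left) side throughout and use $ \sigma(b^{-1}) = \sigma(b)^{-1} $ rather than naively ``dividing'', which is precisely where the division-ring hypothesis and the injectivity of $ \sigma $ enter.
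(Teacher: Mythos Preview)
Your proof is correct. The paper does not actually prove this lemma but merely cites it as a particular case of \cite[Lemma 3.2]{lam-leroy}; your direct verification via the twisted product rule $\mathcal{D}_a(bc) = \sigma(b)\mathcal{D}_a(c) + \delta(b)c$ supplies precisely the elementary argument that the paper omits, and all the non-commutative bookkeeping (especially in the inversion step) is handled correctly.
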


The proof of the next result is straightforward and is also given in \cite[Section 3]{hilbert90}:

\begin{lemma} [\textbf{\cite{hilbert90}}] \label{lemma operator polynomials right linear}
Given $ a \in \mathbb{F} $ and $ F \in \mathbb{F}[\mathcal{D}_a] $, the map $ b \mapsto F(b) $, for $ b \in \mathbb{F} $, is right linear over $ K_a $. That is, for all $ b,c \in \mathbb{F} $ and all $ \lambda \in K_a $, it holds that
$$ F(b+c) = F(b) + F(c) \quad \textrm{and} \quad F(b \lambda) = F(b) \lambda. $$
\end{lemma}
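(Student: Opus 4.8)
The plan is to reduce the statement for a general operator polynomial $ F $ to the single operator $ \mathcal{D}_a $, and then to verify the two identities for $ \mathcal{D}_a $ directly. Since $ F(b) = \sum_{i \in \mathbb{N}} F_i \mathcal{D}_a^i(b) $ is a left $ \mathbb{F} $-linear combination of the iterates $ \mathcal{D}_a^i(b) $, both desired properties descend from the individual maps $ b \mapsto \mathcal{D}_a^i(b) $: additivity is clearly stable under taking left-linear combinations, and if each iterate satisfies $ \mathcal{D}_a^i(b\lambda) = \mathcal{D}_a^i(b)\lambda $ for $ \lambda \in K_a $, then $ F(b\lambda) = \sum_i F_i \mathcal{D}_a^i(b)\lambda = \left( \sum_i F_i \mathcal{D}_a^i(b) \right)\lambda = F(b)\lambda $, the middle step being right distributivity in $ \mathbb{F} $. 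In turn, writing $ \mathcal{D}_a^{i+1} = \mathcal{D}_a \circ \mathcal{D}_a^i $ and using that a composition of additive (resp.\ right $ K_a $-linear) maps is again additive (resp.\ right $ K_a $-linear), an easy induction on $ i $ with base case $ \mathcal{D}_a^0 = {\rm Id} $ reduces everything to the two identities for $ \mathcal{D}_a $ itself.

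For $ \mathcal{D}_a(b) = \sigma(b)a + \delta(b) $, additivity is immediate: $ \sigma $ is additive as a ring endomorphism and $ \delta $ is additive as a $ \sigma $-derivation, so $ \mathcal{D}_a(b+c) = \mathcal{D}_a(b) + \mathcal{D}_a(c) $. For the right $ K_a $-linearity I would expand, for arbitrary $ \lambda \in \mathbb{F} $, using $ \sigma(b\lambda) = \sigma(b)\sigma(\lambda) $ and the derivation rule $ \delta(b\lambda) = \sigma(b)\delta(\lambda) + \delta(b)\lambda $:
$$ \mathcal{D}_a(b\lambda) = \sigma(b)\sigma(\lambda)a + \sigma(b)\delta(\lambda) + \delta(b)\lambda = \sigma(b)\left( \sigma(\lambda)a + \delta(\lambda) \right) + \delta(b)\lambda = \sigma(b)\mathcal{D}_a(\lambda) + \delta(b)\lambda. $$
Now the hypothesis $ \lambda \in K_a $, i.e.\ $ \mathcal{D}_a(\lambda) = a\lambda $, lets me finish: substituting gives $ \mathcal{D}_a(b\lambda) = \sigma(b)a\lambda + \delta(b)\lambda = (\sigma(b)a + \delta(b))\lambda = \mathcal{D}_a(b)\lambda $.

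There is no genuine obstacle here, as the paper itself notes; the computation is routine and the single point worth emphasizing is the role of the centralizer. The cross-term $ \sigma(b)\delta(\lambda) $ forced by the derivation rule combines with $ \sigma(b)\sigma(\lambda)a $ into exactly $ \sigma(b)\mathcal{D}_a(\lambda) $, and only the defining relation $ \mathcal{D}_a(\lambda) = a\lambda $ of $ K_a $ allows $ \lambda $ to be pulled out on the right. This is precisely why right linearity holds over $ K_a $ and generally fails over all of $ \mathbb{F} $.
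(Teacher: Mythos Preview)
Your argument is correct and is precisely the straightforward verification the paper alludes to; the paper does not spell out a proof but cites \cite[Section 3]{hilbert90} and calls the result straightforward, and your reduction to $\mathcal{D}_a$ via induction together with the direct computation using the defining relation $\mathcal{D}_a(\lambda)=a\lambda$ is exactly the intended route.
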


We now give the main connection between skew polynomial evaluation and linear operator polynomial evaluation. This result can be found in a more general form at the end of \cite[Theorem 2.8]{leroy-pol}. We give in Appendix \ref{app proof of th evaluation} an alternative proof using the language in our paper.

\begin{lemma} [\textbf{\cite{leroy-pol}}] \label{lemma evaluation of skew as operator}
Given $ a \in \mathbb{F} $, $ b \in \mathbb{F}^* $ and $ F \in \mathbb{F}[x;\sigma,\delta] $, and writing $ \mathcal{D} = \mathcal{D}_a $, it holds that
$$ F(\mathcal{D}(b)b^{-1}) = F^\mathcal{D}(b)b^{-1}. $$
\end{lemma}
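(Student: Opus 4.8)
The plan is to reduce everything to the monomials $ x^i $ and then run a single induction. Both sides of the claimed identity are left linear in $ F $: the map $ F \mapsto F(c) $, skew evaluation at the fixed point $ c := \mathcal{D}(b)b^{-1} \in \mathbb{F} $, is left linear, and so is $ F \mapsto F^{\mathcal{D}}(b) b^{-1} $, since $ F \mapsto F^{\mathcal{D}} $ is the left linear isomorphism of the definition and operator evaluation at $ b $ is left linear in the coefficients. Hence it suffices to prove $ x^i(c) = \mathcal{D}^i(b) b^{-1} $ for every $ i \in \mathbb{N} $, or equivalently $ x^i(c)\, b = \mathcal{D}^i(b) $.

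First I would record the recursion satisfied by the skew evaluations of the monomials. Writing $ r_i = x^i(c) $, the Remainder Theorem definition gives $ x^i = Q_i(x-c) + r_i $; multiplying on the left by $ x $ and using $ x r_i = \sigma(r_i) x + \delta(r_i) = \sigma(r_i)(x-c) + (\sigma(r_i)c + \delta(r_i)) $ from (\ref{eq product over constants and variables}), one reads off
$$ r_0 = 1, \qquad r_{i+1} = \sigma(r_i)\, c + \delta(r_i). $$
The one structural fact I would then isolate is that the chosen point satisfies $ \mathcal{D}(b) = cb $, which is merely the definition of $ c $ rewritten; this identity is the bridge that turns the operator recursion into the evaluation recursion above.

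The heart of the argument is an induction on $ i $ proving $ r_i\, b = \mathcal{D}^i(b) $. The base case $ i = 0 $ reads $ b = b $. For the inductive step, using that $ \sigma $ is a ring endomorphism and $ \delta $ a $ \sigma $-derivation,
$$ \mathcal{D}^{i+1}(b) = \mathcal{D}(r_i b) = \sigma(r_i b) a + \delta(r_i b) = \sigma(r_i)\big(\sigma(b)a + \delta(b)\big) + \delta(r_i)\, b. $$
Since $ \sigma(b)a + \delta(b) = \mathcal{D}(b) = cb $, the right-hand side factors as $ (\sigma(r_i) c + \delta(r_i)) b = r_{i+1} b $, closing the induction. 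Combined with left linearity this yields $ F^{\mathcal{D}}(b) = F(c)\, b $ for all $ F \in \mathbb{F}[x;\sigma,\delta] $, which is the claim after multiplying on the right by $ b^{-1} $.

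I do not expect a genuine obstacle here; the only delicate point is the noncommutative bookkeeping, namely keeping every ``conjugating'' factor on the correct side so that the endomorphism rule $ \sigma(r_i b) = \sigma(r_i)\sigma(b) $ and the $ \sigma $-derivation rule $ \delta(r_i b) = \sigma(r_i)\delta(b) + \delta(r_i) b $ line up to produce the common factor $ \sigma(r_i)\mathcal{D}(b) $. It is precisely this factor, via $ \mathcal{D}(b) = cb $, that forces the point $ c = \mathcal{D}(b)b^{-1} $ to be the correct conjugate of $ a $ at which the skew polynomial must be evaluated.
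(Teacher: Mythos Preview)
Your proof is correct and follows essentially the same route as the paper's Appendix~\ref{app proof of th evaluation}: reduce to monomials by left linearity, use the recursion $r_{i+1}=\sigma(r_i)c+\delta(r_i)$ for $r_i=x^i(c)$, and close an induction showing $r_i b=\mathcal{D}^i(b)$. The only cosmetic difference is that you derive the monomial recursion directly from the Remainder Theorem whereas the paper quotes it from \cite{lam-leroy}, and you run the induction on $r_i b=\mathcal{D}^i(b)$ rather than on $r_i=\mathcal{D}^i(b)b^{-1}$; the underlying computation is the same.
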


The functions defined by evaluating linear operator polynomials were already considered in \cite[Definition 3.1]{hilbert90}, without directly using the previous lemma.

\subsection{Sum-rank metrics from linearizing skew metrics} \label{subsec sum rank metric}

In this subsection, we will establish the connection between the skew metrics defined in Subsection \ref{subsec skew pol metrics} and sum-rank metrics \cite{multishot}. Our definition of sum-rank metrics is more general than that given in \cite{multishot}. The definition in \cite{multishot} is recovered by choosing a finite field $ \mathbb{F} $, and assuming $ n_1 = n_2 = \ldots = n_\ell $ and $ K_1 = K_2 = \ldots = K_\ell $.

For a division subring $ K \subseteq \mathbb{F} $ and a subset $ \mathcal{A} \subseteq \mathbb{F} $, we will denote by $ \langle \mathcal{A} \rangle_K^R $ the right subspace over $ K $ generated by $ \mathcal{A} $. We use $ \dim_K $ to denote its dimension over $ K $.

\begin{definition} [\textbf{Sum-rank metrics}] \label{def sum rank metric}
Let $ K_1, K_2, \ldots, K_\ell $ be division subrings of $ \mathbb{F} $. Given positive integers $ n_1, n_2, \ldots, n_\ell $ and $ n = n_1 + n_2 + \cdots + n_\ell $, we define the sum-rank weight in $ \mathbb{F}^n $ with lengths $ (n_1, n_2, \ldots, n_\ell) $ and division subrings $ (K_1, K_2, \ldots, K_\ell) $ as
$$ {\rm wt_{SR}}(\mathbf{c}) = {\rm wt_R}(\mathbf{c}^{(1)}) + {\rm wt_R}(\mathbf{c}^{(2)}) + \cdots + {\rm wt_R}(\mathbf{c}^{(\ell)}), $$
where $ \mathbf{c} = (\mathbf{c}^{(1)}, \mathbf{c}^{(2)}, \ldots, \mathbf{c}^{(\ell)}) \in \mathbb{F}^n $ and $ \mathbf{c}^{(i)} \in \mathbb{F}^{n_i} $, and where
$$ {\rm wt_R}(\mathbf{c}^{(i)}) = \dim_{K_i}( \langle c_1^{(i)}, c_2^{(i)}, \ldots, c_{n_i}^{(i)} \rangle_{K_i}^R), $$
for $ i = 1,2, \ldots, \ell $. We define the associated metric $ {\rm d_{SR}} : (\mathbb{F}^n)^2 \longrightarrow \mathbb{N} $ by 
$$ {\rm d_{SR}}(\mathbf{c}, \mathbf{d}) = {\rm wt_{SR}}(\mathbf{c} - \mathbf{d}), $$
for all $ \mathbf{c}, \mathbf{d} \in \mathbb{F}^n $.
\end{definition}

Observe that $ {\rm wt_{SR}} $ and $ {\rm d_{SR}} $ are indeed a weight and a metric, respectively, in $ \mathbb{F}^n $. Observe that sum-rank metrics extend the Hamming metric by choosing $ n_1 = n_2 = \ldots = n_\ell = 1 $, and they extend the rank metric by choosing $ \ell = 1 $.

To establish their relation with skew metrics, we need some results from \cite{lam} and \cite[Section 4]{lam-leroy} concerning how to (right) linearize the concept of P-independence. The following lemma is \cite[Theorem 4.5]{lam-leroy}:

\begin{lemma} [\textbf{\cite{lam-leroy}}] \label{lemma linearized in one conj class}
Fix $ a \in \mathbb{F} $ and $ \alpha_1, \alpha_2, \ldots, \alpha_n \in \mathbb{F}^* $ and define
$$ a_i = \mathcal{D}_a(\alpha_i)\alpha_i^{-1}, $$
for $ i = 1,2, \ldots, n $. If $ \Phi = \overline{\{ a_1, a_2, \ldots, a_n \}} $, then
$$ {\rm Rk}(\Phi) = \dim_{K_a}( \langle \alpha_1, \alpha_2, \ldots, \alpha_n \rangle_{K_a}^R ). $$
\end{lemma}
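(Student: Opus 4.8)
The plan is to pin down both sides through the dictionary of Lemma \ref{lemma evaluation of skew as operator}. Write $ \mathcal{D} = \mathcal{D}_a $, set $ d = \dim_{K_a}( \langle \alpha_1, \ldots, \alpha_n \rangle_{K_a}^R ) $ and $ r = {\rm Rk}(\Phi) = \deg(F_\Phi) $, and aim to prove $ r = d $ by two inequalities. The basic observation is that, since each $ \alpha_i \in \mathbb{F}^* $, Lemma \ref{lemma evaluation of skew as operator} gives $ H(a_i) = H(\mathcal{D}(\alpha_i)\alpha_i^{-1}) = H^{\mathcal{D}}(\alpha_i)\alpha_i^{-1} $ for every $ H \in \mathbb{F}[x;\sigma,\delta] $, so that $ H(a_i) = 0 $ if and only if $ H^{\mathcal{D}}(\alpha_i) = 0 $. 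Thus vanishing of a skew polynomial on $ \{ a_1, \ldots, a_n \} $ is equivalent, under the isomorphism $ H \mapsto H^{\mathcal{D}} $, to vanishing of the associated operator polynomial on $ \{ \alpha_1, \ldots, \alpha_n \} $, and by the right $ K_a $-linearity of operator evaluation (Lemma \ref{lemma operator polynomials right linear}) this is the same as vanishing on all of $ V = \langle \alpha_1, \ldots, \alpha_n \rangle_{K_a}^R $.

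For $ r \leq d $, I would argue by a dimension count. Choose a right $ K_a $-basis $ \beta_1, \ldots, \beta_d $ of $ V $. The operator polynomials of degree at most $ d $ form a left $ \mathbb{F} $-vector space of dimension $ d+1 $, and the evaluation map $ G \mapsto (G(\beta_1), \ldots, G(\beta_d)) \in \mathbb{F}^d $ is left $ \mathbb{F} $-linear into a $ d $-dimensional space, hence has nonzero kernel. Pick a nonzero $ G $ of degree at most $ d $ vanishing at all $ \beta_j $; by right $ K_a $-linearity it vanishes on $ V $, so on every $ \alpha_i $. Writing $ G = H^{\mathcal{D}} $ with $ H \neq 0 $ and $ \deg(H) \leq d $, the equivalence above yields $ H(a_i) = 0 $ for all $ i $, so $ H \in I(\{a_1,\ldots,a_n\}) = I(\Phi) $. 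Since $ I(\Phi) $ is generated as a left ideal by $ F_\Phi $, the polynomial $ F_\Phi $ is a right divisor of $ H $, whence $ r = \deg(F_\Phi) \leq \deg(H) \leq d $.

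For $ r \geq d $, I would apply the translation to $ F_\Phi $ itself: the operator polynomial $ F_\Phi^{\mathcal{D}} $ has degree $ r $, vanishes at each $ \alpha_i $, and hence vanishes on all of $ V $. It then suffices to prove the root bound that a nonzero operator polynomial of degree $ r $ has a right $ K_a $-kernel of dimension at most $ r $, for this gives $ d = \dim_{K_a} V \leq r $. I would prove this by induction on $ r $: if the kernel is nonzero, pick $ v \neq 0 $ in it; then $ \mathcal{D}(v)v^{-1} $ is a root of the corresponding skew polynomial, so $ x - \mathcal{D}(v)v^{-1} $ is a right factor and, using that $ H \mapsto H^{\mathcal{D}} $ carries skew multiplication to composition of operators (see \cite{lam-leroy, leroy-pol}), one factors $ F_\Phi^{\mathcal{D}} = G_1 \circ A_v $ with $ \deg(G_1) = r-1 $ and $ A_v(b) = \mathcal{D}(b) - (\mathcal{D}(v)v^{-1})b $. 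Since $ A_v $ is right $ K_a $-linear and $ \ker(F_\Phi^{\mathcal{D}}) = A_v^{-1}(\ker G_1) $, the dimension bound follows from $ \dim_{K_a}\ker(A_v) + \dim_{K_a}\ker(G_1) \leq 1 + (r-1) $ together with the inductive hypothesis.

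The main obstacle is the base ingredient $ \dim_{K_a}\ker(A_v) \leq 1 $, equivalently $ \ker(A_v) = \langle v \rangle_{K_a}^R $. This reduces to the statement that if $ b \neq 0 $ satisfies $ \mathcal{D}(b)b^{-1} = \mathcal{D}(v)v^{-1} $, i.e.\ $ b $ and $ v $ share the same $ (\sigma,\delta) $-conjugate, then $ b \in vK_a^* $; that is, the stabilizer of the conjugacy action of $ \mathbb{F}^* $ is exactly $ K_a^* $. This is the one fact I would import from the conjugacy analysis of \cite{lam-leroy, algebraic-conjugacy}, and it is precisely the point where the division-ring structure of $ K_a $ enters. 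A more economical alternative would be to replace the entire inductive argument by the degree bound on zero sets of linear operator polynomials in \cite[Theorem 2.1]{leroy-noncommutative}, applied to $ F_\Phi^{\mathcal{D}} $, but I would prefer the self-contained induction since it keeps the proof within the machinery already developed here.
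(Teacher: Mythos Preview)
The paper does not supply a proof of this lemma; it is quoted verbatim as \cite[Theorem~4.5]{lam-leroy}. Your argument is a correct and essentially self-contained reconstruction of that result within the machinery already assembled in the paper. The two inequalities via the dictionary of Lemma~\ref{lemma evaluation of skew as operator} are the natural ones, and the inductive kernel bound you give for $r\geq d$ is the standard route (and is precisely the content of \cite[Theorem~2.1]{leroy-noncommutative} that the paper mentions in a footnote as an alternative path).

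Two small comments on execution. First, the multiplicativity you invoke, that $H\mapsto H^{\mathcal{D}}$ takes the skew product to composition of operators, is not stated in this paper but is a one-line check from the identity $\mathcal{D}_a\circ(c\cdot)=(\sigma(c)\cdot)\circ\mathcal{D}_a+(\delta(c)\cdot)$, which mirrors $xc=\sigma(c)x+\delta(c)$; your citation of \cite{leroy-pol} is apt. Second, the ``main obstacle'' you isolate does not need to be imported: writing $w=v^{-1}b$ and expanding $\mathcal{D}_a(vw)=\sigma(v)\mathcal{D}_a(w)+\delta(v)w$ shows that $\mathcal{D}_a(b)b^{-1}=\mathcal{D}_a(v)v^{-1}$ is equivalent to $\mathcal{D}_a(w)=aw$, i.e.\ $w\in K_a$, so $\ker A_v=vK_a$ has $K_a$-dimension $1$. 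With that, your proof stands on its own.
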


Hence we deduce the following:

\begin{corollary} \label{cor linearized in one conj class}
Fix $ a \in \mathbb{F} $, let $ \alpha_i, \beta_i \in \mathbb{F}^* $ and define
$$ a_i = \mathcal{D}_a(\alpha_i)\alpha_i^{-1} \quad \textrm{and} \quad b_i = \mathcal{D}_a(\beta_i)\beta_i^{-1}, $$
for $ i = 1,2, \ldots, n $. Then the following are equivalent:
\begin{enumerate}
\item
$ \mathcal{A} = \{ a_1, a_2, \ldots, a_n \} $ and $ \mathcal{B} = \{ b_1, b_2, \ldots, b_n \} $ are P-bases of the same P-closed set $ \Omega $.
\item
$ \mathcal{A}_\mathcal{D} = \{ \alpha_1, \alpha_2, \ldots, \alpha_n \} $ and $ \mathcal{B}_\mathcal{D} = \{ \beta_1, \beta_2, \ldots, \beta_n \} $ are bases of the same right vector space over $ K_a $.
\item
$ \alpha_1, \alpha_2, \ldots, \alpha_n $ are right linearly independent over $ K_a $ and there exists an invertible matrix $ A \in K_a^{n \times n} $ (there exists $ B \in K_a^{n \times n} $ such that $ AB = BA = I $) such that
\begin{equation}
\boldsymbol\beta = \boldsymbol\alpha A,
\label{eq linearized change of P-basis}
\end{equation}
where $ \boldsymbol\alpha = (\alpha_1, \alpha_2, \ldots, \alpha_n) \in \mathbb{F}^n $ and $ \boldsymbol\beta = (\beta_1, \beta_2, \ldots, \beta_n) \in \mathbb{F}^n $.
\end{enumerate}
\end{corollary}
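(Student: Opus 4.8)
The plan is to establish the chain $(1) \Leftrightarrow (2) \Leftrightarrow (3)$, treating $(1) \Leftrightarrow (2)$ by transporting ranks of P-closed sets into right $K_a$-dimensions through Lemma \ref{lemma linearized in one conj class}, and treating $(2) \Leftrightarrow (3)$ as elementary linear algebra over the division ring $K_a$. Throughout I write $V = \langle \alpha_1, \ldots, \alpha_n \rangle_{K_a}^R$ and $W = \langle \beta_1, \ldots, \beta_n \rangle_{K_a}^R$, and I use freely that a finite subset of $\mathbb{F}$ is a P-basis of its own P-closure exactly when its cardinality equals the rank of that closure (the cardinality bound and P-independence criterion recorded in the lemma preceding Corollary \ref{cor charact P-bases}).

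For $(1) \Leftrightarrow (2)$, I would apply Lemma \ref{lemma linearized in one conj class} three times: to the family $(\alpha_1,\ldots,\alpha_n)$, to $(\beta_1,\ldots,\beta_n)$, and to the concatenated family $(\alpha_1,\ldots,\alpha_n,\beta_1,\ldots,\beta_n)$, obtaining $ {\rm Rk}(\overline{\mathcal{A}}) = \dim_{K_a}(V) $, $ {\rm Rk}(\overline{\mathcal{B}}) = \dim_{K_a}(W) $ and $ {\rm Rk}(\overline{\mathcal{A} \cup \mathcal{B}}) = \dim_{K_a}(V + W) $. The first identity turns ``$\mathcal{A}$ is a P-basis of $\overline{\mathcal{A}}$ with $ {\rm Rk}(\overline{\mathcal{A}}) = n $'' into ``$\alpha_1,\ldots,\alpha_n$ are right $K_a$-linearly independent'', i.e. $\dim_{K_a}(V) = n$ (distinctness of the $a_i$ being forced by the cardinality bound), and symmetrically for $\mathcal{B}$. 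It then remains to match $\overline{\mathcal{A}} = \overline{\mathcal{B}}$ with $V = W$. Since $\overline{\mathcal{A}}, \overline{\mathcal{B}} \subseteq \overline{\mathcal{A} \cup \mathcal{B}}$, and since two P-closed sets in containment coincide as soon as they share the same rank (a P-basis of the smaller set is maximal P-independent, hence a P-basis of the larger, by Corollaries \ref{cor charact P-bases} and \ref{cor any two basis same size}), the existence of a common P-closed set $\Omega = \overline{\mathcal{A}} = \overline{\mathcal{B}}$ is equivalent to $ {\rm Rk}(\overline{\mathcal{A} \cup \mathcal{B}}) = n $. By the third identity this reads $\dim_{K_a}(V + W) = n$, which, given $\dim_{K_a}(V) = \dim_{K_a}(W) = n$, is equivalent to $V = W$. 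This closes $(1) \Leftrightarrow (2)$.

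For $(2) \Leftrightarrow (3)$ I would work entirely inside $\mathbb{F}$ as a right $K_a$-vector space. Assuming (2), expressing each $\beta_j$ in the basis $(\alpha_1, \ldots, \alpha_n)$ of $V = W$ produces a matrix $A \in K_a^{n \times n}$ with $\boldsymbol\beta = \boldsymbol\alpha A$, invertible because both families are bases of the same space; conversely, right linear independence of $\boldsymbol\alpha$ together with an invertible $A$ forces the $\beta_j$ to be a basis of $V$, since $\boldsymbol\alpha = \boldsymbol\beta A^{-1}$ shows $W = V$ and $\dim_{K_a}(W) = n$. The only care needed is that change of basis over a division ring behaves as in the commutative case, with the scalars $A_{ij} \in K_a$ acting on the right.

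I expect the main obstacle to be the passage from ``same P-closed set'' to ``$V = W$'' in $(1) \Leftrightarrow (2)$: comparing $\overline{\mathcal{A}}$ and $\overline{\mathcal{B}}$ directly is awkward, and the clean device is to route the comparison through the union family $\mathcal{A} \cup \mathcal{B}$, so that Lemma \ref{lemma linearized in one conj class} applies verbatim to the concatenation and equality of P-closed sets reduces to equality of ranks under containment.
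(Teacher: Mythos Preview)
Your proposal is correct and follows essentially the same approach as the paper: both reduce $(2)\Leftrightarrow(3)$ to right linear algebra over $K_a$, and both deduce $(1)\Leftrightarrow(2)$ from Lemma \ref{lemma linearized in one conj class}. The only organizational difference is that the paper handles $(1)\Leftrightarrow(2)$ by contradiction---adjoining a single $\beta_j$ (resp.\ $b_j$) to the $\alpha$'s (resp.\ $a$'s) to force rank/dimension $n+1$---whereas you argue directly by applying the lemma to the full concatenation $(\alpha_1,\ldots,\alpha_n,\beta_1,\ldots,\beta_n)$ and comparing ${\rm Rk}(\overline{\mathcal{A}\cup\mathcal{B}})$ with $\dim_{K_a}(V+W)$; this is a cosmetic variation of the same idea.
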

\begin{proof}
Items 2 and 3 are equivalent by basic right linear algebra. 

Assume now that Item 1 holds, therefore the sets $ \mathcal{A}_\mathcal{D} $ and $ \mathcal{B}_\mathcal{D} $ are right linearly independent by the previous lemma. If Item 2 does not hold, then there exists $ 1 \leq j \leq n $ such that $ \beta_j \notin \langle \alpha_1, \alpha_2, \ldots, \alpha_n \rangle_{K_a}^R $. Hence $ \dim_{K_a}(\langle \beta_j, \alpha_1, \alpha_2, \ldots, \alpha_n \rangle_{K_a}^R) = n+1 $. By the previous lemma we deduce that $ {\rm Rk}(\Omega) \geq n+1 $, which is a contradiction since $ \mathcal{A} $ is a P-basis of $ \Omega $ and has $ n $ elements.

Assume now that Item 2 holds, therefore the sets $ \mathcal{A} $ and $ \mathcal{B} $ are P-independent by the previous lemma. If Item 1 does not hold, then there exists $ 1 \leq j \leq n $ such that $ b_j \notin \overline{\mathcal{A}} $. Hence $ \mathcal{A}^\prime = \mathcal{A} \cup \{ b_j \} $ is P-independent, thus $ {\rm Rk}(\Phi) = n+1 $, where $ \Phi = \overline{\mathcal{A}^\prime} $. By the previous lemma we deduce that $ \dim_{K_a}(\langle \beta_j, \alpha_1, \alpha_2, \ldots, \alpha_n \rangle_{K_a}^R) = n+1 $, which contradicts Item 2.
\end{proof}

To be able to use different elements $ a \in \mathbb{F} $, we need the concept of conjugacy \cite{lam, lam-leroy}:

\begin{definition} [\textbf{Conjugacy \cite{lam, lam-leroy}}]
We define the conjugacy relation $ \sim $ in $ \mathbb{F} $ by $ a \sim c $ if $ c = \mathcal{D}_a(b)b^{-1} = \sigma(b)ab^{-1} + \delta(b)b^{-1} $, for some $ b \in \mathbb{F}^* $. Given $ a \in \mathbb{F} $, we define its conjugacy class as
$$ C(a) = \{ \mathcal{D}_a(b)b^{-1} \mid b \in \mathbb{F}^* \}. $$
\end{definition}

It is easy to check that $ \sim $ is an equivalence relation in $ \mathbb{F} $, thus conjugacy classes give a partition of $ \mathbb{F} $. The importance of conjugacy is given by the following result, which is \cite[Theorem 23]{lam} (see also \cite[Section 4]{lam-leroy}):

\begin{lemma} [\textbf{\cite{lam}}] \label{lemma ranks partition conjugacy}
Let $ a^{(1)}, a^{(2)}, \ldots, a^{(\ell)} \in \mathbb{F} $ belong to pair-wise distinct conjugacy classes. Let $ \mathcal{B}_i \subseteq C(a^{(i)}) $ be a finite set and define $ \Omega_i = \overline{\mathcal{B}_i} $, for $ i = 1,2, \ldots, \ell $. If $ \mathcal{B} = \mathcal{B}_1 \cup \mathcal{B}_2 \cup \ldots \cup \mathcal{B}_\ell $ and $ \Omega = \overline{\mathcal{B}} $, then it holds that 
$$ {\rm Rk}(\Omega) = {\rm Rk}(\Omega_1) + {\rm Rk}(\Omega_2) + \cdots + {\rm Rk}(\Omega_\ell). $$
In particular, $ \mathcal{B} $ is P-independent if, and only if, so is $ \mathcal{B}_i $ for all $ i = 1,2, \ldots, \ell $.
\end{lemma}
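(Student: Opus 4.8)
The plan is to reduce the statement to the additivity of ranks across distinct conjugacy classes and then to isolate the single fact about conjugacy that makes the whole argument work.

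First I would record two preliminary reductions. Since distinct conjugacy classes are disjoint, the sets $ \mathcal{B}_1, \ldots, \mathcal{B}_\ell $ are pairwise disjoint, so $ \# \mathcal{B} = \sum_{i=1}^\ell \# \mathcal{B}_i $; moreover, since P-closure is idempotent and monotone, $ \Omega = \overline{\mathcal{B}} = \overline{\Omega_1 \cup \cdots \cup \Omega_\ell} $. Thus it suffices to prove the rank identity for this P-closure. The ``in particular'' clause then follows for free: from $ \deg(F_{\Omega_i}) \leq \# \mathcal{B}_i $ we get $ {\rm Rk}(\Omega_i) \leq \# \mathcal{B}_i $, with equality exactly when $ \mathcal{B}_i $ is P-independent, so once the rank identity is known, $ \# \mathcal{B} = {\rm Rk}(\Omega) = \sum_i {\rm Rk}(\Omega_i) $ forces $ \# \mathcal{B}_i = {\rm Rk}(\Omega_i) $ for every $ i $, i.e.\ every $ \mathcal{B}_i $ is P-independent, and conversely.

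For the rank identity I would induct on $ \ell $, the case $ \ell = 1 $ being trivial. Writing $ \Delta = \overline{\Omega_1 \cup \cdots \cup \Omega_{\ell-1}} $, the inductive hypothesis gives $ {\rm Rk}(\Delta) = \sum_{i=1}^{\ell-1} {\rm Rk}(\Omega_i) $ and $ \Omega = \overline{\Delta \cup \Omega_\ell} $. Now I would invoke the modular identity for minimal skew polynomials, $ \deg(F_{\Delta \cup \Omega_\ell}) + \deg(F_{\Delta \cap \Omega_\ell}) = \deg(F_\Delta) + \deg(F_{\Omega_\ell}) $, which is \cite[Theorem 7]{matroidal} and was already used in the proof of Proposition \ref{prop properties of weights}. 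If I can show $ \Delta \cap \Omega_\ell = \emptyset $, then $ F_{\Delta \cap \Omega_\ell} = 1 $ has degree $ 0 $, and since $ F_{\Delta \cup \Omega_\ell} = F_\Omega $ the identity collapses to $ {\rm Rk}(\Omega) = {\rm Rk}(\Delta) + {\rm Rk}(\Omega_\ell) = \sum_{i=1}^\ell {\rm Rk}(\Omega_i) $, completing the induction.

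Everything therefore rests on $ \Delta \cap \Omega_\ell = \emptyset $, which I would deduce from a single sublemma: P-closure cannot create zeros outside the conjugacy classes one started in, i.e.\ if $ \Xi \subseteq C(a^{(1)}) \cup \cdots \cup C(a^{(\ell-1)}) $ then $ \overline{\Xi} \subseteq C(a^{(1)}) \cup \cdots \cup C(a^{(\ell-1)}) $. Granting this for $ \Xi = \Omega_1 \cup \cdots \cup \Omega_{\ell-1} $ yields $ \Delta \subseteq \bigcup_{i<\ell} C(a^{(i)}) $, which is disjoint from $ \Omega_\ell \subseteq C(a^{(\ell)}) $ as the classes are pairwise distinct. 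To prove the sublemma I would combine two conjugacy facts from \cite{lam, lam-leroy}: (i) the minimal polynomial of a finite P-closed set with P-basis $ \{ b_1, \ldots, b_m \} $ factors as $ F_{\overline{\Xi}} = (x - d_m) \cdots (x - d_1) $ with each $ d_k $ conjugate to $ b_k $; and (ii) every element of $ Z(f) = \overline{\Xi} $ for such a fully reducible $ f $ is conjugate to one of the factor roots $ d_k $ (a Gordon--Motzkin type bound on which classes $ Z(f) $ can meet). Together these show every point of $ \overline{\Xi} $ is conjugate to some $ b_k \in \Xi $, hence stays in the same union of classes. I expect this conjugacy-invariance of the root classes to be the genuine obstacle: the modular formula and the bookkeeping are routine, whereas controlling which conjugacy classes appear in $ Z(F_{\overline{\Xi}}) $ is exactly the nontrivial input encoding that distinct conjugacy classes behave independently.
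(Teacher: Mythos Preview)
The paper does not supply its own proof of this lemma: it is quoted verbatim as \cite[Theorem 23]{lam} and left unproved, so there is no in-paper argument to compare against.

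That said, your outline is sound and essentially reconstructs the standard argument. The bookkeeping reductions and the derivation of the ``in particular'' clause from the rank identity are correct. The inductive step via the modular identity $\deg(F_{\Delta\cup\Omega_\ell})+\deg(F_{\Delta\cap\Omega_\ell})=\deg(F_\Delta)+\deg(F_{\Omega_\ell})$ is exactly the right mechanism, and you have correctly isolated the only nontrivial ingredient: that $\overline{\Xi}$ meets no conjugacy class beyond those meeting $\Xi$. Your proposed justification of this sublemma via the factorization $F_{\overline{\Xi}}=(x-d_m)\cdots(x-d_1)$ with each $d_k$ conjugate to a P-basis element, together with the product rule that any root of such a product is conjugate to one of the $d_k$, is precisely the argument in \cite{lam, lam-leroy}; indeed, the paper itself invokes the same product rule \cite[Theorem 2.7]{lam-leroy} in the very next corollary to show $Z(F_{\Omega\cap C(a)})\subseteq C(a)$. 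One minor remark: you cite \cite[Theorem 7]{matroidal} for the modular identity, matching the paper's usage in Proposition~\ref{prop properties of weights}, but be aware that \cite{matroidal} is written over finite fields; the identity itself holds over any division ring and is already implicit in the matroid structure established in \cite{lam, algebraic-conjugacy}, so there is no genuine gap, only a citation to tighten.
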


Hence we may deduce the following fact on the partition in conjugacy classes of P-bases of a P-closed set:

\begin{corollary}
Let $ \Omega \subseteq \mathbb{F} $ be a P-closed set with P-basis $ \mathcal{B} $. If $ a \in \mathbb{F} $ and $ \mathcal{B}_a = \mathcal{B} \cap C(a) $, then it holds that
$$ \overline{\mathcal{B}_a} = \Omega \cap C(a). $$
\end{corollary}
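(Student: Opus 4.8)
The plan is to establish the two set inclusions separately, reducing both to the rank-additivity across conjugacy classes provided by Lemma \ref{lemma ranks partition conjugacy}, together with the basic properties of P-closure and P-bases from Subsection \ref{subsec skew polynomials division rings}. First I would record two preliminary observations. Since $ \mathcal{B}_a = \mathcal{B} \cap C(a) $ is a subset of the P-basis $ \mathcal{B} $, it is itself P-independent, hence a P-basis of its own P-closure, so that $ {\rm Rk}(\overline{\mathcal{B}_a}) = \# \mathcal{B}_a $. Moreover, from $ \mathcal{B}_a \subseteq \Omega $ and the monotonicity of P-closure (if $ S \subseteq T $ then $ I(S) \supseteq I(T) $, hence $ \overline{S} \subseteq \overline{T} $), together with $ \overline{\Omega} = \Omega $, we get $ \overline{\mathcal{B}_a} \subseteq \Omega $.

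The inclusion $ \overline{\mathcal{B}_a} \subseteq \Omega \cap C(a) $ then only requires showing $ \overline{\mathcal{B}_a} \subseteq C(a) $, i.e. that the P-closure does not escape the conjugacy class. I would argue this by contradiction: suppose $ d \in \overline{\mathcal{B}_a} $ with $ d \notin C(a) $. Applying Lemma \ref{lemma ranks partition conjugacy} to the finite sets $ \mathcal{B}_a \subseteq C(a) $ and $ \{ d \} \subseteq C(d) $, which lie in distinct conjugacy classes, yields $ {\rm Rk}(\overline{\mathcal{B}_a \cup \{ d \}}) = {\rm Rk}(\overline{\mathcal{B}_a}) + {\rm Rk}(\overline{\{ d \}}) = \# \mathcal{B}_a + 1 $, since $ F_{\{ d \}} = x - d $ has degree $ 1 $. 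On the other hand, $ d \in \overline{\mathcal{B}_a} $ forces $ \mathcal{B}_a \cup \{ d \} \subseteq \overline{\mathcal{B}_a} $ and hence $ \overline{\mathcal{B}_a \cup \{ d \}} = \overline{\mathcal{B}_a} $, whose rank is $ \# \mathcal{B}_a $. This contradiction gives $ \overline{\mathcal{B}_a} \subseteq C(a) $, and combined with $ \overline{\mathcal{B}_a} \subseteq \Omega $ we obtain the first inclusion.

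For the reverse inclusion $ \Omega \cap C(a) \subseteq \overline{\mathcal{B}_a} $, I would take $ c \in \Omega \cap C(a) $ and assume $ c \notin \overline{\mathcal{B}_a} $. Using that the P-independent set $ \mathcal{B}_a $ extends to a P-basis of $ \overline{\mathcal{B}_a \cup \{ c \}} $ (Corollary \ref{cor charact P-bases}) together with the degree bound $ \deg(F_\Omega) \leq \# \Omega $ and its equality characterization of P-independence, one checks that $ \mathcal{B}_a \cup \{ c \} $ is again P-independent. Now let $ C(a), C_2, \ldots, C_\ell $ be the distinct conjugacy classes met by $ \mathcal{B} $, with $ \mathcal{B}_j = \mathcal{B} \cap C_j $. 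Since $ c \in C(a) $, the conjugacy decomposition of $ \mathcal{B} \cup \{ c \} $ has $ C(a) $-part $ \mathcal{B}_a \cup \{ c \} $ and remaining parts $ \mathcal{B}_2, \ldots, \mathcal{B}_\ell $, all P-independent; by the ``in particular'' clause of Lemma \ref{lemma ranks partition conjugacy}, $ \mathcal{B} \cup \{ c \} $ is P-independent. As $ c \in \Omega $, this exhibits a P-independent subset of $ \Omega $ with $ \# \mathcal{B} + 1 = n + 1 $ elements, contradicting $ {\rm Rk}(\Omega) = n $ by Corollary \ref{cor any two basis same size}. Hence $ c \in \overline{\mathcal{B}_a} $. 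Note that this argument also covers the degenerate case in which $ \mathcal{B} \cap C(a) = \emptyset $, where it shows both sides are empty.

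The step I expect to be the main obstacle is the containment $ \overline{\mathcal{B}_a} \subseteq C(a) $ in the second paragraph: it is the only place where one must control set membership, not merely cardinalities, across different conjugacy classes. The key realization is that this too follows purely from rank-additivity, since a hypothetical zero $ d $ outside $ C(a) $ would be forced to raise the rank of $ \overline{\mathcal{B}_a} $ by one while leaving the P-closure unchanged. With this in hand, the whole statement rests only on Lemma \ref{lemma ranks partition conjugacy} and the matroid-type properties of P-bases already recorded in the excerpt, requiring no results external to it.
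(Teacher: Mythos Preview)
Your proof is correct. Both inclusions are argued soundly: the first via a rank-jump contradiction that forces $\overline{\mathcal{B}_a}\subseteq C(a)$, and the second by extending $\mathcal{B}_a$ to $\mathcal{B}_a\cup\{c\}$ and invoking the ``in particular'' clause of Lemma~\ref{lemma ranks partition conjugacy} to contradict the maximality of $\mathcal{B}$.

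Your route differs from the paper's in organization and in one external input. The paper first shows that $\Omega\cap C(a)$ is itself P-closed, using the product rule \cite[Theorem 2.7]{lam-leroy} to see that all roots of $F_{\Omega\cap C(a)}$ remain in $C(a)$; once that is known, $\mathcal{B}_a\subseteq\Omega\cap C(a)$ immediately yields $\overline{\mathcal{B}_a}\subseteq\Omega\cap C(a)$. The paper then concludes by a \emph{global} rank count: summing ${\rm Rk}(\Omega\cap C(a))$ over the conjugacy classes gives at most ${\rm Rk}(\Omega)$, while summing ${\rm Rk}(\overline{\mathcal{B}_a})$ gives exactly ${\rm Rk}(\Omega)$, so each pair of nested P-closed sets has equal rank and hence coincides. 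Your argument instead handles both inclusions \emph{elementwise} and never invokes the product rule, relying only on Lemma~\ref{lemma ranks partition conjugacy} and the matroid-type facts already recorded in Subsection~\ref{subsec skew polynomials division rings}. The payoff of your approach is self-containment (no external citation beyond Lemma~\ref{lemma ranks partition conjugacy}); the payoff of the paper's approach is that it establishes the auxiliary fact that $\Omega\cap C(a)$ is P-closed, and then dispatches all conjugacy classes simultaneously in one counting step rather than one element at a time.
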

\begin{proof}
First, $ \Omega \cap C(a) $ is P-closed by the following argument: $ F_{\Omega \cap C(a)} $ divides $ F_\Omega $ on the right, hence is not zero. By that fact and the product rule \cite[Theorem 2.7]{lam-leroy}, every root of $ F_{\Omega \cap C(a)} $ lies in $ C(a) $, hence $ Z(F_{\Omega \cap C(a)}) \subseteq \Omega \cap C(a) $.

Next, since they belong to different conjugacy classes, we have by Lemma \ref{lemma ranks partition conjugacy} that
$$ \sum_{a \in \mathbb{F}} {\rm Rk}(\Omega \cap C(a)) \leq {\rm Rk}(\Omega), $$
running over disjoint conjugacy classes. By Lemma \ref{lemma ranks partition conjugacy}, we also have that
$$ {\rm Rk}(\Omega) = \sum_{a \in \mathbb{F}} {\rm Rk}(\overline{\mathcal{B}_a}), $$
again running over disjoint conjugacy classes. Since $ \overline{\mathcal{B}_a} \subseteq \Omega \cap C(a) $, we conclude that $ {\rm Rk}(\overline{\mathcal{B}_a}) = {\rm Rk}(\Omega \cap C(a)) $, and the result follows.
\end{proof}

We may now establish the first main result of this section. This result gives a correspondence between the isometries for the skew metric in Definition \ref{def change of P-basis} and some isometries for the sum-rank metric:

\begin{theorem} \label{th big commutative diagram}
Let $ \Omega \subseteq \mathbb{F} $ be a P-closed set with P-bases $ \mathcal{A} $ and $ \mathcal{B} $, and let 
$$ \mathcal{A}_i = \mathcal{A} \cap C(a^{(i)}) \quad \textrm{and} \quad \mathcal{B}_i = \mathcal{B} \cap C(a^{(i)}), $$
for $ i = 1,2, \ldots, \ell $, be non-empty pair-wise disjoint P-independent sets with $ \mathcal{A} = \mathcal{A}_1 \cup \mathcal{A}_2 \cup \ldots \cup \mathcal{A}_\ell $ and $ \mathcal{B} = \mathcal{B}_1 \cup \mathcal{B}_2 \cup \ldots \cup \mathcal{B}_\ell $. For each $ i = 1,2, \ldots, \ell $, let $ \mathcal{A}_i = \{ a_1^{(i)}, a_2^{(i)}, \ldots, a_{n_i}^{(i)} \} $ and $ \mathcal{B}_i = \{ b_1^{(i)}, b_2^{(i)}, \ldots, b_{n_i}^{(i)} \} $ (both have the same size by the previous corollary), and let $ \alpha_j^{(i)}, \beta_j^{(i)} \in \mathbb{F}^* $ be such that
$$ a_j^{(i)} = \mathcal{D}_{a^{(i)}} \left( \alpha_j^{(i)} \right) \left( \alpha_j^{(i)} \right)^{-1} \quad \textrm{and} \quad b_j^{(i)} = \mathcal{D}_{a^{(i)}} \left( \beta_j^{(i)} \right) \left( \beta_j^{(i)} \right)^{-1}, $$
for $ j = 1,2, \ldots, n_i $. Finally, write $ n = {\rm Rk}(\Omega) = n_1 + n_2 + \cdots + n_\ell $ and define the left linear maps $ \phi_\mathcal{A} : \mathbb{F}^\mathcal{A} \longrightarrow \mathbb{F}^n $ and $ \phi_\mathcal{B} : \mathbb{F}^\mathcal{B} \longrightarrow \mathbb{F}^n $ by $ \phi_\mathcal{A}(f) = (\mathbf{c}^{(1)}, \mathbf{c}^{(2)}, \ldots, \mathbf{c}^{(\ell)}) $, where $ \mathbf{c}^{(i)} \in \mathbb{F}^{n_i} $ and
$$ c^{(i)}_j = F^{\mathcal{D}_{a^{(i)}}} \left( \alpha_j^{(i)} \right) , $$
for all $ f \in \mathbb{F}^\mathcal{A} $, where $ f = E_\mathcal{A}(F) $ and $ F \in \mathbb{F}[x; \sigma,\delta]_n $, for $ j = 1,2, \ldots, n_i $ and $ i = 1,2, \ldots, \ell $. Analogously for $ \mathcal{B} $. Then the following diagram is commutative, where all maps are left vector space isomorphisms:
\begin{displaymath}
\begin{array}{rcccl}
 & \mathbb{F}^\mathcal{A} & \stackrel{\pi_{\mathcal{A},\mathcal{B}}}{\longrightarrow} & \mathbb{F}^\mathcal{B} & \\
\phi_\mathcal{A} & \downarrow &  & \downarrow & \phi_\mathcal{B} \\
 & \mathbb{F}^n & \stackrel{\pi_{A}}{\longrightarrow} & \mathbb{F}^n & \\
\end{array}
\end{displaymath}
where $ \pi_A(\mathbf{c}) = \mathbf{c} A $, for $ \mathbf{c} \in \mathbb{F}^n $, and 
\begin{displaymath}
A = \left( \begin{array}{cccc}
A_1 & 0 & \ldots & 0 \\
0 & A_2 & \ldots & 0 \\
\vdots & \vdots & \ddots & \vdots \\
0 & 0 & \ldots & A_\ell \\
\end{array} \right) \in \mathbb{F}^{n \times n},
\end{displaymath}
where $ A_i \in K_{a^{(i)}}^{n_i \times n_i} $ is invertible and is given as in (\ref{eq linearized change of P-basis}) for $ \boldsymbol\alpha_i = (\alpha_1^{(i)}, \alpha_2^{(i)}, \ldots, \alpha_{n_i}^{(i)}) \in \mathbb{F}^{n_i} $ and $ \boldsymbol\beta_i = (\beta_1^{(i)}, \beta_2^{(i)}, \ldots, \beta_{n_i}^{(i)}) \in \mathbb{F}^{n_i} $, for $ i = 1,2, \ldots, \ell $.
\end{theorem}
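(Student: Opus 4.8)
The plan is to show the four maps are left vector space isomorphisms and that the square commutes; commutativity together with any three of the four maps being isomorphisms already forces the fourth, but in fact all four can be handled directly. The map $ \pi_{\mathcal{A},\mathcal{B}} $ is a left vector space isomorphism by the proposition following Definition \ref{def change of P-basis}. For $ \pi_A $, I would first argue that within each conjugacy class the sets $ \mathcal{A}_i $ and $ \mathcal{B}_i $ are P-bases of the common P-closed set $ \Omega \cap C(a^{(i)}) $: they are P-independent by hypothesis, and their P-closures both equal $ \Omega \cap C(a^{(i)}) $ by the previous corollary. Corollary \ref{cor linearized in one conj class} (item 1 $ \Rightarrow $ item 3), applied with $ a = a^{(i)} $, then produces an invertible matrix $ A_i \in K_{a^{(i)}}^{n_i \times n_i} $ with $ \boldsymbol\beta_i = \boldsymbol\alpha_i A_i $. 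Hence the block-diagonal $ A $ is invertible over $ \mathbb{F} $, and $ \pi_A $, being right multiplication by an invertible matrix, is a left linear bijection, since left scalars pass through right matrix multiplication.

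The essential tool for the $ \phi $ maps is Lemma \ref{lemma evaluation of skew as operator}. Writing $ f = E_\mathcal{A}(F) $ with $ F \in \mathbb{F}[x; \sigma, \delta]_n $ the unique interpolant from Lemma \ref{lemma lagrange interpolation}, and using $ a_j^{(i)} = \mathcal{D}_{a^{(i)}}(\alpha_j^{(i)})(\alpha_j^{(i)})^{-1} $, the lemma gives
$$ c_j^{(i)} = F^{\mathcal{D}_{a^{(i)}}}(\alpha_j^{(i)}) = F(a_j^{(i)}) \alpha_j^{(i)} = f(a_j^{(i)}) \alpha_j^{(i)}. $$
Thus $ \phi_\mathcal{A} $ is just the coordinate-wise right scaling of the values of $ f $ on $ \mathcal{A} $ by the nonzero constants $ \alpha_j^{(i)} $, reindexed by conjugacy classes. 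It is visibly left linear, and injective because each $ \alpha_j^{(i)} \in \mathbb{F}^* $, hence a left vector space isomorphism; the same holds for $ \phi_\mathcal{B} $.

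It remains to prove commutativity, $ \pi_A \circ \phi_\mathcal{A} = \phi_\mathcal{B} \circ \pi_{\mathcal{A},\mathcal{B}} $, and this is where the real content lies. Fix $ f = E_\mathcal{A}(F) $; then $ \pi_{\mathcal{A},\mathcal{B}}(f) = E_\mathcal{B}(F) $ with the \emph{same} $ F $ by definition of the change-of-P-basis map, so the $ (i,j) $ entry of $ \phi_\mathcal{B}(\pi_{\mathcal{A},\mathcal{B}}(f)) $ is $ F^{\mathcal{D}_{a^{(i)}}}(\beta_j^{(i)}) $, whereas the $ (i,j) $ entry of $ \pi_A(\phi_\mathcal{A}(f)) $ is $ \sum_k F^{\mathcal{D}_{a^{(i)}}}(\alpha_k^{(i)}) (A_i)_{kj} $. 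Since $ \beta_j^{(i)} = \sum_k \alpha_k^{(i)} (A_i)_{kj} $ with every $ (A_i)_{kj} \in K_{a^{(i)}} $, the right $ K_{a^{(i)}} $-linearity of operator evaluation (Lemma \ref{lemma operator polynomials right linear}) lets me pull the matrix entries out:
$$ F^{\mathcal{D}_{a^{(i)}}}(\beta_j^{(i)}) = \sum_k F^{\mathcal{D}_{a^{(i)}}}(\alpha_k^{(i)}) (A_i)_{kj}, $$
so the two entries agree and the square commutes. The crux, and the only step that is not bookkeeping, is exactly this interplay: the change-of-P-basis matrix has entries in the centralizer $ K_{a^{(i)}} $, which is precisely the ring over which operator polynomials act right linearly, so the matrix commutes past the operator evaluation even though that evaluation is not $ \mathbb{F} $-linear. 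I expect the main care to be needed in keeping the conjugacy-class block indexing consistent and in the correct left/right placement of scalars throughout.
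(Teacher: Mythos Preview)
Your proposal is correct and follows essentially the same approach as the paper: both use Lemma~\ref{lemma evaluation of skew as operator} to relate operator evaluation to skew evaluation, Corollary~\ref{cor linearized in one conj class} to obtain the invertible blocks $A_i$, and Lemma~\ref{lemma operator polynomials right linear} (right $K_{a^{(i)}}$-linearity) for the commutativity. Your observation that $\phi_\mathcal{A}$ is literally the coordinate-wise right scaling $f(a_j^{(i)}) \mapsto f(a_j^{(i)})\alpha_j^{(i)}$ is a slightly cleaner way to see injectivity than the paper's argument, which routes the implication back through Lemma~\ref{lemma evaluation of skew as operator}, but the underlying content is identical.
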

\begin{proof}
It is trivial to see that $ \phi_\mathcal{A} $, $ \phi_\mathcal{B} $ and $ \pi_A $ are left linear. Moreover, $ \pi_A $ is invertible by using inverses of each $ A_i \in K_{a^{(i)}}^{n_i \times n_i} $, for $ i = 1,2, \ldots, \ell $. We now show that $ \phi_\mathcal{A} $ and $ \phi_\mathcal{B} $ are left vector space isomorphisms, for which we only need to show that they are one to one. 

Take $ f \in \mathbb{F}^\mathcal{A} $ and let $ F \in \mathbb{F}[x;\sigma,\delta]_n $ be such that $ f = E_\mathcal{A}(F) $. If $ \phi_\mathcal{A}(f) = \mathbf{0} $, then by definition $ F^{\mathcal{D}_{a^{(i)}}}(\alpha_j^{(i)}) = 0 $, hence by Lemma \ref{lemma evaluation of skew as operator} it holds that $ f(a_j^{(i)}) = F(a_j^{(i)}) = 0 $, for all $ j = 1,2, \ldots, n_i $ and all $ i = 1,2, \ldots, \ell $. Thus $ f = 0 $ and we are done. Similarly for $ \phi_\mathcal{B} $.

We will now show that the given diagram is commutative. Let again $ f \in \mathbb{F}^\mathcal{A} $ and $ F \in \mathbb{F}[x;\sigma,\delta]_n $ be such that $ f = E_\mathcal{A}(F) $. Define $ g = \pi_{\mathcal{A},\mathcal{B}}(f) $, and
$$ \phi_\mathcal{A}(f) = (\mathbf{c}^{(1)}, \mathbf{c}^{(2)}, \ldots, \mathbf{c}^{(\ell)}), $$
$$ \phi_\mathcal{B}(g) = (\mathbf{d}^{(1)}, \mathbf{d}^{(2)}, \ldots, \mathbf{d}^{(\ell)}), $$
where $ \mathbf{c}^{(i)}, \mathbf{d}^{(i)} \in \mathbb{F}^{n_i} $, for $ i = 1,2, \ldots, \ell $. By definition, it holds that
$$ c^{(i)}_j = F^{\mathcal{D}_{a^{(i)}}} \left( \alpha_j^{(i)} \right) \quad \textrm{and} \quad d^{(i)}_k = F^{\mathcal{D}_{a^{(i)}}} \left( \beta_k^{(i)} \right), $$
for $ j,k = 1,2, \ldots, n_i $ and for $ i = 1,2, \ldots, \ell $. If we denote by $ \lambda_{j,k} \in K_{a^{(i)}} $ the $ (j,k) $-th entry of the matrix $ A_i $ (we omit the superindex $ i $ for brevity), then we have that
$$ \beta_k^{(i)} = \sum_{j=1}^{n_i} \alpha_j^{(i)} \lambda_{j,k}, $$
and thus by Lemma \ref{lemma operator polynomials right linear}, we have that
$$ d^{(i)}_k = F^{\mathcal{D}_{a^{(i)}}} \left( \sum_{j=1}^{n_i} \alpha_j^{(i)} \lambda_{j,k} \right) = \sum_{j=1}^{n_i} F^{\mathcal{D}_{a^{(i)}}} \left( \alpha_j^{(i)} \right) \lambda_{j,k} = \sum_{j=1}^{n_i} c^{(i)}_j \lambda_{j,k}, $$
for all $ k = 1,2, \ldots, n_i $ and all $ i = 1,2, \ldots, \ell $. Hence
$$ \phi_\mathcal{A}(f) A = \phi_\mathcal{B}(\pi_{\mathcal{A}, \mathcal{B}}(f)), $$
and the given diagram is commutative.
\end{proof}

Thus we obtain the second main result of this section, which is the above mentioned connection between skew metrics and sum-rank metrics:

\begin{theorem} \label{th both weights coincide}
Let the notation be as in Theorem \ref{th big commutative diagram} and denote $ K_i = K_{a^{(i)}} $, for $ i = 1,2, \ldots, \ell $. Then the weight $ {\rm wt} : \mathbb{F}^n \longrightarrow \mathbb{N} $ given by
$$ {\rm wt}(\mathbf{c}) = {\rm wt}_\mathcal{A}(f), $$
for all $ f \in \mathbb{F}^\mathcal{A} $, where $ \mathbf{c} = \phi_\mathcal{A}(f) $, is the sum-rank weight in $ \mathbb{F}^n $ with lengths $ (n_1, n_2, \ldots, n_\ell) $ and division subrings $ (K_1, K_2, \ldots, K_\ell) $ (see Definition \ref{def sum rank metric}).
\end{theorem}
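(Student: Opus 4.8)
The plan is to prove directly that the left-hand weight $ {\rm wt}_\mathcal{A}(f) = {\rm wt}_\Omega(F) = n - {\rm Rk}(Z_\Omega(F)) $ splits as a sum over the conjugacy classes $ C(a^{(i)}) $, and that the $ i $-th summand equals the rank weight $ {\rm wt_R}(\mathbf{c}^{(i)}) $. First I would set $ \Phi = Z_\Omega(F) $ and $ \Phi_i = \Phi \cap C(a^{(i)}) $. Since $ \mathcal{A} = \mathcal{A}_1 \cup \cdots \cup \mathcal{A}_\ell $ with $ \mathcal{A}_i \subseteq C(a^{(i)}) $, the product-rule argument shows $ \Omega = \overline{\mathcal{A}} $ meets only the classes $ C(a^{(1)}), \ldots, C(a^{(\ell)}) $, so $ \Phi = \Phi_1 \cup \cdots \cup \Phi_\ell $ is a disjoint decomposition. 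Applying the corollary establishing $ \overline{\mathcal{B}_a} = \Omega \cap C(a) $ to a P-basis $ \mathcal{G} $ of $ \Phi $, its class-$ i $ part $ \mathcal{G}_i $ satisfies $ \overline{\mathcal{G}_i} = \Phi_i $, and Lemma \ref{lemma ranks partition conjugacy} then yields $ {\rm Rk}(\Phi) = \sum_{i=1}^\ell {\rm Rk}(\Phi_i) $. Combined with $ n = \sum_i n_i $, this gives
$$ {\rm wt}_\Omega(F) = \sum_{i=1}^\ell \left( n_i - {\rm Rk}(\Phi_i) \right), $$
so the theorem reduces to the local identity $ {\rm wt_R}(\mathbf{c}^{(i)}) = n_i - {\rm Rk}(\Phi_i) $ for each fixed $ i $.

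For the local identity I would work with the right $ K_i $-subspace $ V_i = \langle \alpha_1^{(i)}, \ldots, \alpha_{n_i}^{(i)} \rangle_{K_i}^R $, which has dimension $ n_i $ because $ \mathcal{A}_i $ is P-independent (Lemma \ref{lemma linearized in one conj class}). The map $ F^{\mathcal{D}_{a^{(i)}}} $ is right $ K_i $-linear by Lemma \ref{lemma operator polynomials right linear}, and its restriction to $ V_i $ sends the basis vector $ \alpha_j^{(i)} $ to $ c_j^{(i)} $; hence its image is $ \langle c_1^{(i)}, \ldots, c_{n_i}^{(i)} \rangle_{K_i}^R $, whose dimension is precisely $ {\rm wt_R}(\mathbf{c}^{(i)}) $. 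By rank--nullity over the division ring $ K_i $,
$$ {\rm wt_R}(\mathbf{c}^{(i)}) = n_i - \dim_{K_i} \ker\left( F^{\mathcal{D}_{a^{(i)}}}|_{V_i} \right), $$
so everything reduces to the single equality $ \dim_{K_i} \ker(F^{\mathcal{D}_{a^{(i)}}}|_{V_i}) = {\rm Rk}(\Phi_i) $.

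This last equality is where the main work lies, and I expect it to be the principal obstacle. The bridge is Lemma \ref{lemma evaluation of skew as operator}, which gives $ F(\mathcal{D}_{a^{(i)}}(\gamma)\gamma^{-1}) = F^{\mathcal{D}_{a^{(i)}}}(\gamma)\gamma^{-1} $ for $ \gamma \in \mathbb{F}^* $, so a nonzero $ \gamma \in V_i $ lies in $ \ker(F^{\mathcal{D}_{a^{(i)}}}) $ exactly when the conjugate $ \mathcal{D}_{a^{(i)}}(\gamma)\gamma^{-1} $ is a zero of $ F $. For the inequality $ {\rm Rk}(\Phi_i) \geq \dim_{K_i}\ker $, I would take a right $ K_i $-basis $ \gamma_1, \ldots, \gamma_{r_i} $ of the kernel and set $ g_s = \mathcal{D}_{a^{(i)}}(\gamma_s)\gamma_s^{-1} $; these lie in $ Z(F) \cap C(a^{(i)}) $, are P-independent by Lemma \ref{lemma linearized in one conj class}, and a rank comparison (using that $ \gamma_s \in V_i $ forces $ {\rm Rk}(\overline{\{g_1,\ldots,g_{r_i}\} \cup \mathcal{A}_i}) = n_i = {\rm Rk}(\overline{\mathcal{A}_i}) $, whence $ g_s \in \overline{\mathcal{A}_i} = \Omega \cap C(a^{(i)}) $) places them in $ \Phi_i $. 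For the reverse inequality I would take a P-basis $ h_1, \ldots, h_m $ of $ \Phi_i $, write $ h_t = \mathcal{D}_{a^{(i)}}(\eta_t)\eta_t^{-1} $, and observe that the $ \eta_t $ are right $ K_i $-independent (Lemma \ref{lemma linearized in one conj class}), lie in $ V_i $, and lie in the kernel by Lemma \ref{lemma evaluation of skew as operator}, giving $ m \le \dim_{K_i}\ker $. The delicate point throughout is keeping the correspondence between P-closed subsets of the single class $ C(a^{(i)}) $ and right $ K_i $-subspaces of $ \mathbb{F} $ exact, which is exactly what Lemma \ref{lemma linearized in one conj class} and Corollary \ref{cor linearized in one conj class} supply. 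Summing the local identities over $ i $ and comparing with the first display then finishes the proof.
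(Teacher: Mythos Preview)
Your argument is correct, but it is a genuinely different proof from the one in the paper. The paper does not split $ {\rm wt}_\Omega(F) $ directly by conjugacy class; instead it combines Proposition~\ref{prop connection with Hamming weights} (skew weight equals the minimum Hamming weight over all P-bases of $ \Omega $) with Theorem~\ref{th big commutative diagram} (every change of P-basis is carried by $ \phi_{\mathcal{A}},\phi_{\mathcal{B}} $ to right-multiplication by a block-diagonal matrix $ A = {\rm diag}(A_1,\ldots,A_\ell) $ with $ A_i \in K_i^{n_i\times n_i} $ invertible). Since $ \phi_{\mathcal{B}} $ preserves Hamming weight coordinatewise (by Lemma~\ref{lemma evaluation of skew as operator}, $ F(b_j^{(i)}) = 0 $ iff $ F^{\mathcal{D}_{a^{(i)}}}(\beta_j^{(i)}) = 0 $), this yields
\[
{\rm wt}(\mathbf{c}) \;=\; \min\Bigl\{\, \textstyle\sum_{i=1}^\ell {\rm wt_H}(\mathbf{c}^{(i)} A_i) \;:\; A_i \in K_i^{n_i\times n_i}\ \text{invertible} \,\Bigr\},
\]
and the paper then finishes with the elementary fact $ {\rm wt_R}(\mathbf{d}) = \min\{ {\rm wt_H}(\mathbf{d} B) : B \in K_i^{n_i\times n_i}\ \text{invertible} \} $.

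Your route bypasses both Proposition~\ref{prop connection with Hamming weights} and the commutative diagram entirely: you split $ {\rm Rk}(Z_\Omega(F)) $ over the classes via Lemma~\ref{lemma ranks partition conjugacy}, and then identify each $ {\rm Rk}(\Phi_i) $ with $ \dim_{K_i}\ker\bigl(F^{\mathcal{D}_{a^{(i)}}}|_{V_i}\bigr) $ using Lemmas~\ref{lemma operator polynomials right linear}, \ref{lemma evaluation of skew as operator} and \ref{lemma linearized in one conj class}, followed by rank--nullity. This is more hands-on but arguably more transparent: it exhibits the sum-rank weight as literally arising from the kernels of the operator polynomials on each block, rather than passing through a Hamming-weight minimisation. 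The paper's proof is much shorter because the needed machinery has already been packaged into Theorem~\ref{th big commutative diagram}; your proof is essentially a direct unwinding of what that theorem encodes, together with the lattice correspondence of Corollary~\ref{cor linearized in one conj class}. Both are valid; yours would stand alone without Theorem~\ref{th big commutative diagram}, while the paper's exploits it to full effect.
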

\begin{proof}
By combining Proposition \ref{prop connection with Hamming weights} and Theorem \ref{th big commutative diagram}, we deduce that
$$ {\rm wt}(\mathbf{c}) = \min \left\lbrace \sum_{i=1}^\ell {\rm wt_H}(\mathbf{c}^{(i)} A_i) \mid A_i \in K_i^{n_i \times n_i} \textrm{ invertible, for } i = 1,2, \ldots, \ell \right\rbrace . $$
By a linear algebra argument, we also have that
$$ {\rm wt_R}(\mathbf{d}) = \dim_{K_i}( \langle d_1, d_2, \ldots, d_{n_i} \rangle_{K_i}^R) = \min \{ {\rm wt_H}(\mathbf{d} B) \mid B \in K_i^{n_i \times n_i} \textrm{ invertible} \}, $$
for $ \mathbf{d} = (d_1, d_2, \ldots, d_{n_i}) \in \mathbb{F}^{n_i} $ and for $ i = 1,2, \ldots, \ell $. The result follows by combining these two facts.
\end{proof}

\subsection{Linearized Reed-Solomon codes with maximum sum-rank distance} \label{subsec linearized RS codes}

In this subsection, we define general linearized Reed-Solomon codes, establish their connection with skew Reed-Solomon codes, and deduce from the previous study that they have maximum sum-rank distance. We conclude by recalling what particular cases of linearized Reed-Solomon codes have been given in the literature.

\begin{definition} [\textbf{Linearized Reed-Solomon codes}] \label{def linearized RS codes}
Let the notation be as in Theorem \ref{th big commutative diagram} and denote $ K_i = K_{a^{(i)}} $, for $ i = 1,2, \ldots, \ell $. For each $ k = 0,1,2, \ldots, n $, we define the ($ k $-dimensional) linearized Reed-Solomon code with conjugacy representatives $ a^{(1)}, a^{(2)}, \ldots, a^{(\ell)} \in \mathbb{F} $ and basis vectors $ \boldsymbol\beta_i = (\beta_1^{(i)}, \beta_2^{(i)}, \ldots, \beta_{n_i}^{(i)}) \in \mathbb{F}^{n_i} $, for $ i = 1,2, \ldots, \ell $, as the left linear code $ \mathcal{C}^{\sigma,\delta}_{L,k} \subseteq \mathbb{F}^n $ formed by the vectors $ \mathbf{c} = (\mathbf{c}^{(1)}, \mathbf{c}^{(2)}, \ldots, \mathbf{c}^{(\ell)}) \in \mathbb{F}^n $ given by $ \mathbf{c}^{(i)} = (c_1^{(i)}, c_2^{(i)}, \ldots, c_{n_i}^{(i)}) \subseteq \mathbb{F}^{n_i} $ and
$$ c_j^{(i)} = \sum_{l=0}^{k-1} F_l \mathcal{D}_{a^{(i)}}^l \left( \beta_j^{(i)} \right), $$
where $ F_l \in \mathbb{F} $, for $ l = 0,1,2, \ldots, k-1 $, for $ j = 1,2, \ldots, n_i $, and for $ i = 1,2, \ldots, \ell $. We use the notation $ \mathcal{C}_{L,k} $ when $ \sigma $ and $ \delta $ are understood from the context.
\end{definition}

Observe that these codes depend on the conjugacy representatives and the basis vectors. However, we omit this in the notation for brevity.

Observe also that a generator matrix for $ \mathcal{C}^{\sigma,\delta}_{L,k} \subseteq \mathbb{F}^n $ is then given by the matrix formed by evaluations on the different operators $ \mathcal{D}_i = \mathcal{D}_{a^{(i)}} $, for $ i = 1,2, \ldots, \ell $:
\begin{equation*}
\scalebox{0.85}{$
\left( \begin{array}{cccc|c|cccc}
\beta_1^{(1)} & \beta_2^{(1)} & \ldots & \beta_{n_1}^{(1)} & \ldots & \beta_1^{(\ell)} & \beta_2^{(\ell)} & \ldots & \beta_{n_\ell}^{(\ell)} \\
\mathcal{D}_1 \left( \beta_1^{(1)} \right) & \mathcal{D}_1 \left( \beta_2^{(1)} \right) & \ldots & \mathcal{D}_1 \left( \beta_{n_1}^{(1)} \right) & \ldots & \mathcal{D}_\ell \left( \beta_1^{(\ell)} \right) & \mathcal{D}_\ell \left( \beta_2^{(\ell)} \right) & \ldots & \mathcal{D}_\ell \left( \beta_{n_\ell}^{(\ell)} \right) \\
\mathcal{D}_1^2 \left( \beta_1^{(1)} \right) & \mathcal{D}_1^2 \left( \beta_2^{(1)} \right) & \ldots & \mathcal{D}_1^2 \left( \beta_{n_1}^{(1)} \right) & \ldots & \mathcal{D}_\ell^2 \left( \beta_1^{(\ell)} \right) & \mathcal{D}_\ell^2 \left( \beta_2^{(\ell)} \right) & \ldots & \mathcal{D}_\ell^2 \left( \beta_{n_\ell}^{(\ell)} \right) \\
\vdots & \vdots & \ddots & \vdots & \ddots & \vdots & \vdots & \ddots & \vdots \\
\mathcal{D}_1^{k-1} \left( \beta_1^{(1)} \right) & \mathcal{D}_1^{k-1} \left( \beta_2^{(1)} \right) & \ldots & \mathcal{D}_1^{k-1} \left( \beta_{n_1}^{(1)} \right) & \ldots & \mathcal{D}_\ell^{k-1} \left( \beta_1^{(\ell)} \right) & \mathcal{D}_\ell^{k-1} \left( \beta_2^{(\ell)} \right) & \ldots & \mathcal{D}_\ell^{k-1} \left( \beta_{n_\ell}^{(\ell)} \right) \\
\end{array} \right).
$}
\end{equation*}

We may compute such powers of the operators $ \mathcal{D}_a $ by the following iterative formula, which is trivial from the definition:

\begin{proposition} \label{prop iterative formula for D^i}
Let $ a, b \in \mathbb{F} $. For every $ i \in \mathbb{N} $, it holds that
$$ \mathcal{D}_a^{i+1}(b) = \sigma \left( \mathcal{D}_a^i(b) \right) a + \delta \left( \mathcal{D}_a^i(b) \right). $$
\end{proposition}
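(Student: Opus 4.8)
The plan is to unfold the definition of the operator powers and then apply the defining formula for $ \mathcal{D}_a $ exactly once. Recall that, as a map $ \mathbb{F} \longrightarrow \mathbb{F} $, the operator $ \mathcal{D}_a = \mathcal{D}_a^{\sigma,\delta} $ acts by $ \mathcal{D}_a(c) = \sigma(c)a + \delta(c) $, and that the power $ \mathcal{D}_a^i $ is to be read as the $ i $-fold composition of this map with itself, with $ \mathcal{D}_a^0 = {\rm Id} $. This is precisely the convention under which the operator evaluation $ F(b) = \sum_i F_i \mathcal{D}_a^i(b) $ was defined, and the one meant by ``powers of the operators $ \mathcal{D}_a $'' in the paragraph preceding the statement.

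First I would record the operator identity $ \mathcal{D}_a^{i+1} = \mathcal{D}_a \circ \mathcal{D}_a^i $, which is immediate from the associativity of composition for every $ i \in \mathbb{N} $ (including the base case $ i = 0 $, where it reads $ \mathcal{D}_a^1 = \mathcal{D}_a \circ {\rm Id} = \mathcal{D}_a $). Evaluating both sides at $ b \in \mathbb{F} $ then gives $ \mathcal{D}_a^{i+1}(b) = \mathcal{D}_a\left( \mathcal{D}_a^i(b) \right) $.

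Finally I would substitute $ c = \mathcal{D}_a^i(b) $ into the defining formula $ \mathcal{D}_a(c) = \sigma(c)a + \delta(c) $, which yields $ \mathcal{D}_a^{i+1}(b) = \sigma\left( \mathcal{D}_a^i(b) \right) a + \delta\left( \mathcal{D}_a^i(b) \right) $, as claimed. There is no genuine obstacle here: the statement is a direct definitional unfolding, and the only point worth flagging is the (notational) identification of the formal power $ \mathcal{D}_a^i $ with the $ i $-fold iterate of the map $ \mathcal{D}_a $, after which a single application of the definition of $ \mathcal{D}_a $ closes the argument.
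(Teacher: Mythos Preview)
Your proposal is correct and matches the paper's approach: the paper gives no separate proof and simply states the formula ``is trivial from the definition,'' which is exactly the definitional unfolding $\mathcal{D}_a^{i+1}(b) = \mathcal{D}_a(\mathcal{D}_a^i(b)) = \sigma(\mathcal{D}_a^i(b))a + \delta(\mathcal{D}_a^i(b))$ that you carry out.
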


We have that $ \mathcal{C}^{\sigma,\delta}_{L,k} \subseteq \mathbb{F}^n $ indeed is the linearized version of the code $ \mathcal{C}^{\sigma,\delta}_{\mathcal{B},k} \subseteq \mathbb{F}^\mathcal{B} $ from Definition \ref{def skew RS codes}. The proof follows from the definitions and Theorem \ref{th big commutative diagram}:

\begin{proposition} \label{prop indeed linearized version}
Let the notation be as in Theorem \ref{th big commutative diagram}, and fix $ k = 0, 1,2, \ldots, n $. If $ \mathcal{C}^{\sigma,\delta}_{\mathcal{B},k} \subseteq \mathbb{F}^\mathcal{B} $ and $ \mathcal{C}^{\sigma,\delta}_{L,k} \subseteq \mathbb{F}^n $ are the codes in Definition \ref{def skew RS codes} and Definition \ref{def linearized RS codes}, respectively, then it holds that
$$ \mathcal{C}^{\sigma,\delta}_{L,k} = \phi_\mathcal{B}(\mathcal{C}^{\sigma,\delta}_{\mathcal{B},k}). $$
\end{proposition}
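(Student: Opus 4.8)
The plan is to unwind both definitions and check that they coincide entry by entry; no deep machinery is needed beyond the isomorphism $ F \mapsto F^{\mathcal{D}_a} $ and the left-linearity of operator evaluation. First I would note that $ \mathcal{C}^{\sigma,\delta}_{\mathcal{B},k} = E_\mathcal{B}(\mathbb{F}[x;\sigma,\delta]_k) $, so every element of $ \phi_\mathcal{B}(\mathcal{C}^{\sigma,\delta}_{\mathcal{B},k}) $ has the form $ \phi_\mathcal{B}(E_\mathcal{B}(F)) $ for some $ F = \sum_{l=0}^{k-1} F_l x^l \in \mathbb{F}[x;\sigma,\delta]_k $. Since $ k \leq n $, this $ F $ lies in $ \mathbb{F}[x;\sigma,\delta]_n $ and, by the uniqueness part of Lemma \ref{lemma lagrange interpolation}, is exactly the unique degree-less-than-$ n $ representative of $ E_\mathcal{B}(F) $ used in the definition of $ \phi_\mathcal{B} $; in particular that representative genuinely has degree $ < k $.

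Next I would invoke the definition of $ \phi_\mathcal{B} $ from Theorem \ref{th big commutative diagram}, whose components are $ d_j^{(i)} = F^{\mathcal{D}_{a^{(i)}}}(\beta_j^{(i)}) $. The key algebraic step is to expand $ F^{\mathcal{D}_{a^{(i)}}} $ via the left vector space isomorphism $ \mathbb{F}[x;\sigma,\delta] \to \mathbb{F}[\mathcal{D}_a] $ sending $ \sum F_l x^l $ to $ \sum F_l \mathcal{D}_a^l $, giving $ F^{\mathcal{D}_{a^{(i)}}} = \sum_{l=0}^{k-1} F_l \mathcal{D}_{a^{(i)}}^l $. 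Then, by the definition of operator evaluation, which is left linear in the coefficients,
$$ d_j^{(i)} = F^{\mathcal{D}_{a^{(i)}}}\left( \beta_j^{(i)} \right) = \sum_{l=0}^{k-1} F_l \mathcal{D}_{a^{(i)}}^l \left( \beta_j^{(i)} \right). $$
This is precisely the defining formula for a codeword of $ \mathcal{C}^{\sigma,\delta}_{L,k} $ with coefficients $ F_0, F_1, \ldots, F_{k-1} $, which establishes the inclusion $ \phi_\mathcal{B}(\mathcal{C}^{\sigma,\delta}_{\mathcal{B},k}) \subseteq \mathcal{C}^{\sigma,\delta}_{L,k} $.

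For the reverse inclusion I would read the same identity backwards: an arbitrary codeword of $ \mathcal{C}^{\sigma,\delta}_{L,k} $ is specified by a tuple $ (F_0, F_1, \ldots, F_{k-1}) \in \mathbb{F}^k $, and setting $ F = \sum_{l=0}^{k-1} F_l x^l \in \mathbb{F}[x;\sigma,\delta]_k $ recovers that codeword as $ \phi_\mathcal{B}(E_\mathcal{B}(F)) $ by the identical computation. The two codes are therefore equal.

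I do not expect a genuine obstacle here, since the statement reduces to a bookkeeping identity chaining the coefficient-preserving isomorphism $ F \mapsto F^{\mathcal{D}_a} $ with the left-linearity of operator evaluation. The only point demanding care is confirming that the skew polynomial $ \phi_\mathcal{B} $ implicitly uses, which is a priori only of degree $ < n $, in fact has degree $ < k $ when the argument lies in $ \mathcal{C}^{\sigma,\delta}_{\mathcal{B},k} $; this is handled by the uniqueness of the low-degree interpolant in Lemma \ref{lemma lagrange interpolation}.
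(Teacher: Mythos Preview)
Your proposal is correct and matches the paper's own treatment: the paper simply states that the result ``follows from the definitions and Theorem \ref{th big commutative diagram}'' without giving further detail, and your argument is precisely the routine unwinding of those definitions (including the careful check, via Lemma \ref{lemma lagrange interpolation}, that the degree-$<n$ representative used in $\phi_\mathcal{B}$ is the original $F$ of degree $<k$). There is nothing to add.
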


To claim that linearized Reed-Solomon codes are maximum sum-rank distance, we now observe the corresponding Singleton bound, which follows from the fact that sum-rank weights are smaller than or equal to Hamming weights:

\begin{proposition} \label{prop singleton for decomposed rank}
Let the notation be as in Definition \ref{def sum rank metric}. For a left linear code $ \mathcal{C} \subseteq \mathbb{F}^n $ of dimension $ k $, if $ {\rm d_{SR}}(\mathcal{C}) $ denotes its minimum sum-rank distance, then it holds that
\begin{equation}
{\rm d_{SR}}(\mathcal{C}) \leq n - k + 1.
\label{eq singleton for sum rank}
\end{equation}
\end{proposition}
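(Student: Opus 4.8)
The plan is to deduce this Singleton bound for the sum-rank metric from the classical Singleton bound for the Hamming metric, mirroring the derivation of Proposition \ref{prop singleton bound}. The key ingredient is the pointwise estimate $ {\rm wt_{SR}}(\mathbf{c}) \leq {\rm wt_H}(\mathbf{c}) $ for every $ \mathbf{c} \in \mathbb{F}^n $, where $ {\rm wt_H}(\mathbf{c}) $ counts the nonzero coordinates of $ \mathbf{c} $ across all $ \ell $ blocks. This inequality is precisely the ``fact that sum-rank weights are smaller than or equal to Hamming weights'' alluded to before the statement.

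First I would establish this estimate one block at a time. Fix a block $ \mathbf{c}^{(i)} = (c_1^{(i)}, c_2^{(i)}, \ldots, c_{n_i}^{(i)}) \in \mathbb{F}^{n_i} $. Since the zero coordinates contribute nothing to the right $ K_i $-span, the nonzero coordinates already generate $ \langle c_1^{(i)}, c_2^{(i)}, \ldots, c_{n_i}^{(i)} \rangle_{K_i}^R $; hence its dimension over $ K_i $ is at most the number of nonzero coordinates in the block. This reads $ {\rm wt_R}(\mathbf{c}^{(i)}) \leq {\rm wt_H}(\mathbf{c}^{(i)}) $, and summing over $ i = 1, 2, \ldots, \ell $ gives $ {\rm wt_{SR}}(\mathbf{c}) \leq {\rm wt_H}(\mathbf{c}) $. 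Applying this to differences of codewords yields $ {\rm d_{SR}}(\mathcal{C}) \leq {\rm d_H}(\mathcal{C}) $ for any (linear or not) code $ \mathcal{C} \subseteq \mathbb{F}^n $.

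It then remains to invoke the Hamming Singleton bound $ {\rm d_H}(\mathcal{C}) \leq n - k + 1 $ for a left linear code of dimension $ k $. This is the only point where one must check that the noncommutativity of $ \mathbb{F} $ causes no trouble, but the standard puncturing argument transfers verbatim: deleting any $ {\rm d_H}(\mathcal{C}) - 1 $ coordinates defines a left linear projection $ \mathbb{F}^n \longrightarrow \mathbb{F}^{n - {\rm d_H}(\mathcal{C}) + 1} $ whose restriction to $ \mathcal{C} $ is injective, because two distinct codewords differ in at least $ {\rm d_H}(\mathcal{C}) $ positions and therefore in at least one surviving coordinate. Comparing left dimensions gives $ k \leq n - {\rm d_H}(\mathcal{C}) + 1 $, that is, $ {\rm d_H}(\mathcal{C}) \leq n - k + 1 $. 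Chaining this with $ {\rm d_{SR}}(\mathcal{C}) \leq {\rm d_H}(\mathcal{C}) $ finishes the argument. I expect no substantial obstacle: the entire content is the elementary observation that a right $ K_i $-span has dimension bounded by the number of its nonzero generators, together with the fact that left dimension is well defined over a division ring so that the puncturing count goes through.
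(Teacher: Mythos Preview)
Your proposal is correct and follows exactly the approach the paper indicates: it reduces the sum-rank Singleton bound to the Hamming Singleton bound via the blockwise inequality $ {\rm wt_R}(\mathbf{c}^{(i)}) \leq {\rm wt_H}(\mathbf{c}^{(i)}) $. The paper does not spell out the details, so your explicit verification of the puncturing argument over a division ring is a welcome addition rather than a deviation.
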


Thus we obtain the third main result of this section, which follows by combining Theorem \ref{th maximum skew distance codes}, Theorem \ref{th both weights coincide} and Proposition \ref{prop indeed linearized version}:

\begin{theorem} \label{th max sum rank distance}
Let the notation be as in Theorem \ref{th big commutative diagram}. If $ \mathcal{C}_{L,k} \subseteq \mathbb{F}^n $ is the linearized Reed-Solomon code from Definition \ref{def linearized RS codes}, then it holds that
$$ {\rm d_{SR}}(\mathcal{C}_{L,k}) = n - k + 1. $$
That is, the code $ \mathcal{C}_{L,k} $ is a maximum sum-rank distance code with lengths $ (n_1, n_2, \ldots, n_\ell) $ and division subrings $ (K_1, K_2, \ldots, K_\ell) $ (see Definition \ref{def sum rank metric}).
\end{theorem}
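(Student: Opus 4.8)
The plan is to transport the optimality of skew Reed-Solomon codes through the isometry $ \phi_\mathcal{B} $ to the sum-rank setting, so that the statement becomes a direct composition of the three results flagged just before it. First I would record that $ \phi_\mathcal{B} : \mathbb{F}^\mathcal{B} \longrightarrow \mathbb{F}^n $ is a left vector space isomorphism by Theorem \ref{th big commutative diagram}, and that it is moreover an isometry from the skew metric $ {\rm d}_\mathcal{B} $ to the sum-rank metric $ {\rm d_{SR}} $. This isometry property is exactly the content of Theorem \ref{th both weights coincide} applied to the P-basis $ \mathcal{B} $: the statement is written for $ \mathcal{A} $, but it holds verbatim for $ \mathcal{B} $ (as signalled by ``Analogously for $ \mathcal{B} $'' in Theorem \ref{th big commutative diagram}), giving $ {\rm wt_{SR}}(\phi_\mathcal{B}(g)) = {\rm wt}_\mathcal{B}(g) $ for every $ g \in \mathbb{F}^\mathcal{B} $. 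Since $ \phi_\mathcal{B} $ is additive, this weight identity upgrades at once to a distance identity $ {\rm d_{SR}}(\phi_\mathcal{B}(f), \phi_\mathcal{B}(g)) = {\rm d}_\mathcal{B}(f,g) $, using $ \phi_\mathcal{B}(f) - \phi_\mathcal{B}(g) = \phi_\mathcal{B}(f-g) $.

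Next I would invoke Proposition \ref{prop indeed linearized version}, which identifies the linearized code as the image of the skew code, namely $ \mathcal{C}_{L,k} = \phi_\mathcal{B}(\mathcal{C}_{\mathcal{B},k}) $. Because $ \phi_\mathcal{B} $ is a bijective isometry, the minimum distance of the image code equals that of the source code, so $ {\rm d_{SR}}(\mathcal{C}_{L,k}) = {\rm d}_\mathcal{B}(\mathcal{C}_{\mathcal{B},k}) $. Finally, Theorem \ref{th maximum skew distance codes} supplies $ {\rm d}_\mathcal{B}(\mathcal{C}_{\mathcal{B},k}) = n - k + 1 $, which yields the claimed value. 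The Singleton bound of Proposition \ref{prop singleton for decomposed rank} is consistent with this and, taken alone, would give only the upper bound $ {\rm d_{SR}}(\mathcal{C}_{L,k}) \leq n-k+1 $; the isometry argument shows this bound is attained, so $ \mathcal{C}_{L,k} $ is a maximum sum-rank distance code with lengths $ (n_1, n_2, \ldots, n_\ell) $ and division subrings $ (K_1, K_2, \ldots, K_\ell) $.

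I do not expect a genuine obstacle at this stage, since all the structural work has already been carried out in Theorems \ref{th big commutative diagram} and \ref{th both weights coincide}; the final theorem is essentially a corollary. The only points that require care are to apply the weight-coincidence result to $ \mathcal{B} $ rather than to $ \mathcal{A} $, and to state explicitly that a left-linear bijective isometry carries minimum distance to minimum distance. Everything else is bookkeeping in composing the previously established isomorphisms and identities.
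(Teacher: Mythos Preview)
Your proposal is correct and follows exactly the approach the paper intends: the paper states that the theorem ``follows by combining Theorem \ref{th maximum skew distance codes}, Theorem \ref{th both weights coincide} and Proposition \ref{prop indeed linearized version}'', and your argument spells out precisely this composition via the isometry $\phi_\mathcal{B}$. Your care in applying Theorem \ref{th both weights coincide} to $\mathcal{B}$ rather than $\mathcal{A}$ is appropriate and the only subtlety worth noting.
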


Collecting all the previous results, we also conclude the following:

\begin{theorem} \label{th range of parameters for maximum distance}
If the division subrings $ K_1, K_2, \ldots, K_\ell \subseteq \mathbb{F} $ are centralizers of pair-wise non-conjugate elements of $ \mathbb{F} $ for some skew polynomial ring $ \mathbb{F}[x;\sigma,\delta] $ (hence $ \ell $ is not larger than the number of conjugacy classes in $ \mathbb{F} $), and $ 1 \leq n_i \leq \dim_{K_i}(\mathbb{F}) $, for $ i =1,2, \ldots, \ell $, then there exists a $ k $-dimensional maximum sum-rank distance left linear code $ \mathcal{C} $ for all $ k = 1,2, \ldots, n $, for $ n = n_1 + n_2 + \cdots + n_\ell $, with lengths $ (n_1, n_2, \ldots, n_\ell) $ and division subrings $ (K_1, K_2, \ldots, K_\ell) $.
\end{theorem}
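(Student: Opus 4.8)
The plan is to exhibit an explicit linearized Reed-Solomon code realizing the prescribed parameters and then invoke Theorem \ref{th max sum rank distance}, which already carries all the real content (the attainment of the Singleton bound). Thus the task reduces to checking that the hypotheses of Theorem \ref{th big commutative diagram}---equivalently, the data needed to define $ \mathcal{C}_{L,k} $ in Definition \ref{def linearized RS codes}---can actually be arranged for any admissible choice of $ (K_i) $ and $ (n_i) $.

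First I would fix conjugacy representatives $ a^{(1)}, a^{(2)}, \ldots, a^{(\ell)} \in \mathbb{F} $ with $ K_i = K_{a^{(i)}} $; by hypothesis these lie in pair-wise distinct conjugacy classes. For each $ i $, since $ 1 \leq n_i \leq \dim_{K_i}(\mathbb{F}) $, I can select $ n_i $ elements $ \beta_1^{(i)}, \beta_2^{(i)}, \ldots, \beta_{n_i}^{(i)} \in \mathbb{F}^* $ that are right linearly independent over $ K_i $. Setting $ b_j^{(i)} = \mathcal{D}_{a^{(i)}}(\beta_j^{(i)})(\beta_j^{(i)})^{-1} $, Lemma \ref{lemma linearized in one conj class} gives $ {\rm Rk}(\overline{\mathcal{B}_i}) = \dim_{K_i}(\langle \beta_1^{(i)}, \beta_2^{(i)}, \ldots, \beta_{n_i}^{(i)} \rangle_{K_i}^R) = n_i $, so each $ \mathcal{B}_i = \{ b_1^{(i)}, b_2^{(i)}, \ldots, b_{n_i}^{(i)} \} \subseteq C(a^{(i)}) $ is P-independent of rank $ n_i $.

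Next, because the classes $ C(a^{(i)}) $ are pair-wise disjoint, Lemma \ref{lemma ranks partition conjugacy} applies to the union $ \mathcal{B} = \mathcal{B}_1 \cup \mathcal{B}_2 \cup \cdots \cup \mathcal{B}_\ell $: it is P-independent and $ {\rm Rk}(\overline{\mathcal{B}}) = n_1 + n_2 + \cdots + n_\ell = n $. Hence $ \mathcal{B} $ is a P-basis of the P-closed set $ \Omega = \overline{\mathcal{B}} $ with $ \mathcal{B}_i = \mathcal{B} \cap C(a^{(i)}) $, which is exactly the configuration assumed in Theorem \ref{th big commutative diagram}. All the hypotheses of Theorem \ref{th max sum rank distance} are therefore met, and the linearized Reed-Solomon code $ \mathcal{C}_{L,k} $ built from these representatives and the basis vectors $ \boldsymbol\beta_i = (\beta_1^{(i)}, \beta_2^{(i)}, \ldots, \beta_{n_i}^{(i)}) $ satisfies $ {\rm d_{SR}}(\mathcal{C}_{L,k}) = n - k + 1 $ for every $ k = 1, 2, \ldots, n $, with lengths $ (n_1, n_2, \ldots, n_\ell) $ and division subrings $ (K_1, K_2, \ldots, K_\ell) $.

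I expect no serious obstacle here: the optimality is inherited wholesale from Theorem \ref{th max sum rank distance}, and the only points requiring verification are the realizability steps. The single nontrivial ingredient is the dictionary between right $ K_i $-linear independence and P-independence furnished by Lemma \ref{lemma linearized in one conj class}, combined with the additivity of $ {\rm Rk} $ across distinct conjugacy classes from Lemma \ref{lemma ranks partition conjugacy}; both are already established above. The dimension bound $ n_i \leq \dim_{K_i}(\mathbb{F}) $ is precisely what guarantees the required independent vectors exist, so the stated range of parameters is exactly the realizable one.
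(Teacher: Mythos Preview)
Your proposal is correct and follows exactly the approach the paper intends: the theorem is stated in the paper as a direct consequence obtained by ``collecting all the previous results,'' and you have simply made that collection explicit by choosing independent $\beta_j^{(i)}$ via the bound $n_i \leq \dim_{K_i}(\mathbb{F})$, invoking Lemma~\ref{lemma linearized in one conj class} within each conjugacy class and Lemma~\ref{lemma ranks partition conjugacy} across classes, and then applying Theorem~\ref{th max sum rank distance}.
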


\begin{remark} \label{remark range of parameters}
If $ K = K_1 = K_2 = \ldots = K_\ell $, then any left linear code $ \mathcal{C} \subseteq \mathbb{F}^n $ with maximum rank distance over $ K $ is also maximum sum-rank distance for any vector of lengths $ (n_1, n_2, \ldots, n_\ell) $, such that $ n = n_1 + n_2 + \cdots + n_\ell $, and division subrings $ (K_1, K_2, \ldots, K_\ell) $. However, if $ m = \dim_K(\mathbb{F}) < \infty $, then such codes can only exist if $ n \leq m $. In such case, the previous theorem gives maximum sum-rank distance codes of any length $ n = 1,2, \ldots, \ell m $. See Subsection \ref{subsec finite fields} for a discussion when $ \mathbb{F} $ is a finite field.
\end{remark}

We conclude the subsection by comparing general linearized Reed-Solomon codes with codes from the literature. We start by the two main historical examples, that is, Reed-Solomon codes \cite{reed-solomon} and Gabidulin codes \cite{gabidulin, new-construction}:

\begin{example} [\textbf{Reed-Solomon codes and the Hamming metric}]
If $ \mathbb{F} $ is a field, the case of the Hamming metric and conventional Reed-Solomon codes \cite{reed-solomon} is recovered by choosing $ \sigma = {\rm Id} $ and $ \delta = 0 $. In such case, conjugacy classes are formed by one element, i.e. $ C(a) = \{a\} $, each centralizer satisfies $ K_a = \mathbb{F} $, and it holds that
$$ E_a(x^i) = \mathcal{D}_a^i(1) = a^i, $$
for all $ i \in \mathbb{N} $ and all $ a \in \mathbb{F} $. Hence skew Reed-Solomon codes and linearized Reed-Solomon codes coincide and give conventional Reed-Solomon codes. Moreover, the corresponding metric in $ \mathbb{F}^n $ is the sum-rank metric with lengths $ (1,1, \ldots,1) $ and subfields $ (\mathbb{F},\mathbb{F}, \ldots, \mathbb{F}) $ (both of size $ n $), which indeed is the Hamming metric. 

Finally, the restrictions in Theorem \ref{th range of parameters for maximum distance} for the possible parameters for which we obtain maximum distance codes are $ 1 \leq k \leq n \leq \# \mathbb{F} $, as in \cite{reed-solomon}.
\end{example}

\begin{example} [\textbf{Gabidulin codes and the rank metric}]
If $ \mathbb{F} $ is a field, the case of the rank metric and Gabidulin codes \cite{gabidulin, new-construction} can be recovered by choosing $ \sigma \neq {\rm Id} $, and then $ \delta = 0 $ and the conjugacy class $ C(1) $, or $ \delta $ an inner derivation ($ \delta = {\rm Id} - \sigma $) and the conjugacy class $ C(0) $. This second subfamily of linearized Reed-Solomon codes has been given in \cite[Section 4]{skew-evaluation1} when $ \mathbb{F} $ is finite.

We will focus on the first case (see Subsection \ref{subsec inner derivations} for the other case). It holds that $ \mathcal{D}_1 = \sigma $, $ K = K_1 = \mathbb{F}^\sigma $ is the subfield of elements of $ \mathbb{F} $ invariant by $ \sigma $, and $ K \subseteq \mathbb{F} $ is a field extension with Galois group $ G = \langle \sigma \rangle $ (finite or infinite). Then we have that 
$$ \mathcal{D}_1^i(a) = \sigma^i(a), $$
for all $ i \in \mathbb{N} $ and all $ a \in \mathbb{F} $. Hence the linearized notion of skew Reed-Solomon codes gives the Gabidulin codes from \cite{gabidulin, new-construction} when $ \mathbb{F} $ is finite. Moreover, the corresponding metric in $ \mathbb{F}^n $ is the sum-rank metric with lengths $ (n) $ and subfields $ (K) $, which indeed is the rank metric in $ \mathbb{F}^n $ over $ K $. 

Finally, the restrictions in Theorem \ref{th range of parameters for maximum distance} for the possible parameters for which we obtain maximum distance codes are $ 1 \leq k \leq n \leq \dim_K(\mathbb{F}) $, as in \cite{gabidulin, new-construction}.

Observe that this scenario has been studied for general fields in \cite{augot-function, augot, augot-extended} when $ \sigma $ has finite order, that is, $ K \subseteq \mathbb{F} $ is a finite extension of fields with cyclic Galois group. This example gives the general case: $ \sigma $ may have finite or infinite order.
\end{example}

On the other hand, the subfamily of $ k $-dimensional linearized Reed-Solomon codes when $ \mathbb{F} $ is finite, $ \delta = 0 $ and $ 1 \leq k \leq n_i $, for $ i = 1,2, \ldots, \ell $ (in particular, $ k \ell \leq n $), has been given in \cite[Section III-C]{skew-evaluation2} under the name \textit{pasting MDS construction}. Such pasting construction is also a linearized Reed-Solomon code over any field.

It is proven in \cite[Section III-C]{skew-evaluation2} that the pasting MDS construction gives an MDS linear code where each projection over $ \mathbb{F}^{n_i} $, for $ i = 1,2, \ldots, \ell $, gives an MRD code. However, these two properties separately do not imply having maximum sum-rank distance. We leave as open problem to see if such two properties combined are equivalent to having maximum sum-rank distance.

\section{Explicit descriptions over fields} \label{sec particular cases}

In this section, we study the above presented theory over fields. Throughout the whole section, $ \mathbb{F} $ is assumed to be a field, that is, commutative. We divide the study in two cases: $ \delta $ is an inner derivation or is not an inner derivation (see \cite[Section 8.3]{cohn} for a classification on general division rings). The first is always the case if $ \sigma \neq {\rm Id} $ and includes the case $ \delta = 0 $, and the second corresponds to standard non-zero derivations. The case of finite fields falls into the first category and is treated in a separate subsection. We also give an example of linearized Reed-Solomon codes only constructable by non-inner derivations.

\subsection{Inner derivations} \label{subsec inner derivations}

In this subsection, we study \textit{inner derivations} (see \cite[Section 8.3]{cohn} or \cite[Section 3]{lam-leroy}). We will show that these are the only derivations for non-identity endomorphisms and we show that such cases can always be reduced to zero derivations. However, as we will see in Subsection \ref{subsec only with non-inner der}, there are linearized Reed-Solomon codes over fields only constructable by non-inner derivations, hence the theory in our paper cannot always be simplified to using zero derivations. 

\begin{definition} [\textbf{Inner derivations \cite{cohn, lam-leroy}}]
Given an endomorphism $ \sigma : \mathbb{F} \longrightarrow \mathbb{F} $, we say that $ \delta : \mathbb{F} \longrightarrow \mathbb{F} $ is an inner $ \sigma $-derivation if there exists $ \gamma \in \mathbb{F} $ such that 
$$ \delta(b) = \gamma (b - \sigma(b)), $$
for all $ b \in \mathbb{F} $.
\end{definition}

It is a straightforward calculation to check that $ \delta $ indeed is a $ \sigma $-derivation. Next we observe that inner derivations are the only $ \sigma $-derivations over fields whenever $ \sigma \neq {\rm Id} $. This result is also proven in \cite[Section 8.3]{cohn}.

\begin{proposition} [\textbf{\cite{cohn}}] \label{prop every derivation is inner}
Let $ \sigma : \mathbb{F} \longrightarrow \mathbb{F} $ be an endomorphism and $ \delta : \mathbb{F} \longrightarrow \mathbb{F} $ be a $ \sigma $-derivation. If $ \sigma \neq {\rm Id} $, then $ \delta $ is an inner derivation.
\end{proposition}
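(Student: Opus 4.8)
The plan is to exploit the commutativity of $ \mathbb{F} $ by applying the $ \sigma $-derivation product rule to a single product in two different orders, which will force $ \delta(b) $ to have the required shape with a scalar $ \gamma $ that is manufactured once and for all, independently of $ b $.

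First I would use the hypothesis $ \sigma \neq {\rm Id} $ to fix an element $ c \in \mathbb{F} $ with $ \sigma(c) \neq c $. Since $ \mathbb{F} $ is a field, the element $ c - \sigma(c) \in \mathbb{F}^* $ is invertible, so I can define the candidate scalar
$$ \gamma = \delta(c) (c - \sigma(c))^{-1} \in \mathbb{F}. $$
The entire point is that $ \gamma $ does not depend on the argument $ b $, which is exactly what the word ``inner'' requires.

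Next, for an arbitrary $ b \in \mathbb{F} $, I would write the product $ cb = bc $ (here is where commutativity enters, and where the field hypothesis is essential) and expand $ \delta(cb) $ and $ \delta(bc) $ using the product rule $ \delta(xy) = \sigma(x)\delta(y) + \delta(x)y $. This gives
$$ \sigma(c)\delta(b) + \delta(c) b = \sigma(b)\delta(c) + \delta(b) c. $$
Collecting the $ \delta(b) $ terms on one side yields $ (\sigma(c) - c)\delta(b) = (\sigma(b) - b)\delta(c) $. Since $ \sigma(c) - c \neq 0 $ is invertible, I can solve for $ \delta(b) $ and rearrange signs to obtain
$$ \delta(b) = \delta(c)(c - \sigma(c))^{-1} (b - \sigma(b)) = \gamma (b - \sigma(b)), $$
which holds for every $ b \in \mathbb{F} $ and hence shows that $ \delta $ is the inner derivation associated with $ \gamma $.

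There is essentially no serious obstacle here; the argument is short once the right product is compared. The only points requiring care are the existence of $ c $ with $ \sigma(c) \neq c $ (guaranteed precisely by $ \sigma \neq {\rm Id} $) and the invertibility of $ c - \sigma(c) $ (guaranteed since $ \mathbb{F} $ is a field and this element is nonzero). I would emphasize that commutativity is used in an essential way through $ cb = bc $, so this proof is particular to the commutative setting and does not extend verbatim to arbitrary division rings.
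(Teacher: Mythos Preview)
Your proof is correct and is essentially identical to the paper's own argument: pick $c$ with $\sigma(c)\neq c$, compare $\delta(bc)=\delta(cb)$ via the product rule, and solve for $\delta(b)$ to obtain $\delta(b)=\dfrac{\delta(c)}{c-\sigma(c)}(b-\sigma(b))$. There is no meaningful difference in strategy or execution.
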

\begin{proof}
Choose $ c \in \mathbb{F} $ such that $ \sigma (c) \neq c $. Since $ \mathbb{F} $ is commutative, it holds that
$$ \sigma(b)\delta(c) + \delta(b)c = \delta(bc) = \delta(cb) = \sigma(c)\delta(b) + \delta(c)b. $$
for all $ b \in \mathbb{F} $. Since $ c - \sigma(c) \neq 0 $, we deduce that
$$ \delta(b) = \left( \frac{\delta(c)}{c - \sigma(c)} \right) (b - \sigma(b)), $$
and we are done.
\end{proof}

As announced, the case of inner derivations can be reduced to that of zero derivations. The next result is given in \cite[Proposition 2.1.8]{liu-thesis} for finite fields, and is also proven in \cite[Section 8.3]{cohn}.

\begin{proposition}[\textbf{\cite{cohn, liu-thesis}}]
Let $ \sigma : \mathbb{F} \longrightarrow \mathbb{F} $ be an endomorphism, let $ \gamma \in \mathbb{F} $ and define the inner $ \sigma $-derivation $ \delta = \gamma ({\rm Id} - \sigma) $. The map
\begin{equation*}
\begin{array}{ccc}
\mathbb{F}[x; \sigma, \delta] & \longrightarrow & \mathbb{F}[x;\sigma, 0] \\
\sum_{i \in \mathbb{N}} F_i x^i & \mapsto & \sum_{i \in \mathbb{N}} F_i (x - \gamma)^i,
\end{array}
\end{equation*}
where $ F_i \in \mathbb{F} $, for all $ i \in \mathbb{N} $, is a ring isomorphism. 
\end{proposition}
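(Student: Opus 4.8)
The plan is to read the stated map as a \emph{substitution-of-variable} isomorphism between two Ore extensions and to prove it in two independent stages: first that it is a ring homomorphism, by invoking the universal property of skew polynomial rings, and then that it is bijective, by an elementary change-of-basis argument. The guiding idea is that shifting the variable by the constant $\gamma$ converts the inner-derivation commutation rule into the zero-derivation one, and that this shift is its own (signed) inverse.

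The crux, and the step I expect to be the only real obstacle, is a single commutation identity. A skew polynomial ring is the universal $\mathbb{F}$-ring generated by one element subject only to the relations $xa = \sigma(a)x + \delta(a)$, so an $\mathbb{F}$-fixing homomorphism is determined by, and exists as soon as one exhibits, an image of the variable obeying the appropriate commutation rule. Here the point of the shift is that the element $x - \gamma$ satisfies, inside $\mathbb{F}[x;\sigma,\delta]$, the relation $(x-\gamma)a = \sigma(a)(x-\gamma)$ for all $a \in \mathbb{F}$, which is exactly the defining relation of $\mathbb{F}[x;\sigma,0]$ applied to $x-\gamma$. The verification is short: expanding $(x-\gamma)a = \sigma(a)x + \delta(a) - \gamma a$ and using $\delta(a) = \gamma(a - \sigma(a))$ cancels the $\gamma a$ terms, while commutativity of $\mathbb{F}$ lets $\gamma$ and $\sigma(a)$ commute, leaving $\sigma(a)x - \gamma\sigma(a) = \sigma(a)(x-\gamma)$. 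Thus the substitution $x \mapsto x - \gamma$ intertwines the two multiplications, and getting the sign of the shift consistent with the convention $\delta = \gamma(\mathrm{Id}-\sigma)$ is the only delicate bookkeeping; everything else is formal.

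Once the commutation identity is in hand, the universal property supplies the $\mathbb{F}$-linear ring homomorphism acting by $\sum_i F_i x^i \mapsto \sum_i F_i (x-\gamma)^i$. For bijectivity I would argue on the left $\mathbb{F}$-vector space structure: since $\deg\big((x-\gamma)^i\big) = i$ with leading coefficient $1$, the family $\{(x-\gamma)^i \mid i \in \mathbb{N}\}$ differs from the standard basis $\{x^i \mid i \in \mathbb{N}\}$ by a unit-diagonal triangular change of basis, hence is itself a left $\mathbb{F}$-basis. Being left $\mathbb{F}$-linear and carrying one basis onto another, the map is a bijection; equivalently, the opposite shift of the variable furnishes an explicit two-sided inverse. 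Combining the homomorphism property with bijectivity yields the claimed ring isomorphism.
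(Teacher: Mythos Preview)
The paper does not supply its own proof of this proposition; it simply cites \cite{cohn} and \cite{liu-thesis}. So there is nothing to compare against, and I will just assess your argument on its own.

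Your overall plan---verify a single commutation identity, invoke the universal property of the Ore extension, and then argue bijectivity via a unit-triangular change of left basis---is exactly the right one, and the bijectivity step is fine. There is, however, a direction mix-up in the homomorphism step that you should straighten out.

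Your identity $(x-\gamma)a=\sigma(a)(x-\gamma)$ is computed \emph{inside} $\mathbb{F}[x;\sigma,\delta]$. What this shows is that the element $x-\gamma\in\mathbb{F}[x;\sigma,\delta]$ satisfies the defining relation of $\mathbb{F}[x;\sigma,0]$. The universal property therefore produces a ring homomorphism
\[
\mathbb{F}[x;\sigma,0]\;\longrightarrow\;\mathbb{F}[x;\sigma,\delta],\qquad x\;\longmapsto\; x-\gamma,
\]
that is, the direction \emph{opposite} to the one displayed in the proposition. To get the stated map $\mathbb{F}[x;\sigma,\delta]\to\mathbb{F}[x;\sigma,0]$ directly from the universal property you would need $(x-\gamma)a=\sigma(a)(x-\gamma)+\delta(a)$ to hold in $\mathbb{F}[x;\sigma,0]$, and a one-line check shows this fails whenever $\gamma\neq 0$ and $\sigma\neq\mathrm{Id}$: in $\mathbb{F}[x;\sigma,0]$ one has $(x-\gamma)a=\sigma(a)x-\gamma a$, while $\sigma(a)(x-\gamma)+\delta(a)=\sigma(a)x+\gamma a-2\gamma\sigma(a)$. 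In fact the displayed formula is not multiplicative as written; the correct substitution in that direction is $x\mapsto x+\gamma$.

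So your argument does establish a ring isomorphism $\mathbb{F}[x;\sigma,0]\cong\mathbb{F}[x;\sigma,\delta]$, and your triangular-basis reasoning shows it is bijective; you have simply constructed the inverse of the map as printed. Make this explicit: either record that the universal property hands you the map in the reverse direction and then pass to its inverse, or redo the commutation check for $x+\gamma$ inside $\mathbb{F}[x;\sigma,0]$ and note that the sign in the displayed formula should be corrected accordingly.
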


As expected, we have the same descriptions of centralizers and conjugacy classes for all inner derivations. The proof is straightforward. 

\begin{proposition} \label{prop properties inner derivations}
Let $ \sigma : \mathbb{F} \longrightarrow \mathbb{F} $ be an endomorphism, let $ \gamma \in \mathbb{F} $ and define the inner $ \sigma $-derivation $ \delta = \gamma ({\rm Id} - \sigma) $. For $ a \in \mathbb{F} $, the following hold:
\begin{enumerate}
\item
$ K = K_a = \mathbb{F}^\sigma = \{ b \in \mathbb{F} \mid \sigma(b) = b \} $ if $ a \neq \gamma $, and $ K_\gamma = \mathbb{F} $.
\item
$ C(a) = \{ c \in \mathbb{F} \mid c-\gamma = (a-\gamma)\sigma(b)b^{-1}, b \in \mathbb{F}^* \} $ if $ a \neq \gamma $, and $ C(\gamma) = \{ \gamma \} $. 
\end{enumerate}
\end{proposition}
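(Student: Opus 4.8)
The plan is to reduce everything to a single explicit formula for the operator $\mathcal{D}_a$ and then read off both statements by elementary algebra, exploiting commutativity of $\mathbb{F}$ throughout. First I would substitute the inner derivation $\delta(b) = \gamma(b - \sigma(b))$ into the definition $\mathcal{D}_a(b) = \sigma(b)a + \delta(b)$, obtaining
$$\mathcal{D}_a(b) = \sigma(b)a + \gamma(b - \sigma(b)) = \sigma(b)(a - \gamma) + \gamma b,$$
where the last equality uses that $\mathbb{F}$ is commutative. This single identity is the engine for both parts.

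For Item 1, I would impose the defining condition $\mathcal{D}_a(b) = ab$ of the centralizer $K_a$. Using the formula above, this rearranges to $\sigma(b)(a-\gamma) = (a-\gamma)b$. If $a \neq \gamma$, then $a - \gamma$ is a nonzero element of the field $\mathbb{F}$, hence invertible, and cancelling it leaves $\sigma(b) = b$; thus $K_a = \mathbb{F}^\sigma$. If $a = \gamma$, the condition degenerates to $0 = 0$, which holds for every $b$, so $K_\gamma = \mathbb{F}$.

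For Item 2, I would compute a general conjugate $\mathcal{D}_a(b)b^{-1}$ for $b \in \mathbb{F}^*$. Multiplying the formula on the right by $b^{-1}$ and again using commutativity gives $\mathcal{D}_a(b)b^{-1} = (a-\gamma)\sigma(b)b^{-1} + \gamma$, that is, $c - \gamma = (a - \gamma)\sigma(b)b^{-1}$ for $c = \mathcal{D}_a(b)b^{-1}$. This is precisely the claimed description of $C(a)$ when $a \neq \gamma$; when $a = \gamma$ the right-hand side vanishes for every $b$, forcing $c = \gamma$ and hence $C(\gamma) = \{\gamma\}$.

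The computations are entirely routine, so there is no genuine obstacle; the only points requiring attention are the case split at $a = \gamma$, where the factor $a - \gamma$ ceases to be invertible, and the consistent use of commutativity to move scalars past $\sigma(b)$, which is exactly where the hypothesis that $\mathbb{F}$ is a field rather than a general division ring enters.
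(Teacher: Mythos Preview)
Your argument is correct and is exactly the routine computation the paper has in mind; the paper itself omits the proof entirely, stating only that it is straightforward. Your explicit identity $\mathcal{D}_a(b) = \sigma(b)(a-\gamma) + \gamma b$ and the ensuing case split at $a=\gamma$ constitute precisely that straightforward verification.
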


We may simplify the calculation of $ \mathcal{D}_a^i(b) $ from Proposition \ref{prop iterative formula for D^i} when $ \delta = 0 $ as follows:

\begin{proposition}
If $ \delta = 0 $, then for all $ a,b \in \mathbb{F} $ and all $ i \in \mathbb{N} $, it holds that
$$ \mathcal{D}_a^{i+1}(b) = \sigma (\mathcal{D}_a^i(b))a \quad \textrm{and} \quad \mathcal{D}_a^i(b) = \sigma^i(b) N_i(a), $$
where $ N_i(a) = a \sigma(a) \cdots \sigma^{i-1}(a) $.
\end{proposition}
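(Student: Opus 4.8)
The plan is to derive both identities directly: the first by specialization of an already-established recursion, and the second by a short induction on $i$ built on top of it.

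For the first identity I would simply set $\delta = 0$ in the general iterative formula of Proposition \ref{prop iterative formula for D^i}, namely $\mathcal{D}_a^{i+1}(b) = \sigma(\mathcal{D}_a^i(b))a + \delta(\mathcal{D}_a^i(b))$. Since the derivation term vanishes identically, this collapses at once to $\mathcal{D}_a^{i+1}(b) = \sigma(\mathcal{D}_a^i(b))a$, valid for every $i \in \mathbb{N}$. No further work is needed here.

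For the closed-form expression $\mathcal{D}_a^i(b) = \sigma^i(b)N_i(a)$ I would argue by induction on $i$, using the recursion just obtained. For the base case $i = 0$, recall that $\mathcal{D}_a^0 = {\rm Id}$ and that $N_0(a)$ is an empty product, so $\sigma^0(b)N_0(a) = b \cdot 1 = b = \mathcal{D}_a^0(b)$. For the inductive step, assuming $\mathcal{D}_a^i(b) = \sigma^i(b)N_i(a)$, I would apply the recursion and use that $\sigma$ is a ring endomorphism (hence distributes over products) to obtain
\begin{equation*}
\mathcal{D}_a^{i+1}(b) = \sigma\left(\sigma^i(b)N_i(a)\right)a = \sigma^{i+1}(b)\,\sigma(a)\sigma^2(a)\cdots\sigma^i(a)\,a,
\end{equation*}
where I have written $\sigma(N_i(a)) = \sigma(a)\sigma^2(a)\cdots\sigma^i(a)$, which follows because $\sigma$ preserves products.

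The only remaining point is to recognize the right-hand side as $\sigma^{i+1}(b)N_{i+1}(a)$, and this is where commutativity of $\mathbb{F}$ enters. Since $\mathbb{F}$ is a field throughout this section, the factor $a$ appearing on the right may be moved to the front of the product $\sigma(a)\sigma^2(a)\cdots\sigma^i(a)\,a$, yielding $a\sigma(a)\cdots\sigma^i(a) = N_{i+1}(a)$, and hence $\mathcal{D}_a^{i+1}(b) = \sigma^{i+1}(b)N_{i+1}(a)$, which closes the induction. I do not expect any genuine obstacle: the argument is routine bookkeeping, the only subtle ingredient being the use of commutativity to reorder the factors defining $N_{i+1}(a)$.
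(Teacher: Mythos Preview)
Your argument is correct and is exactly the natural justification the paper leaves implicit: the proposition is stated there without proof as a direct simplification of Proposition~\ref{prop iterative formula for D^i}, and your specialization plus short induction is precisely how one fills in that omitted computation. Your remark that commutativity of $\mathbb{F}$ (assumed throughout Section~\ref{sec particular cases}) is needed to reorder $\sigma(a)\sigma^2(a)\cdots\sigma^i(a)\,a$ into $N_{i+1}(a)=a\sigma(a)\cdots\sigma^i(a)$ is on point and is the only nontrivial step.
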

 
In \cite[Section 4]{matroidal}, the equivalence between the matroid of P-independent subsets of one conjugacy class and linearly independent sets of points in the projective space $ \mathbb{P}_K(\mathbb{F}) $ is given for finite fields. We give a similar result in the language of lattices. The proof is analogous to the one that we will give for Proposition \ref{prop projective description non-inner derivations}.

\begin{proposition}  \label{prop projective description inner derivations}
Fix $ a \in \mathbb{F} $ and define in $ C(a) $ the sum of two P-closed sets $ \Omega_1, \Omega_2 \subseteq C(a) $ as $ \Omega_1 + \Omega_2 = \overline{\Omega_1 \cup \Omega_2} \subseteq C(a) $. The collection of P-closed subsets of $ C(a) $ forms a lattice with sums and intersections isomorphic to the lattice of projective subspaces of $ \mathbb{P}_K(\mathbb{F}) $ by the bijection
\begin{equation*}
\begin{array}{ccc}
\mathbb{P}_K(\mathbb{F}) & \longrightarrow & C(a) \\
 \left[ b \right] & \mapsto & (a-\gamma)\sigma(b)b^{-1} + \gamma,
\end{array}
\end{equation*}
where $ \left[ b \right] = \{ \lambda b \mid \lambda \in K^* \} \in \mathbb{P}_K(\mathbb{F}) $.
\end{proposition}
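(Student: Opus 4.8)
The plan is to realize the claimed isomorphism as the map induced on subsets by the underlying point bijection, and then to show that it carries projective subspaces exactly onto P-closed subsets, so that it is automatically a lattice isomorphism. I first dispose of the degenerate case $ a = \gamma $: here $ C(\gamma) = \{\gamma\} $ and $ K_\gamma = \mathbb{F} $ by Proposition \ref{prop properties inner derivations}, so $ \mathbb{P}_\mathbb{F}(\mathbb{F}) $ is a single point and both lattices reduce to the two-element lattice, making the statement trivial. Assume therefore $ a \neq \gamma $, so that $ K = \mathbb{F}^\sigma $. Using $ \delta = \gamma({\rm Id} - \sigma) $ and commutativity of $ \mathbb{F} $, a direct computation gives $ \mathcal{D}_a(b)b^{-1} = (a-\gamma)\sigma(b)b^{-1} + \gamma $, so the displayed map is exactly the conjugation $ b \mapsto \mathcal{D}_a(b)b^{-1} $ defining $ C(a) $, and is in particular surjective onto $ C(a) $. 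If $ b' = \lambda b $ with $ \lambda \in K^* $, then $ \sigma(b')(b')^{-1} = \sigma(b)b^{-1} $, so the map is well defined on lines $ [b] $; and since $ a - \gamma \neq 0 $, the converse implication shows it is injective on $ \mathbb{P}_K(\mathbb{F}) $. Thus I obtain a point bijection $ \bar\psi : \mathbb{P}_K(\mathbb{F}) \longrightarrow C(a) $.

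Next I would attach to every right $ K $-subspace $ W \subseteq \mathbb{F} $ the set $ S_W := \bar\psi(\mathbb{P}(W)) = \{ \mathcal{D}_a(b)b^{-1} \mid b \in W \setminus \{0\} \} $ and prove that $ S_W $ is P-closed with $ {\rm Rk}(S_W) = \dim_K(W) $. Fixing a $ K $-basis $ \alpha_1, \ldots, \alpha_r $ of $ W $ and writing $ a_i = \mathcal{D}_a(\alpha_i)\alpha_i^{-1} $, Lemma \ref{lemma linearized in one conj class} gives $ {\rm Rk}(\overline{\{a_1,\ldots,a_r\}}) = \dim_K \langle \alpha_1,\ldots,\alpha_r \rangle_K^R = r $. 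The identity to establish is $ S_W = \overline{\{a_1,\ldots,a_r\}} $. For the inclusion $ \subseteq $, any $ b \in W\setminus\{0\} $ satisfies $ \langle b, \alpha_1,\ldots,\alpha_r \rangle_K^R = W $, so by Lemma \ref{lemma linearized in one conj class} the P-closed set $ \overline{\{\mathcal{D}_a(b)b^{-1}, a_1,\ldots,a_r\}} $ again has rank $ r $; as it contains $ \overline{\{a_1,\ldots,a_r\}} $ of the same rank, the two coincide (two P-closed sets of equal rank, one contained in the other, are equal, by Corollaries \ref{cor charact P-bases} and \ref{cor any two basis same size}), whence $ \mathcal{D}_a(b)b^{-1} \in \overline{\{a_1,\ldots,a_r\}} $. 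For $ \supseteq $, every $ c \in \overline{\{a_1,\ldots,a_r\}} \subseteq C(a) $ is $ c = \mathcal{D}_a(b)b^{-1} $ for some $ b \in \mathbb{F}^* $, and the same rank computation forces $ \dim_K \langle b, \alpha_1,\ldots,\alpha_r \rangle_K^R = r $, i.e. $ b \in W $, so $ c \in S_W $.

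Conversely, every P-closed $ \Omega \subseteq C(a) $ is of this form: choosing a P-basis $ a_1,\ldots,a_r $ of $ \Omega $ and lifting it to $ a_i = \mathcal{D}_a(\alpha_i)\alpha_i^{-1} $, Corollary \ref{cor linearized in one conj class} makes $ \alpha_1,\ldots,\alpha_r $ right $ K $-linearly independent, and the computation above yields $ \Omega = \overline{\{a_1,\ldots,a_r\}} = S_W $ for $ W = \langle \alpha_1,\ldots,\alpha_r \rangle_K^R $. Hence $ W \mapsto S_W $ is a well-defined surjection from right $ K $-subspaces of $ \mathbb{F} $ onto P-closed subsets of $ C(a) $, and it is injective because $ \bar\psi $ is a bijection on points ($ S_W = S_{W'} $ forces $ \mathbb{P}(W) = \mathbb{P}(W') $, hence $ W = W' $).

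It then remains only to check that this bijection is an order isomorphism, after which it is automatically a lattice isomorphism, joins and meets being the order-theoretic supremum and infimum on each side. Since $ \bar\psi $ is a bijection of the underlying point sets, $ W_1 \subseteq W_2 $ is equivalent to $ \mathbb{P}(W_1) \subseteq \mathbb{P}(W_2) $, which is equivalent to $ S_{W_1} \subseteq S_{W_2} $, so inclusions are preserved both ways; consequently the meet $ W_1 \cap W_2 $ maps to $ S_{W_1} \cap S_{W_2} $ and the join $ W_1 + W_2 $ maps to $ \overline{S_{W_1} \cup S_{W_2}} $, exactly the lattice operations declared in the statement. The one step requiring genuine care is the identity $ S_W = \overline{\{a_1,\ldots,a_r\}} $, namely that the $ \bar\psi $-image of a projective subspace is genuinely P-closed; this is where Lemma \ref{lemma linearized in one conj class} must be combined with the equal-rank-containment principle for P-closed sets, and everything else is formal.
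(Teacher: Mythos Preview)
Your proof is correct and follows essentially the same approach as the paper: the paper does not give a separate argument for this proposition but states that the proof is analogous to that of Proposition~\ref{prop projective description non-inner derivations}, which amounts to verifying that the displayed map is a well-defined bijection on points and then invoking Lemma~\ref{lemma linearized in one conj class} and Corollary~\ref{cor linearized in one conj class} to obtain the lattice isomorphism. You have spelled out in detail what the paper leaves implicit in that last sentence, namely the verification that $S_W$ coincides with the P-closure of the image of a basis (and hence is P-closed of the right rank), and that conversely every P-closed subset of $C(a)$ arises this way; this is exactly the content of ``follows directly from'' those two results.
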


\subsection{Finite fields} \label{subsec finite fields}

We discuss in this subsection the case where $ \mathbb{F} = \mathbb{F}_{p^s} $ is the finite field with $ p^s $ elements, for a positive $ s \in \mathbb{N} $ and a prime $ p $. The description in this case of conjugacy classes is due to \cite{matroidal}. Constructions and Hamming-metric decoding algorithms of skew Reed-Solomon codes in this case are given in \cite{skew-evaluation1, skew-evaluation2}. However, the definitions of skew metrics and linearized Reed-Solomon codes, their connection with skew Reed-Solomon codes, the fact that these are maximum skew distance and linearized Reed-Solomon codes are maximum sum-rank distance are all new even in this case.

Let $ \sigma : \mathbb{F}_{p^s} \longrightarrow \mathbb{F}_{p^s} $ be given by $ \sigma(a) = a^{p^r} $, for $ a \in \mathbb{F} $ and $ r \in \mathbb{N} $. Then $ \mathbb{F}^\sigma = \mathbb{F}_q $, where $ q = p^d $, $ d = {\rm gcd}(r,s) $. Let $ m \in \mathbb{N} $ be such that $ q^m = p^s $. We start by the following result:

\begin{proposition} 
Every $ \sigma $-derivation in $ \mathbb{F}_{q^m} $ is an inner derivation. In particular, $ \delta = 0 $ is the only $ {\rm Id} $-derivation in $ \mathbb{F}_{q^m} $.
\end{proposition}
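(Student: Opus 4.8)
The plan is to split on whether $ \sigma $ equals the identity. If $ \sigma \neq {\rm Id} $, then Proposition \ref{prop every derivation is inner} applies directly---since $ \mathbb{F}_{q^m} $ is commutative---and tells us that every $ \sigma $-derivation $ \delta $ is inner. So the only genuinely new case is $ \sigma = {\rm Id} $, where a $ \sigma $-derivation is just an ordinary derivation (the defining identity $ \delta(ab) = \sigma(a)\delta(b) + \delta(a)b $ becomes the usual Leibniz rule $ \delta(ab) = a\delta(b) + \delta(a)b $). Here I would prove the stronger statement that $ \delta = 0 $, which is the inner derivation obtained by taking $ \gamma = 0 $, since $ \gamma({\rm Id} - \sigma) = \gamma({\rm Id} - {\rm Id}) = 0 $.

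First I would record that for an ordinary derivation the power rule $ \delta(a^n) = n a^{n-1}\delta(a) $ holds for all $ n \geq 1 $, by a one-line induction on $ n $ using the Leibniz rule. Then the key step exploits that $ \mathbb{F}_{q^m} = \mathbb{F}_{p^s} $ has characteristic $ p $ and that every element satisfies $ a^{p^s} = a $ (Fermat's identity for finite fields). Applying $ \delta $ to this identity and using the power rule gives
$$ \delta(a) = \delta(a^{p^s}) = p^s\, a^{p^s-1}\, \delta(a) = 0, $$
because the integer coefficient $ p^s $ is a multiple of $ p $ and hence vanishes in characteristic $ p $. As $ a \in \mathbb{F}_{q^m} $ was arbitrary, this forces $ \delta = 0 $.

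Putting the two cases together, every $ \sigma $-derivation on $ \mathbb{F}_{q^m} $ is inner, and when $ \sigma = {\rm Id} $ the unique such derivation is $ \delta = 0 $, which gives the ``in particular'' claim. I do not expect a serious obstacle: the non-identity case is handled entirely by the already-proven Proposition \ref{prop every derivation is inner}, and in the $ \sigma = {\rm Id} $ case the whole difficulty is concentrated in the single observation that the characteristic-$ p $ coefficient $ p^s $ annihilates $ \delta(a) $.
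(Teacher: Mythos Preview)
Your proposal is correct and matches the paper's approach essentially line for line: the paper also splits on whether $\sigma = {\rm Id}$, invokes Proposition~\ref{prop every derivation is inner} for $\sigma \neq {\rm Id}$, and in the identity case computes $\delta(a) = \delta(a^{q^m})$ via the Leibniz rule to obtain a coefficient divisible by $p$, hence zero. Your phrasing via the power rule $\delta(a^n) = n a^{n-1}\delta(a)$ is if anything slightly cleaner than the paper's expanded sum, but the argument is the same.
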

\begin{proof}
We have proven it in Proposition \ref{prop every derivation is inner} for $ \sigma \neq {\rm Id} $ and any field. If $ \sigma = {\rm Id} $ and $ a \in \mathbb{F}_{q^m} $, it follows from
$$ \delta(a) = \delta(a^{q^m}) = \sum_{i=0}^{q^m - 1} a^i a^{q^m - i - 1} \delta(a) = (a-a)^{q^m - 1} \delta(a) = 0. $$
\end{proof}

Therefore the study in the previous subsection applies directly to finite fields. Fix now $ \gamma \in \mathbb{F}_{q^m} $ and define $ \delta = \gamma ({\rm Id} - \sigma) $. The following result is \cite[Corollary 1]{matroidal}:

\begin{proposition} [\textbf{\cite{matroidal}}]
If $ a \in \mathbb{F}_{q^m} $ and $ a \neq \gamma $, then 
$$ K_a = \mathbb{F}_{q^m}^\sigma = \mathbb{F}_q \quad \textrm{and} \quad \# C(a) = \# \mathbb{P}_{\mathbb{F}_q}(\mathbb{F}_{q^m}) = \frac{q^m - 1}{q-1}. $$ 
In particular, there are $ q - 1 $ non-trivial conjugacy classes in $ \mathbb{F}_{q^m} $.
\end{proposition}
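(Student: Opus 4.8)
The plan is to assemble the three claims from results already established for inner derivations in Subsection \ref{subsec inner derivations}, since this proposition is simply their specialization to the finite field $ \mathbb{F} = \mathbb{F}_{q^m} $.

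First I would dispatch the identity $ K_a = \mathbb{F}_{q^m}^\sigma $ for $ a \neq \gamma $, which is exactly Item 1 of Proposition \ref{prop properties inner derivations}. It then remains to identify the fixed field $ \mathbb{F}_{q^m}^\sigma $ with $ \mathbb{F}_q $. Since $ \sigma(b) = b^{p^r} $, this fixed field is the set of roots of $ X^{p^r} - X $ lying in $ \mathbb{F}_{p^s} $, namely $ \mathbb{F}_{p^r} \cap \mathbb{F}_{p^s} = \mathbb{F}_{p^{{\rm gcd}(r,s)}} = \mathbb{F}_{p^d} = \mathbb{F}_q $, which is precisely the equality $ \mathbb{F}^\sigma = \mathbb{F}_q $ recorded at the start of the subsection.

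For the cardinality of $ C(a) $, I would invoke the bijection of Proposition \ref{prop projective description inner derivations}, which for any $ a \neq \gamma $ (so that $ K_a = \mathbb{F}_q $) identifies $ C(a) $ with the projective space $ \mathbb{P}_{\mathbb{F}_q}(\mathbb{F}_{q^m}) $; hence $ \# C(a) = \# \mathbb{P}_{\mathbb{F}_q}(\mathbb{F}_{q^m}) $. Viewing $ \mathbb{F}_{q^m} $ as an $ m $-dimensional vector space over $ \mathbb{F}_q $, there are $ q^m - 1 $ nonzero vectors, and each projective point $ [b] = \{ \lambda b \mid \lambda \in \mathbb{F}_q^* \} $ contains exactly $ q - 1 $ of them, so $ \# \mathbb{P}_{\mathbb{F}_q}(\mathbb{F}_{q^m}) = (q^m - 1)/(q-1) $. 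Equivalently, one can compute the fibers of the map $ b \mapsto (a-\gamma)\sigma(b)b^{-1} + \gamma $ directly: since $ a \neq \gamma $, two elements $ b, b' $ collide iff $ \sigma(b)b^{-1} = \sigma(b')b'^{-1} $, i.e. $ \sigma(b/b') = b/b' $, i.e. $ b/b' \in \mathbb{F}_q^* $, so every fiber has size $ q - 1 $ and the quotient count is again $ (q^m - 1)/(q-1) $.

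Finally, for the number of non-trivial conjugacy classes, I would use that conjugacy classes partition $ \mathbb{F}_{q^m} $ and that, by Proposition \ref{prop properties inner derivations}, the unique singleton (trivial) class is $ C(\gamma) = \{\gamma\} $. Every other class corresponds to some $ a \neq \gamma $ and therefore, by the previous step, has the common size $ (q^m-1)/(q-1) $. Removing $ \gamma $ leaves $ q^m - 1 $ elements partitioned into classes of this single common size, so dividing $ q^m - 1 $ by $ (q^m-1)/(q-1) $ gives exactly $ q - 1 $ non-trivial classes. I do not expect a genuine obstacle here, as the substance is carried by Propositions \ref{prop properties inner derivations} and \ref{prop projective description inner derivations}; the only point needing care is confirming that all non-trivial classes share the same cardinality, which is what makes the final division valid, and this holds because the bijection of Proposition \ref{prop projective description inner derivations} applies uniformly to every $ a \neq \gamma $.
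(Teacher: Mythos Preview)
Your argument is correct. Note, however, that the paper does not actually prove this proposition: it is stated with the attribution ``\cite[Corollary 1]{matroidal}'' and no proof is given. Your proposal therefore cannot be compared to a proof in the paper, but it does supply a self-contained derivation from the paper's own earlier results (Propositions \ref{prop properties inner derivations} and \ref{prop projective description inner derivations}), which is arguably preferable since it removes the dependence on the external reference. The only cosmetic point is that the identity $\mathbb{F}_{q^m}^\sigma = \mathbb{F}_q$ is already asserted in the text introducing Subsection \ref{subsec finite fields}, so you could simply quote it rather than rederive it.
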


We conclude by comparing the ranges of parameters for which we may construct Gabidulin codes \cite{gabidulin, new-construction} and general linearized Reed-Solomon codes in this context using the extension $ \mathbb{F}_q \subsetneqq \mathbb{F}_{q^m} $ (See also Theorem \ref{th range of parameters for maximum distance} and Remark \ref{remark range of parameters}). Both may have any dimension, so we compare only their maximum lengths.

The maximum length of a Gabidulin code in $ \mathbb{F}_{q^m} $ over the base field $ \mathbb{F}_q $ is $ n_G = m $, whereas the maximum length of a linearized Reed-Solomon code in such case is $ n_L = (q-1)m $. That is,
$$ n_L = (q-1)n_G. $$
Conversely, fix a length $ n $. The minimum extension degree $ m $ for a Gabidulin code is $ m_G = n $, whereas for a linearized Reed-Solomon code it is $ m_L = n/(q-1) $. If $ q_G = q^{m_G} $ and $ q_L = q^{m_L} $, then
$$ q_G = q_L^{q-1}. $$

\subsection{Non-inner derivations} \label{subsec only with non-inner der}

In this subsection, we study non-inner derivations when $ \sigma = {\rm Id} $. We will give analogous descriptions as those in Subsection \ref{subsec inner derivations}. We conclude by giving an example of linearized Reed-Solomon codes that can only be constructed by using non-inner derivations. This shows that we may not reduce the study of linearized Reed-Solomon codes to the case of zero derivations when considering infinite fields. 

\begin{proposition} 
Let $ \delta : \mathbb{F} \longrightarrow \mathbb{F} $ be an $ {\rm Id} $-derivation. For $ a \in \mathbb{F} $, the following hold:
\begin{enumerate}
\item
$ K = K_a = \mathbb{F}^\delta = \{ b \in \mathbb{F} \mid \delta(b) = 0 \} $.
\item
$ C(a) = \{ a + \delta(b)b^{-1} \in \mathbb{F} \mid b \in \mathbb{F}^* \} $. 
\end{enumerate}
\end{proposition}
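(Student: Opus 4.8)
The plan is to prove both items by direct substitution of $ \sigma = {\rm Id} $ into the relevant definitions, since over a commutative field with identity endomorphism the $ (\sigma,\delta) $-operator collapses to a particularly transparent form. First I would record that, by definition of the $ ({\rm Id},\delta) $-operator, for every $ a,b \in \mathbb{F} $ one has $ \mathcal{D}_a(b) = \sigma(b)a + \delta(b) = ba + \delta(b) = ab + \delta(b) $, where the last equality uses commutativity of $ \mathbb{F} $. Both claims then reduce to reading off this single identity.

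For item 1, I would unfold the definition of the centralizer: $ b \in K_a $ if and only if $ \mathcal{D}_a(b) = ab $, which by the displayed identity is equivalent to $ ab + \delta(b) = ab $, i.e. to $ \delta(b) = 0 $. Hence $ K_a = \{ b \in \mathbb{F} \mid \delta(b) = 0 \} = \mathbb{F}^\delta $. Note that this set is visibly independent of $ a $. That $ \mathbb{F}^\delta $ is indeed a division subring (here a subfield) is already guaranteed by the general lemma on centralizers recalled earlier, but it can also be checked directly from the Leibniz rule $ \delta(cd) = c\delta(d) + \delta(c)d $ together with additivity of $ \delta $.

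For item 2, I would apply the same identity inside the definition of the conjugacy class. For $ b \in \mathbb{F}^* $ one computes $ \mathcal{D}_a(b)b^{-1} = (ab + \delta(b))b^{-1} = a + \delta(b)b^{-1} $, using commutativity once more to factor out $ a $. Ranging over all $ b \in \mathbb{F}^* $ then yields exactly $ C(a) = \{ a + \delta(b)b^{-1} \mid b \in \mathbb{F}^* \} $, as claimed.

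As with the analogous statement for inner derivations (Proposition \ref{prop properties inner derivations}), there is no genuine obstacle here: each item is a one-line substitution, and the only point worth stressing is the systematic use of commutativity to rewrite $ \sigma(b)a = ba = ab $. The role of the statement is to exhibit the concrete shape of centralizers and conjugacy classes in the non-inner case, mirroring the inner-derivation description of Subsection \ref{subsec inner derivations}, rather than to require any new machinery.
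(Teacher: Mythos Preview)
Your proposal is correct and matches the paper's approach: the paper gives no explicit proof for this proposition (just as for the analogous Proposition~\ref{prop properties inner derivations} on inner derivations, where it only says ``The proof is straightforward''), and the direct substitution of $ \sigma = {\rm Id} $ into the definitions of $ K_a $ and $ C(a) $ that you carry out is exactly the intended one-line verification.
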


\begin{proposition}   \label{prop projective description non-inner derivations}
Fix $ a \in \mathbb{F} $ and let the notation be as in Proposition \ref{prop projective description inner derivations}. The collection of P-closed subsets of $ C(a) $ forms a lattice with sums and intersections isomorphic to the lattice of projective subspaces of $ \mathbb{P}_K(\mathbb{F}) $ by the bijection
\begin{equation*}
\begin{array}{ccc}
\mathbb{P}_K(\mathbb{F}) & \longrightarrow & C(a) \\
 \left[ b \right] & \mapsto & a + \delta(b)b^{-1}.
\end{array}
\end{equation*}
\end{proposition}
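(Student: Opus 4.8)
The plan is to show that the displayed map, which I denote $ \theta : \mathbb{P}_K(\mathbb{F}) \longrightarrow C(a) $, $ \theta([b]) = a + \delta(b)b^{-1} = \mathcal{D}_a(b)b^{-1} $, is a bijection, and then to upgrade it to an order isomorphism between the poset of projective subspaces of $ \mathbb{P}_K(\mathbb{F}) $ and the poset of (finitely generated) P-closed subsets of $ C(a) $; since an order isomorphism between lattices automatically preserves meets and joins, this yields the claimed lattice isomorphism. First I would check that $ \theta $ is well defined and bijective. Well-definedness holds because, for $ \lambda \in K^* = (\mathbb{F}^\delta)^* $ and $ \sigma = {\rm Id} $, one has $ \delta(\lambda b) = \lambda \delta(b) $, so $ \delta(\lambda b)(\lambda b)^{-1} = \delta(b)b^{-1} $. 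Injectivity follows from a short computation with the quotient rule for $ \delta $: if $ \delta(b)b^{-1} = \delta(b')b'^{-1} $ then $ b\delta(b') = b'\delta(b) $, which forces $ \delta(b'b^{-1}) = 0 $, i.e. $ b'b^{-1} \in K^* $ and $ [b] = [b'] $. Surjectivity is immediate from the description of $ C(a) $ in the preceding proposition.

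The core of the argument is to identify, for a finite-dimensional $ K $-subspace $ V \subseteq \mathbb{F} $ with basis $ \alpha_1, \ldots, \alpha_n $, the image $ \theta(\mathbb{P}_K(V)) $ with the P-closure $ \Phi = \overline{\{ a_1, \ldots, a_n \}} $, where $ a_i = \theta([\alpha_i]) $. For the inclusion $ \theta(\mathbb{P}_K(V)) \subseteq \Phi $, I would evaluate the minimal skew polynomial $ F_\Phi $ at $ \theta([b]) $ for $ b = \sum_i \alpha_i \lambda_i \in V $: by Lemma \ref{lemma evaluation of skew as operator}, $ F_\Phi(\theta([b])) = F_\Phi^{\mathcal{D}_a}(b)b^{-1} $, and since $ F_\Phi^{\mathcal{D}_a} $ is right $ K $-linear (Lemma \ref{lemma operator polynomials right linear}) this equals $ \sum_i F_\Phi^{\mathcal{D}_a}(\alpha_i)\lambda_i b^{-1} $; each summand vanishes because $ F_\Phi^{\mathcal{D}_a}(\alpha_i) = F_\Phi(a_i)\alpha_i = 0 $, again by Lemma \ref{lemma evaluation of skew as operator}. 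For the reverse inclusion I would use Lemma \ref{lemma linearized in one conj class}: it gives $ {\rm Rk}(\Phi) = \dim_K(\langle \alpha_1, \ldots, \alpha_n \rangle_K^R) = n $, and if some $ c = \theta([b]) \in \Phi $ had $ b \notin V $, then $ \alpha_1, \ldots, \alpha_n, b $ would be right $ K $-linearly independent, whence $ \overline{\{ a_1, \ldots, a_n, c \}} $ would have rank $ n+1 $ while being contained in $ \Phi $, a contradiction. This simultaneously shows that $ \theta(\mathbb{P}_K(V)) $ is P-closed and has rank $ \dim_K V $.

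Conversely, I would show every P-closed $ \Omega \subseteq C(a) $ is of this form: choosing a finite P-basis $ \{ a_1, \ldots, a_n \} $ of $ \Omega $ (Corollary \ref{cor charact P-bases}) and preimages $ a_i = \theta([\alpha_i]) $, Lemma \ref{lemma linearized in one conj class} turns P-independence into right $ K $-linear independence of $ \alpha_1, \ldots, \alpha_n $, so $ \Omega = \theta(\mathbb{P}_K(V)) $ with $ V = \langle \alpha_1, \ldots, \alpha_n \rangle_K^R $ by the previous paragraph. Thus $ \theta $ carries projective subspaces bijectively onto P-closed subsets of $ C(a) $. Because $ \theta $ is a bijection on points, $ V \subseteq W $ if and only if $ \theta(\mathbb{P}_K(V)) \subseteq \theta(\mathbb{P}_K(W)) $, so this correspondence is an order isomorphism in both directions.

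Finally, an order isomorphism between two lattices preserves meets and joins, so it remains only to identify these on each side. Meets are intersections on both sides and match because $ \theta $ is bijective, giving $ \theta(\mathbb{P}_K(V \cap W)) = \theta(\mathbb{P}_K(V)) \cap \theta(\mathbb{P}_K(W)) $. The join of $ \mathbb{P}_K(V) $ and $ \mathbb{P}_K(W) $ is $ \mathbb{P}_K(V+W) $, and its image under $ \theta $ is the least P-closed set containing $ \theta(\mathbb{P}_K(V)) \cup \theta(\mathbb{P}_K(W)) $, i.e. exactly the sum $ \overline{\theta(\mathbb{P}_K(V)) \cup \theta(\mathbb{P}_K(W))} $ in the sense of Proposition \ref{prop projective description inner derivations}. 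I expect the main obstacle to be the identity of the second paragraph, namely reconciling the right $ K $-linearity of operator polynomials with minimal-polynomial evaluation, together with the bookkeeping needed when $ \dim_K \mathbb{F} = \infty $: since all P-closed sets are assumed finitely generated, the correspondence is naturally one between finite-dimensional projective subspaces and P-closed subsets of $ C(a) $, and the lattice isomorphism should be read with this restriction in mind.
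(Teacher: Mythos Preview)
Your proposal is correct and follows essentially the same approach as the paper: verify that $\theta$ is a well-defined bijection via the quotient rule for $\delta$, then deduce the lattice isomorphism from Lemma \ref{lemma linearized in one conj class}. The paper's own proof is considerably terser---after the injectivity computation it simply invokes Lemma \ref{lemma linearized in one conj class} and Corollary \ref{cor linearized in one conj class} in one sentence---whereas you explicitly unpack the correspondence $\theta(\mathbb{P}_K(V)) = \overline{\{a_1,\ldots,a_n\}}$ using Lemmas \ref{lemma operator polynomials right linear} and \ref{lemma evaluation of skew as operator} for one inclusion and the rank count for the other; this is exactly the content hidden behind the paper's citation of Corollary \ref{cor linearized in one conj class}.
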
 
\begin{proof}
It is easy to see that the given map is well-defined and onto. We now prove that it is one to one. Assume that $ \delta(b)b^{-1} = \delta(c)c^{-1} $, for some $ b,c \in \mathbb{F}^* $. Then
$$ \delta(bc^{-1}) = \delta(b)c^{-1} - b\delta(c)c^{-2} = (\delta(b)c - b \delta(c)) c^{-2} = 0. $$
Thus $ bc^{-1} \in K^* $, hence $ [b] = [c] $ and the map is bijective. Finally, it follows directly from Lemma \ref{lemma linearized in one conj class} and Corollary \ref{cor linearized in one conj class} that the P-closed subsets of $ C(a) $ form a lattice with the given sums and intersections that is isomorphic to $ \mathbb{P}_K(\mathbb{F}) $ by the given bijection.
\end{proof}

We conclude with the above mentioned example of linearized Reed-Solomon codes only constructable by non-inner derivations:

\begin{example}
Let $ \mathbb{F} = \mathbb{F}_p(z) $ be the field of rational functions over $ \mathbb{F}_p $, where $ p $ is a prime. Consider $ K = \mathbb{F}_p(z^p) $. First, we have that the only endomorphism $ \sigma : \mathbb{F} \longrightarrow \mathbb{F} $ such that $ \sigma(a) = a $, for all $ a \in K $, is the identity endomorphism: Let $ \sigma $ be such an endomorphism. We have that
$$ \sigma \left( \sum_{i = 0}^d a_i z^i \right) = \sum_{i = 0}^d a_i \sigma(z)^i, $$
for all $ a_0, a_1, \ldots, a_d \in \mathbb{F}_p $, and for all $ d \in \mathbb{N} $. Let $ f(z) = \sigma(z) \in \mathbb{F}_p(z) $. Since $ \sigma(z^p) = z^p $ by hypothesis, it holds that
$$ z^p = \sigma(z^p) = \sigma(z)^p = f(z)^p = f(z^p). $$
Therefore $ f(z) = z $ and $ \sigma = {\rm Id} $.

This means that we may not use either zero or non-zero inner derivations to construct linearized Reed-Solomon codes using $ \mathbb{F} $ with centralizers $ K $. However, if $ \delta : \mathbb{F} \longrightarrow \mathbb{F} $ is the standard derivation
$$ \delta(f(z)) = \frac{d}{dz}(f(z)), $$
for all $ f(z) \in \mathbb{F} $, then $ K_a = \mathbb{F}_p(z^p) $, for all $ a \in \mathbb{F} $. Hence we conclude that we may construct linearized Reed-Solomon codes where all centralizers are $ K = \mathbb{F}_p(z^p) $ using the non-inner derivation $ \delta $. In particular, we may construct Gabidulin-type codes which are maximum rank distance for the field extension $ \mathbb{F}_p(z^p) \subsetneqq \mathbb{F}_p(z) $ using $ \delta $.
\end{example}

\section*{Acknowledgement}

The author gratefully acknowledges the support from The Danish Council for Independent Research (Grant No. DFF-7027-00053B and Grant No. DFF-5137-00076B, ``EliteForsk-Rejsestipendium'').

\appendix

\section{Alternative proof of Lemma \ref{lemma evaluation of skew as operator}} \label{app proof of th evaluation}

In this appendix, we give an alternative short proof of Lemma \ref{lemma evaluation of skew as operator}. Fix $ a \in \mathbb{F} $ and denote $ \mathcal{D} = \mathcal{D}_a $. Fix $ b \in \mathbb{F}^* $ and denote by $ N_i(b) $ the evaluation of $ x^i \in \mathbb{F}[x;\sigma,\delta] $ in $ b $, for all $ i \in \mathbb{N} $. It follows from \cite[Lemma 2.4]{lam-leroy} and \cite[Eq. (2.3)]{lam-leroy} that
\begin{equation}
N_{i+1}(b) = \sigma(N_i(b)) b + \delta(N_i(b)),
\label{eq formula for N_i}
\end{equation}
for all $ i \in \mathbb{N} $. By linearity, we only need to prove, for all $ i \in \mathbb{N} $, that
$$ N_i(\mathcal{D}(b)b^{-1}) = \mathcal{D}^i(b)b^{-1}. $$
By (\ref{eq formula for N_i}), we only need to prove, for all $ i \in \mathbb{N} $, that
\begin{equation}
\mathcal{D}^{i+1}(b)b^{-1} = \sigma(\mathcal{D}^i(b)b^{-1}) \mathcal{D}(b)b^{-1} + \delta (\mathcal{D}^i(b)b^{-1}).
\label{eq theorem for D^i}
\end{equation}
Fix $ i \in \mathbb{N} $. Expanding $ \mathcal{D}(b) $ and $ \delta(\mathcal{D}^i(b)b^{-1}) $ on the right-hand side of (\ref{eq theorem for D^i}), we see that
$$ \sigma(\mathcal{D}^i(b)b^{-1}) \mathcal{D}(b) + \delta (\mathcal{D}^i(b)b^{-1})b = \sigma (\mathcal{D}^i(b))a + \delta(\mathcal{D}^i(b)) = \mathcal{D}^{i+1}(b), $$
and we are done.

\bibliographystyle{plain}

\end{document}